\newtheorem{remark}{Remark}
\newtheorem{theorem}{Theorem}
\newtheorem{lemma}{Lemma}
\newtheorem{prop}{Proposition}
\newtheorem{coro}{Corollary}
\newtheorem{definition}{Definition}
	\newcommand{\Gc}{\mathcal{G}}
	\newcommand{\Tc}{\mathcal{T}}
\definecolor{color1}{rgb}{0.00000,0.44700,0.74100}%
\colorlet{myred}{red!80!black}%
\colorlet{myblue}{color1!80!black}%
\colorlet{green}{green!80!black}%
 \newcolumntype{L}{>{\raggedright\arraybackslash}X}
\newenvironment{proof}{{\noindent\it Proof:}}{\hfill $\blacksquare$\par}
\begin{document}

\title{{Vector Coded Caching Multiplicatively  Boosts MU-MIMO Systems  Under Practical Considerations}}
\author{Hui Zhao  and Petros Elia \vspace{-0.2cm}
      \thanks{This work was supported in part by EURECOM’s industrial members: ORANGE, BMW, SAP, iABG, Norton LifeLock; in part by ERC-PoC Project LIGHT (Grant 101101031) and EU H2030 project CONVERGE, by the French projects EEMW4FIX (ANR) and PERSEUS (PEPR-5G);  and in part by the Huawei France funded Chair towards Future Wireless Networks.
     Part of this work was presented at the International ITG 26th Workshop on Smart Antennas and 13th Conference on Systems, Communications, and Coding (WSA\&SCC 2023)~\cite{Zhao_WSA_SCC}. This paper has been accepted for publication in \emph{IEEE Transactions on Wireless Communications}.}
	\thanks{Hui Zhao and Petros Elia are with the Communication Systems Department, EURECOM, 06410 Sophia Antipolis, France (email: hui.zhao@eurecom.fr; petros.elia@eurecom.fr). 
	}
}


\maketitle

\begin{abstract}
This work presents a first comprehensive analysis of the impact of vector coded caching (VCC) in multi-user  multiple-input multiple-output (MU-MIMO) systems with multiple receive antennas and variable pathloss --- two key factors that critically influence systems with inherent MU unicasting behavior. We investigate two widely adopted precoding strategies: (i) block-diagonalization (BD) at the transmitter combined with maximal ratio combining (MRC) at the receivers, and (ii) zero-forcing (ZF) precoding. Our analysis explicitly accounts for practical considerations such as channel fading, channel state information (CSI) acquisition overhead, and fairness-oriented power allocation.

Our contributions span both analytical and simulation-based fronts. On the analytical side, we derive analytical expressions for the achievable throughput under BD-MRC and ZF, highlighting the performance benefits of equipping multi-antenna users with cache-aided interference management. Specifically, we develop a low-complexity BD-MRC optimization method that leverages matrix structure to significantly reduce the dimensionality involved in precoding computation, followed by solving the associated max-min fairness problem through an efficient one-dimensional search. In the massive MIMO regime, an asymptotic expression for the achievable throughput over Rayleigh fading channels is also derived.
Simulations validate our theoretical results, confirming that VCC delivers substantial performance gains over optimized cacheless MU-MIMO systems. For example, with 32 transmit antennas and 2 receive antennas per user, VCC yields throughput improvements exceeding 300\%. These gains are further amplified under imperfect CSI at the transmitter, where VCC’s ability to offload interference mitigation to the receivers ensures robust performance even in the face of degraded CSI quality and elevated acquisition costs.
\end{abstract}
	\begin{IEEEkeywords}
	    Coded caching, max-min fairness, precoding, power allocation, and MU-MIMO.
	\end{IEEEkeywords}

	\IEEEpeerreviewmaketitle

\section{Introduction}\label{Intro_sec}

Coded caching was first introduced in the context of the single-stream Broadcast Channel (BC) in~\cite{Ali}, where $K$ cache-aided receivers request different files from a shared content library. This approach exploits each user's ability to store a fraction $\gamma \in [0,1]$ of the library, theoretically achieving $K \gamma + 1$ degrees of freedom (DoF), often referred to as the coded caching gain \cite{9163148,7959863,Lampiris_Feedback}, and it roughly corresponds to the number of users served at a time. However, practical implementations faced challenges due to the combinatorial nature of the scheme in~\cite{Ali}, which requires each file to be divided into ${K \choose K\gamma}$ non-overlapping subfiles. As a result, finite file sizes impose a fundamental constraint, reducing the real DoF to $\Lambda \gamma +1$, where \( \Lambda \ (\Lambda \ll K)\), constrained by the file size, denotes the number of distinct cached contents --- equivalently referred to as \emph{cache states} --- that can be stored at the users. In practice, these constraints frequently limit achievable DoF to small single-digit values \cite{Yan}.  

To overcome these limitations, research has explored the integration of coded caching with multi-antenna transmission, leading to the concept of \emph{multi-antenna coded caching}. Notable related contributions include the works of Shariatpanahi et al. \cite{Shariatpanahi} and Naderializadeh et al. \cite{Naderializadeh}, as well as studies on physical-layer (PHY) multi-antenna coded caching \cite{Shariatpanahi_TIT2019}. Other investigations have examined the scalability of content delivery rates in massive multiple-input multiple-output (MIMO) and cell-free networks \cite{Ngo,Bayat}, while recent efforts have focused on cache-aided precoders for mitigating inter-stream interference \cite{Tolli_TWC} and location-aware caching for wireless extended reality applications \cite{Mahmoodi_location}. Further contributions can be found in \cite{9163148,Lampiris_Feedback,Bergel2018,Meixia_Tao_TWC2019,MohammadJavadTWCwithus2021,Antti9348098,Antti9083779} and related works.  

Despite these advancements, the early multi-antenna coded caching strategies remained constrained by file-size limitations, and as a consequence, their effective DoF remained upper-bounded by $\Lambda \gamma + Q$ \cite{Lampiris_Feedback}, where $Q$ represents the multiplexing gain enabled by multiple antennas. This limitation simply meant that the presented caching gains were dwarfed by conventional multiplexing gains, particularly in massive MIMO where $\Lambda \gamma \ll Q$, as extensively validated in field trials \cite{report16layers}.

This limitation was overcome with \emph{vector coded caching (VCC)} in \cite{Lampiris_JSAC}, whose introduction significantly enhanced the role of caching in multi-antenna systems. Unlike previous XOR-based approaches, VCC does not rely on multicasting; instead, it employs a clique-based structure on vectors, achieving a real DoF of $Q (\Lambda \gamma +1)$ under file-size constraints. This represents \emph{a multiplicative DoF gain} over the cacheless case (DoF $Q$), surpassing the additive improvement of previous XOR-based methods (real DoF $\Lambda \gamma + Q$). Notably, this performance enhancement comes with a manageable subpacketization cost. Recently, VCC, also referred to as \emph{signal-level} coded caching~\cite{Bakhshzad}, has evolved into a broad class of schemes extensively investigated from an information-theoretic perspective, typically focusing on the delivery-load--subpacketization tradeoff (e.g.,~\cite{Emanuele,Elizabath,Ting}).  Moving beyond the infinite-SNR DoF analysis, the study in~\cite{Bakhshzad} examined the delivery performance of VCC at finite SNRs over wireless fading channels through numerical optimization, incorporating user locations into the cache placement phase.
Another line of research in \cite{Zhao_VectorCC} provided a first statistical analysis of VCC under practical SNR regimes, focusing on the case of single-antenna receivers as well as symmetric Rayleigh fading channels, incorporating beamforming gains and CSI acquisition costs, thus providing early evidence of the significant potential of VCC in wireless systems, demonstrating a substantial multiplicative improvement in spectral efficiency over conventional (cacheless) multi-user (MU) multiple-input single-output (MISO) systems.

\subsection{Limitations of Existing Works and Technical Challenges}

Going beyond the preliminary study in~\cite{Zhao_VectorCC}, our work here examines for the first time critical factors pertaining to a more realistic implementation of VCC in wireless systems, including the impact of channel asymmetry due to pathloss, the emergence of multi-antenna receivers, and the need for power allocation. In particular, channel asymmetry among users is crucial due to the near-far bottleneck associated to coded caching techniques, where this bottleneck has been shown to significantly reduce spectral efficiency in finite-SNR regimes \cite{Mingyue_Ji,Tegin2020,Zhao2021_ISIT,Zhao_TWC}. Similarly, the growing adoption of multi-antenna receivers \cite{Antti_ICASSP,Antonious,Youlong_Cao} necessitates an investigation into how they enhance VCC performance, particularly under heterogeneous pathloss conditions.  
These practical considerations introduce new challenges, such as determining the optimal multiplexing gain $Q^\star$, and thus balancing the intricate tradeoff between multiplexing and beamforming gains as well as CSI overhead~\cite{Zhao_VectorCC}. Moreover, multi-antenna receiver combining introduces a complex interaction between transmit precoding and receiver processing \cite{Tehrani2025Asymmetric, Tehrani2024CacheAided, 10694609}.  

This same channel asymmetry studied here, now also raises concerns regarding user fairness, typically addressed via max-min fairness (MMF) techniques \cite{Tolli_TWC, Bakhshzad}, which optimize power allocation to maximize the minimum transmission rate. In VCC, analyzing such fairness aspects is automatically more involved, with optimizations that tend to be even more complex due to the potentially larger number of users involved (MMF leads to a non-convex, NP-hard problem \cite{Sadeghi_TWC2018}), leading to potentially higher computational complexity, thus further complicating resource allocation.  

Together, these challenges --- including pathloss variations, heterogeneous receiver antenna configurations, multiplexing-beamforming tradeoffs, and MMF constraints --- make joint optimization particularly difficult. Even with techniques like block diagonalization (BD) precoding and maximal ratio combining (MRC) \cite{Haardt_BD,caire2010multiuser,7335586,10634195}, deriving closed-form spectral efficiency expressions remains extremely challenging. These challenges are tackled here, in the presence of cache-aided interference management. 

\vspace{-0.3cm}
\subsection{Contributions and Paper Structure}
Our work captures the above crucial ingredients of fading, pathloss, fairness-based power allocation, CSI overhead, linear precoding, and receiver combining, offering a first-ever comprehensive study of VCC in practical MU-MIMO settings. Given that Video-on-Demand (VoD) traffic accounts for over 70\% of total network data consumption \cite{Ciso_forest}, efficient caching-based delivery mechanisms are essential in alleviating network congestion. By addressing these challenges, our study provides a systematic framework for enhancing spectral efficiency in modern wireless networks through VCC. Our main contributions are as follows:

\begin{enumerate}

\item We provide a rigorous analytical framework for VCC under BD-MRC, focusing on both precoding and receiver-side processing. Specifically, we leverage the idempotent and Hermitian properties of projection matrices to design a low-complexity BD precoder. Notably, for a given transmitted symbol’s precoding column, this approach allows us to replace the eigenvalue decomposition of an \(L \times L\) matrix (where \(L\) is the number of base-station (BS) antennas) with the decomposition of a matrix whose dimension is determined solely by the number of receive antennas at the target user. Since the BS typically has far more antennas than any individual user, this substantially reduces computational complexity. Furthermore, we derive a theoretical expression for the transmission rate of a specific user and subsequently optimize the power allocation across multiple symbols intended for that user to maximize the achievable transmission rate under the constraint of the user's total allocated transmission power.

\item We then analytically determine the optimal effective sum-rate (cf.  Definition \ref{sum_rate_def}) of the BD-MRC based VCC under the MMF requirement. To address the NP-hard complexity of MMF optimization, we transform it into a simple one-dimensional search problem where the objective equality has a unique root. 
Furthermore, we analytically derive the upper and lower bounds of the root, which substantially constrain the search range for determining the root numerically, making the root-finding process more efficient.

\item We subsequently focus on Rayleigh fading channels with variable pathloss for each user and analyze the delivery performance of the optimized BD-MRC based VCC in the massive MIMO regime.\footnote{We also consider the special case of single-antenna receivers under BD precoding and the MMF requirement, and derive the closed-form expression for the optimal sum-rate, which was rightfully identified as a hard problem in~\cite{caire2010multiuser}. We refer to \cite[Lemma 4.3]{EURECOM+7083} for more details.} Despite the requirement of a one-dimensional search, the simplified objective equality significantly reduces the search complexity. Furthermore, assuming an equal number of receiving antennas per user, we derive a simple closed-form expression for the optimal effective sum-rate. 
This simple expression not only allows us to bypass tedious numerical optimization but also crisply reflects the impact of the various factors under consideration.

\item Additionally, we analyze the performance of Zero-Forcing (ZF) precoding for multi-antenna receivers in VCC over Rayleigh fading channels with pathloss, and derive closed-form expressions for lower and upper bounds on the effective sum-rate, in the presence of MMF. Our numerical results validate these bounds as accurate approximations, demonstrating that for practical receiver configurations in VCC, ZF precoding optimized based on pathloss closely matches the performance of optimized BD-MRC.\footnote{This ZF precoding corresponds to the work of the conference version~\cite{Zhao_WSA_SCC}, which though constrains itself to a special case of an equal number of receiving antennas per user. Furthermore, the numerical comparisons with the optimized BD-MRC were also omitted in~\cite{Zhao_WSA_SCC}.}

\end{enumerate}

\emph{Notations:} For a positive integer $a$, we use $[a]$ to denote the set $\{1,2,\cdots,a\}$. $|\cdot|$ denotes either the cardinality of a set or the magnitude of a complex number. $\mathbb{C}$ denotes the set of complex numbers, while $\mathbb{E}\{\cdot\}$ and ${\rm Tr}\{\cdot\}$ represent the average operator and the trace operator respectively. ${\rm Diag}\{ a_1, \cdots, a_n \}$ denotes a diagonal matrix with  diagonal elements  $ a_1, \cdots, a_n$, while $\Psi \setminus \phi$ denotes the set $\Psi$ after removing the element $\phi$.  We will use $(x)^+ \triangleq \max(x,0)$. All vectors are column vectors. Furthermore, we use ${\bf A}^T$, ${\bf A}^*$ and ${\bf A}^H$ to denote the non-conjugate transpose, conjugation and conjugate transpose of the matrix ${\bf A}$ respectively. $[{\bf A}]_{\ell,\vartheta}$ denotes the $(\ell,\vartheta)$-th element in  ${\bf A}$. Also, ${\bf 0}_L \in \mathbb{C}^{L}$ denotes the vector with all zero elements, and ${\bf I}_L \in \mathbb{C}^{L \times L}$ represents the identity matrix. We additionally use~$X \sim \mathcal{Y}$ to denote that~$X$ follows the statistical distribution~$\mathcal{Y}$, and we use $\mathcal{CN}({\bf m},{\bf \Sigma})$  to denote complex Gaussian distribution with mean vector ${\bf m}$ and covariance matrix ${\bf \Sigma}$. As \(L \to \infty\), we write \(f(L) \simeq g(L)\) to indicate that \(\lim_{L \to \infty} |f(L) - g(L)| = 0\). Finally, we use $||\cdot||$ to represent the $\ell_2$-norm  operator. Moreover, the main variables used in the paper are summarized in Table~\ref{tab:notations_new}.

\emph{Paper Structure:} We introduce the system model in Section~\ref{sys_sec}. Then in Section \ref{VectorCC_Sec} we elaborate on BD-MRC based VCC, and in Section \ref{multi_ana_sec} we present its corresponding analysis. Subsequently, in Section \ref{single_ana_sec}, we mathematically analyze the performance of the ZF precoder for multi-antenna receivers. Numerical results and performance comparisons are presented in Section \ref{numerical_sec}, while Section \ref{conclude_sec} concludes the paper.

\begin{table}[!t]
  \caption{Notations of Important Variables}
  \label{tab:notations_new}
  \centering
  \footnotesize
  \setlength{\tabcolsep}{4pt}      
  \renewcommand{\arraystretch}{1.15}
  \begin{tabularx}{0.98\linewidth}{@{}lL@{}} 
    \toprule
    \textbf{Notation} & \textbf{Definition} \\
    \midrule
    $K$  & Total number of users \\
    $L$ & Number of transmit antennas \\
    $P_{\rm tot}$ &  Maximum allowable transmit power \\
    $F$  & Total information bits per file\\
    $\mathcal{F}$ & Library with each file of $F$ bits\\
    $N$ &Total number of files in the library $\mathcal{F}$\\
    $W_n$ & The $n$-th file of library $\mathcal{F}$\\
    $W_n^\mathcal{T}$ & Subfile of $W_n$ labeled by  $\Lambda\gamma$-tuple $\mathcal{T}\subseteq[\Lambda]$ \\
    $\gamma $ & Normalized cache size relative to the library content\\
    $\Lambda$ & Number of distinct cache states\\
    $Q$ & Number of users for signal decoding in a user-group of VCC\\ 
    $Q'$ & Number of users served at a time in cacheless MU-MIMO\\
    $G$  & $=\Lambda \gamma +1$, nominal gain of coded caching\\
    $\mathcal{G}$ & Effective gain over cacheless MU-MIMO\\
    $\mathcal{G}^\star$  & Effective gain with $Q'$ and $Q$ being independently optimized\\
    $\Psi$   & Set of $G$ user groups, each group with a distinct cache state    \\
    $B$ & Number of users caching the same content\\
    $\beta_{\psi,k}$ &Large-scale fading and/or pathloss to user ${\rm U}_{\psi,k}$\\
    $d_{\psi,k}$ & File index required by user ${\rm U}_{\psi,k}$\\
    $M_{\psi,k}$ & Number of receive antennas at user ${\rm U}_{\psi,k}$\\
    $J_{\psi,k}$ & Number of signal symbols simultaneously sent to user ${\rm U}_{\psi,k}$\\
    $M_\psi$ & Total number of receive antennas in user-group $\psi$\\
    $J_\psi$ & Total number of signal symbols  sent to user-group $\psi$\\
    $s_{\psi,k,q}$ & $q$-th data symbol sent to user ${\rm U}_{\psi,k}$\\
    ${\bf s}_{\psi,k}$ & Data vector sent to user ${\rm U}_{\psi,k}$\\
    ${\bf P}_{\psi,k}$ & Diagonal matrix for power allocation to user ${\rm U}_{\psi,k}$\\
    $P_{\psi,k,q}$ & Transmit power allocated to the $q$-th data symbol to user ${\rm U}_{\psi,k}$\\
    ${\bf V}_{\psi,k}$ & Precoding matrix for user ${\rm U}_{\psi,k}$\\
    ${\bf v}_{\psi,k,q}$ & $q$-th column of  ${\bf V}_{\psi,k}$\\
    ${\bf s}_\psi$  & Data vector sent to user-group $\psi$\\
    ${\bf V}_\psi$ & Precoding matrix for user-group $\psi$\\
    ${\bf H}_{\psi,k}$ & Channel matrix from transmitter to user ${\rm U}_{\psi,k}$\\
    ${\bf H}_{\psi}$ & $ \triangleq \big[ {\bf H}_{\psi,1}, \cdots, {\bf H}_{\psi,Q}  \big] $ \\
    ${\bf H}_{\psi,-k}$ & $  \triangleq  [ {\bf H}_{\psi,1}, \cdots,  {\bf H}_{\psi,k-1},  {\bf H}_{\psi,k+1}, \cdots,  {\bf H}_{\psi,Q}  ] $ \\
    ${\bf T}_{\psi,-k}$ & Projection matrix of the null-space of  ${\bf H}_{\psi,-k}^*$\\
    ${\bf R}_{\psi,k}$ & Receiver-side matrix for signal combining at user ${\rm U}_{\psi,k}$\\
    ${\bf r}_{\psi,k,q}$ & $q$-th column of  ${\bf R}_{\psi,k}$ \\
    ${\bf z}_{\psi,k}$ &  AWGN at user ${\rm U}_{\psi,k}$ and ${\bf z}_{\psi,k} \sim \mathcal{CN}({\bf 0}_{M_{\psi,k}},N_0 {\bf I}_{M_{\psi,k}})$ \\
    $\xi_{G,Q}$ & CSI costs of simultaneously serving $G Q$ users in VCC\\
    $\lambda_{\psi,k,q}$ & $q$-th largest (non-zero) eigenvalue of ${\bf H}_{\psi,k}^T {\bf T}_{\psi,-k} {\bf H}_{\psi,k}^*$\\
    ${\bf t}_{\psi,k,q}$  &Eigenvector associated with the eigenvalue $\lambda_{\psi,k,q}$\\
    $\lambda_{\psi,k}^{\min} $ & $\triangleq \min_{q \in [J_{\psi,k}]}\{\lambda_{\psi,k,q}\}$ \\
    $\lambda_{\psi,k}^{\max}$ & $\triangleq \max_{q \in [J_{\psi,k}]}\{\lambda_{\psi,k,q}\}$\\
    \bottomrule
  \end{tabularx}\vspace{-0.2 cm}
\end{table}

\section{System Model}\label{sys_sec}

We consider a single-cell downlink MU-MIMO system, where a BS equipped with $L$ antennas serves $K$ cache-aided multi-antenna users, each requesting a different file from a file library $\mathcal{F}$ of $N \ (N \ge K)$ equal-sized files. 
The BS has full access to the library $\mathcal{F}$, while each user can only cache a fraction $\gamma \in [0,1]$ of the library content, which is to be done during off-peak hours. The scheme is presented in detail in  Section \ref{JSAC_Intro}, and it directly builds on the VCC principles from~\cite{Lampiris_JSAC}, adapted here for the setting of multi-antenna receivers. Due to the aforementioned finite file-size constraints, we consider $\Lambda \ (\Lambda \leq K)$ different cache states, forcing $B = \frac{K}{\Lambda}$ users to have the same cache content (as elaborated in~\cite{Lampiris_JSAC}). Again as elaborated in~\cite{Lampiris_JSAC}, during the subsequent delivery phase, we consider $G \triangleq \Lambda \gamma +1$ user groups (each group consisting of users with the same cache contents) that are  selected for service in each transmission round, and from each group, we select $Q \in [B]$ users that will receive information during that same round.  This indicates that there are as many as  $G Q$ users served at a time. 

We use $\Psi$ to denote the $G $ groups selected for service during a specific transmission round. 
We also use ${\rm U}_{\psi,k}$ to denote the $k$-th ($k \in [Q]$) active user in group $\psi \in \Psi$. 
We further assume that each served user ${\rm U}_{\psi,k}$ is equipped with $M_{\psi,k}$ receive antennas. We use $M_\psi \triangleq \sum_{k \in [Q]} M_{\psi,k}$ to denote the total number of active receive antennas in group $\psi \in \Psi$. As a result of having multi-antenna receivers, the BS can simultaneously communicate a certain number $J_{\psi,k}$ of symbols to ${\rm U}_{\psi,k}$, thereby enhancing the throughput to ${\rm U}_{\psi,k}$, where the maximum achievable value of $J_{\psi,k}$ will naturally depend on $L$ and $M_{\psi,k}$, as well as will depend on the adopted precoding scheme. We further use 
\begin{align}
&{\bf s}_{\psi,k} \triangleq [s_{\psi,k,1}, \cdots,  s_{\psi,k,J_{\psi,k}}]^T  \in \mathbb{C}^{J_{\psi,k}}, \notag\\
&{\bf P}_{\psi,k} \triangleq {\rm Diag} \Big\{\sqrt{P_{\psi,k,1}}, \cdots, \sqrt{P_{\psi,k,J_{\psi,k}}} \Big\} \in \mathbb{C}^{J_{\psi,k}\times J_{\psi,k}}, \notag
\end{align}
to denote the data vector to ${\rm U}_{\psi,k}$ and the corresponding power allocation matrix respectively, where the independent variables $\{s_{\psi,k,q}: {q \in [J_{\psi,k}]}\}$ have zero-mean and unit-power. As we can now see, $J_\psi \triangleq \sum_{k \in [Q]} J_{\psi,k}$ corresponds to the total number of data symbols simultaneously sent to group $\psi$. We also note that each signal vector ${\bf s}_{\psi,k}$ destined for ${\rm U}_{\psi,k}$ will be  pre-processed into ${\bf V}_{\psi,k} {\bf P}_{\psi,k} {\bf s}_{\psi,k}$ by a precoding matrix 
\begin{align}
{\bf V}_{\psi,k} \triangleq [ {\bf v}_{\psi,k,1}, \cdots, {\bf v}_{\psi,k,J_{\psi,k}} ] \in \mathbb{C}^{L \times J_{\psi,k}},  \notag
\end{align}
where each unit-norm vector ${\bf v}_{\psi,k,q} \in \mathbb{C}^{L}$ precodes $s_{\psi,k,q}$. 

Let us define 
\begin{align}
&{\bf s}_\psi \triangleq [{\bf s}_{\psi,1}^T, \cdots, {\bf s}_{\psi,Q}^T   ]^T \in \mathbb{C}^{J_\psi},  \notag\\
&{\bf V}_{\psi} \triangleq \big[ {\bf V}_{\psi,1}, \cdots, {\bf V}_{\psi,Q} \big] \in \mathbb{C}^{L \times J_\psi}, \notag
\end{align}
where ${\bf s}_\psi$ denotes the vector of data destined for the $Q$ served users in group $\psi$, and where ${\bf V}_{\psi}$ represents the corresponding precoding scheme. The transmitted signal ${\bf x}_\Psi \in \mathbb{C}^{L}$ for the groups in $\Psi$ then takes the form
\begin{align}\label{signal_transmit_multi}
    {\bf x}_\Psi \!=\! \sum\nolimits_{\psi \in \Psi} \! \sum\nolimits_{k \in [Q]} \!\! {\bf V}_{\psi,k}   {\bf P}_{\psi,k} {\bf s}_{\psi,k} \!=\! \sum\nolimits_{\psi \in \Psi} \!\! {\bf V}_{\psi}   {\bf P}_{\psi} {\bf s}_{\psi}
\end{align}
where the diagonal matrix ${\bf P}_\psi \in \mathbb{C}^{J_\psi\times J_\psi}$ contains the ordered diagonal elements of matrices $\{ {\bf P}_{\psi,k}:  k \in [Q] \}$, which is responsible for allocating power to the symbols in ${\bf s}_\psi$.

\begin{remark}\label{Feature_VecterCC}
    This so-called VCC approach, simply `collapses' (by linearly combining) a carefully selected set of $|\Psi|=G$ vectors into the single vector in~\eqref{signal_transmit_multi} for one-shot transmission. Without caching, this would require $G$ sequential transmissions.
\end{remark}

Given ${\bf x}_\Psi$ in \eqref{signal_transmit_multi}, the received signal vector ${\bf y}_{\psi,k} \in \mathbb{C}^{M_{\psi,k}}$ at user ${\rm U}_{\psi,k}$ takes the form of \eqref{received_sig_Multi},  shown at the top of this page,
\begin{figure*}
\begin{align}\label{received_sig_Multi}
    {\bf y}_{\psi,k} =   {\bf H}_{\psi,k}^T  {\bf V}_{\psi,k} {\bf P}_{\psi,k} {\bf s}_{\psi,k}   + \underbrace{ {\bf H}_{\psi,k}^T \sum\nolimits_{k' \in [Q] \setminus k}    {\bf V}_{\psi,k'} {\bf P}_{\psi,k'} {\bf s}_{\psi,k'} }_{\text{intra-group interference}}
      + \underbrace{ {\bf H}_{\psi,k}^T  \sum\nolimits_{\phi \in \Psi \setminus \psi} \sum\nolimits_{\vartheta \in [Q]}  {\bf V}_{\phi,\vartheta} {\bf P}_{\phi,\vartheta} {\bf s}_{\phi,\vartheta} }_{\text{inter-group interference}} + {\bf z}_{\psi,k}
\end{align}
\hrulefill\vspace{-0.2cm}
\end{figure*}
where ${\bf z}_{\psi,k} \sim \mathcal{CN}({\bf 0}_{M_{\psi,k}},N_0 {\bf I}_{M_{\psi,k}})$ denotes the Additive White Gaussian Noise (AWGN), and where  ${\bf H}_{\psi,k} \in \mathbb{C}^{L \times M_{\psi,k} }$ denotes the channel matrix from the BS to ${\rm U}_{\psi,k}$.  
As each cache-aided user ${\rm U}_{\psi,k}$ knows --- as we detail in Section \ref{JSAC_Intro} --- the messages $\{{\bf s}_{\phi,\vartheta}: {\phi \in \Psi \setminus \psi, \vartheta \in [Q]}\}$ intended by the active users in other cache states in $\Psi$, we can conclude that the inter-group interference in \eqref{received_sig_Multi} can be removed by using the cached content in ${\rm U}_{\psi,k}$ and the composite CSI $\{{\bf H}_{\psi,k}^T {\bf V}_{\phi,\vartheta} {\bf P}_{\phi,\vartheta}: {\phi \in \Psi \setminus \psi, \vartheta \in [Q]}\}$, the cost of which we will account for in our analysis. 

Decoding at each user ${\rm U}_{\psi,k}$ will involve the removal of the known cache-aided interference, and the subsequent processing of the remaining received signal by a receiver-side matrix ${\bf R}_{\psi,k} \triangleq [ {\bf r}_{\psi,k,1}, \cdots, {\bf r}_{\psi,k,J_{\psi,k}}] \in \mathbb{C}^{  M_{\psi,k} \times J_{\psi,k} }$, which is channel-dependent and whose columns are power-normalized to 1. 
Thus, in the end, after employing cache content to remove inter-group interference, and after applying the decoding matrix ${\bf R}_{\psi,k}$, the signal vector for decoding at ${\rm U}_{\psi,k}$ takes the form
\begin{align}\label{receive_signal_multi}
    {\bf y}_{\psi,k}'    = &   {\bf R}_{\psi,k}^H {\bf H}_{\psi,k}^T  {\bf V}_{\psi,k} {\bf P}_{\psi,k} {\bf s}_{\psi,k} + {\bf z}_{\psi,k}' \notag\\
    &+ {\bf R}_{\psi,k}^H  \sum\nolimits_{k' \in [Q] \setminus k}   {\bf H}_{\psi,k}^T   {\bf V}_{\psi,k'} {\bf P}_{\psi,k'}  {\bf s}_{\psi,k'} .
\end{align}
It is easy to see that ${\bf z}_{\psi,k}' \triangleq {\bf R}_{\psi,k}^H {\bf \bf z}_{\psi,k} \in \mathbb{C}^{J_{\psi,k}} $ still follows a multi-variate complex Gaussian distribution.

\section{BD-MRC Based Vector Coded Caching}\label{VectorCC_Sec}

In this section, we will first elaborate on VCC for multi-antenna receivers, and then proceed to design the coherent BD-MRC scheme. Finally, we will present the main performance metrics for analysis in this paper.

\subsection{Signal-Level Vector Coded Caching for Finite SNR}\label{JSAC_Intro}
			
We will build on the general vector-clique structure in~\cite{Lampiris_JSAC}, which effectively super-imposes $G$ differently precoded data vectors. We will here though carefully choose the precoding schemes, while also carefully calibrating the dimensionality of our super-imposed vectors by deciding how many users to serve from or groups. This will allow us to control CSI costs and to efficiently allocate power across users efficiently, thus controlling two aspects that crucially affect the performance in practical SNR regimes \cite{Zhao_VectorCC}. 

We proceed to describe the cache placement phase and the subsequent delivery phase. For clarity, we summarize the main VCC design in \textbf{Algorithm}~\ref{alg:vcc2e}.
		
\begin{algorithm}[t]
\caption{Vector Coded Caching (VCC)}
\label{alg:vcc2e}


\Proc{Placement Phase}{}{
  Partition each file $W_n$ into $\binom{\Lambda}{\Lambda\gamma}$  non-overlapping subfiles:
  $W_n \longrightarrow \{\, W_n^{\mathcal T} : \ \mathcal T\subseteq[\Lambda],~|\mathcal T|=\Lambda\gamma \,\}$\;
 Divide \(K\) users into \(\Lambda\) groups, each with \(B = K / \Lambda\) users, indexed by \([\Lambda] \triangleq \{1, 2, \ldots, \Lambda\}\).\;
  All users in the \(g\)-th ($g \in [\Lambda]$) group store identical content:
\(\mathcal{Z}_{\mathcal{D}_g}=\{\, W_n^{\mathcal{T}} : \ g\in\mathcal{T},~\forall n\in[N] \,\}\)\;
}
\textbf{End Procedure}\;

\vspace{0.5cm}

\Proc{Delivery Phase}{}{
 Each user requests a distinct file\;
  \For{$b=1$ \KwTo $B/Q$}{
    Each encoding process serves a \emph{unique} selection of \(Q \le B\) users from each group\;
    \For{stage $t=1$ \KwTo $\binom{\Lambda}{\Lambda\gamma+1}$}{
      In each stage, choose a \emph{unique} subset $\Psi\subseteq[\Lambda]$ with $|\Psi|=\Lambda\gamma+1$\;
       Denote the requested file of user \(\mathrm{U}_{\psi,k}\) by \(W_{d_{\psi,k}}\), for \(\psi \in \Psi\) and \(k \in [Q]\)\;
      Transmit $\{\, W_{d_{\psi,k}}^{\Psi\setminus\psi} : \ \psi\in\Psi,~k\in[Q] \,\}$ \; 
      Design transmit signal ${\bf x}_\Psi$ in \eqref{signal_transmit_multi} by mapping $W_{d_{\psi,k}}^{\Psi\setminus\psi} \longrightarrow {\bf s}_{\psi,k}$ \;
      For received signal ${\bf y}_{\psi,k}$ in \eqref{received_sig_Multi}, each user $\mathrm U_{\psi,k}$ cancels inter-group interference via cached content:
      $\{\, W_n^{\Psi\setminus\phi} : \ \phi\in\Psi\setminus\psi,~\forall n\in[N] \,\}$\;
      Intra-group interference in \eqref{received_sig_Multi} is mitigated via precoding and receiver processing (see \eqref{receive_signal_multi})\;
    }
   Selected \(Q\) users per group obtain requested files \;
  }
}
\textbf{End Procedure}\;
\end{algorithm}

\subsubsection{Placement Phase}
The cache placement policy is from \cite{Lampiris_JSAC}.
The first step involves the partition of each library file $W_n$ into ${\Lambda \choose \Lambda \gamma}$ non-overlapping equally-sized subfiles
$\big\{ W_n^\mathcal{T}:  \mathcal{T} \subseteq [\Lambda], |\mathcal{T}|=\Lambda \gamma  \big\}$, each labeled by some $\Lambda\gamma$-tuple $\mathcal{T}\subseteq[\Lambda]$. As discussed in Section~\ref{Intro_sec}, the number of cache states $\Lambda$ is chosen to satisfy the file-size constraint.
Subsequently the $K$ users are \emph{arbitrarily} separated into $\Lambda$ disjoint groups $\mathcal{D}_1,\mathcal{D}_2,\dots,\mathcal{D}_\Lambda$, where the $g$-th user-group $ \mathcal{D}_g \triangleq \big\{b' \Lambda +g\big\}_{b'=0}^{B-1} \subseteq [K]$ consists of $B = \frac{K}{\Lambda}$ users\footnote{We will henceforth consider $K$ to be a multiple of $\Lambda$ for the sake of clarity of exposition, without though limiting the scope of the results. The general case can be readily handled (cf.~\cite{Lampiris_JSAC}) with a slight performance loss (cf.~\cite{Zhao_VectorCC}).}. 
All the users belonging to the same group are assigned the same cache state and thus proceed to cache \emph{identical} content. In particular, for those in the $g$-th group, this content takes the form
$\mathcal{Z}_{\mathcal{D}_g} = \big\{ W_n^{\mathcal{T}}:  \Tc \ni g,\, \forall n\in[N]  \big\}$. We clarify that this grouping as well as the entire placement phase (which takes place rarely, and over off-peak hours), are naturally done before the users' requests take place.

\subsubsection{Delivery Phase}

	This phase starts when each user ${\rm U}_{\psi,\kappa}$, $\psi \in \Psi$ and $\kappa \in [B]$, simultaneously asks for its intended file, denoted here by $W_{d_{\psi,\kappa}}$, $d_{\psi,\kappa} \in [N]$. 	The BS then selects $Q\leq B$ users from each group\footnote{It is not difficult to see that in the absence of caching, corresponding to $G=1$, this $Q$ would have played the role of the multiplexing gain.}. By doing so, the BS decides to first `encode' over the first $\Lambda Q$ users, and to repeat the encoding process $B/Q$ times\footnote{To see this, let us consider the example in Fig.~\ref{Example_VectorCC}, where $G=2$ groups (2 cache states), each with $Q=2$ active users, are selected for service in each transmission stage. The algorithm that we describe here will be first applied to the first $\Lambda Q$ users, and then, after this delivery is done, the same algorithm will apply to the remaining $\Lambda Q$ users, thus eventually satisfying all $K$ users. Also note that a small amount of additional subpacketization can easily resolve the case where $B/Q$ may not be an integer~\cite{Lampiris_JSAC}.}.
	To deliver to the $\Lambda Q$ users, the BS employs ${\Lambda \choose \Lambda \gamma+1}$ sequential transmission stages. During each such stage, the BS simultaneously serves a unique set $\Psi$ of $|\Psi| = \Lambda \gamma+1$ groups, corresponding to a total of
	$Q (\Lambda \gamma+1)$ users served at a time (i.e., per transmission stage), during which all the subfiles $\{W_{d_{\psi,k}}^{\Psi \setminus \psi }: \psi \in \Psi, k \in [Q]\}$ are sent simultaneously. After mapping each such subfile $W_{d_{\psi,k}}^{\Psi \setminus \psi }$ into a complex-valued data vector ${\bf s}_{\psi,k}$, the transmission is as shown in~\eqref{signal_transmit_multi}.  We here note that  ${\rm U}_{\psi,k}$ has cached the subfiles $\{W_{n}^{\Psi \setminus \phi}: \phi \in \Psi \setminus \psi, \forall n\in[N] \}$, thereby enabling ${\rm U}_{\psi,k}$ to cancel the inter-group interference resulting from $\{W_{d_{\phi,\vartheta}}^{\Psi \setminus \phi }: \phi \in \Psi \setminus \psi,  \vartheta \in [Q]\}$. Note that the inter-group interference cancellation also requires knowledge of the composite CSI (cf. \eqref{received_sig_Multi}). On the other hand, the intra-group interference experienced at each user ${\rm U}_{\psi,k}$, resulting from $\{W_{d_{\psi,k'}}^{\Psi \setminus \psi }:   k'\in [Q] \setminus k\}$, can be handled --- as we will do here --- with linear precoding that ‘separates’ the signals of the users from the same group.
    At the end of the ${\Lambda \choose \Lambda \gamma+1}$ transmission stages, all the $\Lambda Q$ users obtain their intended files. By repeating this process $B/Q$ times per group, across all groups, all the $K$ users obtain all their intended files. We refer to \cite{Lampiris_JSAC} for more details.

\begin{figure}[!t]
\centering
\includegraphics[width= 3.4 in]{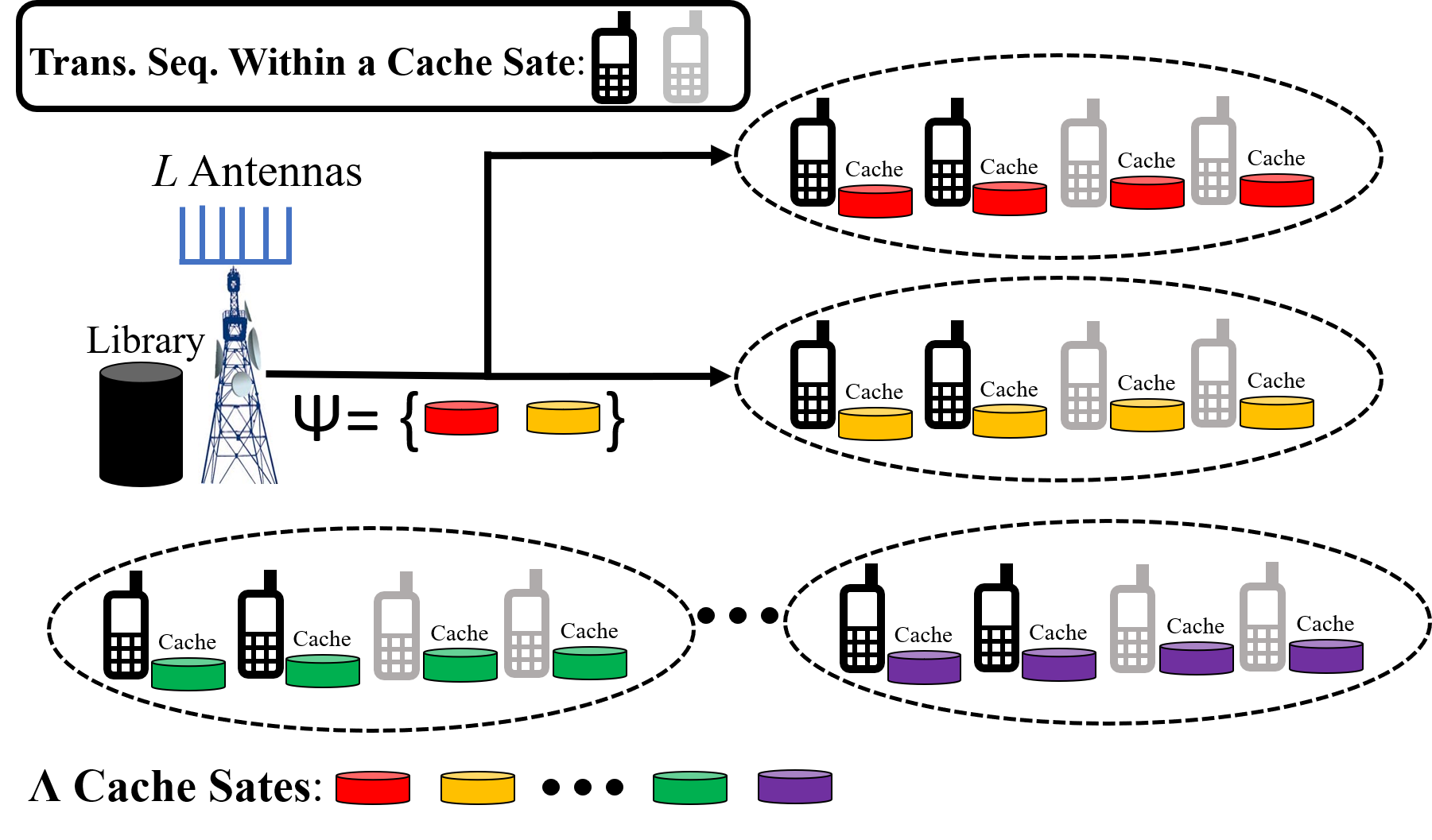}\vspace{-0.2cm}
\captionsetup{font={footnotesize}}
\caption{Illustration for vector coded caching with $G=2$, $B=4$ and $Q=2$.}
\label{Example_VectorCC} 
\end{figure}

\subsection{BD Precoding and MRC Combining}\label{BD_MRC_sub}
As pointed out in \cite{Haardt_BD}, complete channel diagonalization (e.g., ZF) at the BS is suboptimal since each multi-antenna user is able to coordinate the processing of its own receiver outputs.
We thus alternatively consider the well-known BD precoding method~\cite{Haardt_BD}. Toward this, we first define the matrix 
$ {\bf H}_{\psi,-k} \triangleq  [ {\bf H}_{\psi,1}, \cdots,  {\bf H}_{\psi,k-1},  {\bf H}_{\psi,k+1}, \cdots,  {\bf H}_{\psi,Q}  ] \in \mathbb{C}^{ L \times (M_\psi-M_{\psi,k}) }$. To cancel the intra-group interference in \eqref{received_sig_Multi}, ${\bf V}_{\psi,k}$ must lie in the  null-space of ${\bf H}_{\psi,-k}^*$ (denoted by $\mathbb{O}\{ {\bf H}_{\psi,-k}^* \}$) such that the product of ${\bf H}_{\psi,-k}^T$ and ${\bf V}_{\psi,k}$ is the zero matrix for any $k \in [Q]$. Thus, to successfully eliminate the inter-stream interference in ${\rm U}_{\psi,k}$, we have that
\begin{align}\label{J_constraint}
    J_{\psi,k} \le \min  \Big\{ \text{Rank}\big(\mathbb{O}\{ {\bf H}_{\psi,-k}^* \} \big), \ \text{Rank}\big({\bf H}_{\psi,k} \big) \Big\},
\end{align}
which, for independent Rayleigh fading channels, implies that $J_{\psi,k} \le \min \big\{L - (M_\psi - M_{\psi,k}), M_{\psi,k} \big\}$. We use $Q_{\psi,{\rm max}}$ to denote the maximum $Q$ in the user-group $\psi$, which equals $\min\big\{ \lfloor \frac{M+L-1}{M} \rfloor, \ B \big\}$ if all the served users have the same number of receive antennas $M$. As the users are equipped with different numbers of antennas, we determine $Q_{\psi,{\rm max}}$ by solving $\frac{\sum_{k\in [Q]} M_{\psi,k}}{1-L + \sum_{k
\in [Q]} M_{\psi,k}} -Q=0$. As we here consider
the simplified case of a uniform $Q$ for each group, we accept the aforementioned maximum allowable $Q$ that takes the form $Q_{\rm max} = 
 \min_{\psi \in \Psi} Q_{\psi,{\rm max}}$. 

Let ${\bf T}_{\psi,-k} \triangleq {\bf I}_L - {\bf H}_{\psi,-k}^* \big( {\bf H}_{\psi,-k}^T {\bf H}_{\psi,-k}^* \big)^{-1} {\bf H}_{\psi,-k}^T$ be the projection matrix which maps any vector ${\bf m} \in \mathbb{C}^{L}$ into the null-space of ${\bf H}_{\psi,-k}^*$.  We note that ${\bf T}_{\psi,-k}^2 = {\bf T}_{\psi,-k} = {\bf T}_{\psi,-k}^H$ according to the idempotent
and Hermitian properties of a projection matrix. The BD precoding matrix $ {\bf V}_{\psi,k} \in \mathbb{C}^{L \times J_{\psi,k}}$ dedicated to ${\rm U}_{\psi,k}$ can be written as
\begin{align}\label{V_mat_exp}
    {\bf V}_{\psi,k} = \bigg[ \frac{{\bf T}_{\psi,-k} {\bf m}_{\psi,k,1}}{||{\bf T}_{\psi,-k} {\bf m}_{\psi,k,1}||}, \cdots, \frac{{\bf T}_{\psi,-k} {\bf m}_{\psi,k,J_{\psi,k}}}{||{\bf T}_{\psi,-k} {\bf m}_{\psi,k,J_{\psi,k}}||} \bigg]
\end{align}
where the $q$-th $(q \in [J_{\psi,k}])$ column (denoted by ${\bf v}_{\psi,k,q}$) of ${\bf V}_{\psi,k}$ precodes $s_{\psi,k,q}$.
By using this BD precoding, the  signal vector at ${\rm U}_{\psi,k}$ in \eqref{receive_signal_multi} becomes
\begin{align}
     &{\bf y}_{\psi,k}' =   {\bf R}_{\psi,k}^H {\bf H}_{\psi,k}^T {\bf V}_{\psi,k} {\bf P}_{\psi,k} {\bf s}_{\psi,k}   + {\bf z}_{\psi,k}',
\end{align}
and its $q$-th element $y_{\psi,k,q}'$  of ${\bf y}_{\psi,k}'$ (i.e., which corresponds to decoding $s_{\psi,k,q}$) is of the form
\begin{align}\label{de_s_psikq_exp}
   &y_{\psi,k,q}' =  {\bf r}_{\psi,k,q}^H {\bf H}_{\psi,k}^T {\bf v}_{\psi,k,q} \sqrt{P_{\psi,k,q}} s_{\psi,k,q}  +   {z}_{\psi,k,q}' \notag\\
   &\hspace{0.8cm}+ \underbrace{ \sum\nolimits_{p \in [J_{\psi,k}] \setminus q}  {\bf r}_{\psi,k,q}^H {\bf H}_{\psi,k}^T {\bf v}_{\psi,k,p} \sqrt{P_{\psi,k,p}} s_{\psi,k,p} }_{\text{inter-stream interference}} ,
\end{align}
where this inter-stream interference results from the other symbols intended for the same user. To maximize the power of the useful signal for decoding $s_{\psi,k,q}$, we have that $|{\bf r}_{\psi,k,q}^H {\bf H}_{\psi,k}^T {\bf v}_{\psi,k,q}|^2 \le ||{\bf H}_{\psi,k}^T {\bf v}_{\psi,k,q}||^2$, where the equality is achieved only if 
\begin{align}\label{MRC_rec_exp}
{\bf r}_{\psi,k,q} = \theta {\bf H}_{\psi,k}^T {\bf v}_{\psi,k,q} = \theta {\bf H}_{\psi,k}^T \frac{{\bf T}_{\psi,-k} {\bf m}_{\psi,k,q}}{|| {\bf T}_{\psi,-k} {\bf m}_{\psi,k,q} ||}
\end{align}
for a non-zero constant $\theta$ according to the well-known Cauchy–Schwarz inequality.  Here, we set $\theta = 1/||{\bf H}_{\psi,k}^T {\bf v}_{\psi,k,q}||$ to normalize ${\bf r}_{\psi,k,q}$; this corresponds to the MRC receiver. 

In the following, we will show how to select the vectors ${\bf m}_{\psi,k,1},  \cdots, {\bf m}_{\psi,k,J_{\psi,k}}$ in \eqref{V_mat_exp} to remove the inter-stream interference in \eqref{de_s_psikq_exp}.  When ${\bf m}_{\psi,k,1}, \cdots, {\bf m}_{\psi,k,J_{\psi,k}}$ are the \emph{orthogonal} eigenvectors of ${\bf T}_{\psi,-k} {\bf H}_{\psi,k}^* {\bf H}_{\psi,k}^T {\bf T}_{\psi,-k}$, the inter-stream interference in $y_{\psi,k,q}'$ is completely eliminated, which can be easily observed by the fact that for $p\neq q$, we have that
\begin{align}\label{null_r_eq}
    &{\bf r}_{\psi,k,q}^H {\bf H}_{\psi,k}^T {\bf v}_{\psi,k,p} = \! \frac{ \theta {\bf m}_{\psi,k,q}^H {\bf T}_{\psi,-k} {\bf H}_{\psi,k}^* {\bf H}_{\psi,k}^T {\bf T}_{\psi,-k} {\bf m}_{\psi,k,p} } {|| {\bf T}_{\psi,-k} {\bf m}_{\psi,k,q} || \ || {\bf T}_{\psi,-k} {\bf m}_{\psi,k,p} ||} \notag\\
    &\hspace{1.9cm}= \frac{ \theta \lambda_{\psi,k,p} {\bf m}_{\psi,k,q}^H {\bf m}_{\psi,k,p} } {|| {\bf T}_{\psi,-k} {\bf m}_{\psi,k,q} || \ || {\bf T}_{\psi,-k} {\bf m}_{\psi,k,p} ||} =0
\end{align}
where $\lambda_{\psi,k,p}$ denotes the eigenvalue associated with the eigenvector ${\bf m}_{\psi,k,p}$ of ${\bf T}_{\psi,-k} {\bf H}_{\psi,k}^* {\bf H}_{\psi,k}^T {\bf T}_{\psi,-k}$.
Therefore, under the common Gaussian signaling and the coherent BD-MRC scheme, the signal-to-interference plus noise ratio (SINR) for decoding $s_{\psi,k,q}$ at ${\rm U}_{\psi,k}$ is of the form
\begin{align}
    &\text{SINR}_{\psi,k,q}^{\text{BD-MRC}} \notag\\
     &\hspace{0.5cm}= \frac{P_{\psi,k,q} \ \big| {\bf r}_{\psi,k,q}^H {\bf H}_{\psi,k}^T {\bf v}_{\psi,k,q} \big|^2}{N_0 + \sum\nolimits_{p \in [J_{\psi,k}] \setminus q} P_{\psi,k,p} \ \big| {\bf r}_{\psi,k,q}^H {\bf H}_{\psi,k}^T {\bf v}_{\psi,k,p} \big|^2  } \label{SINR_BDMRC_def} \\
    &\hspace{0.5cm}= \frac{P_{\psi,k,q}}{N_0} \frac{ {\bf m}_{\psi,k,q}^H {\bf T}_{\psi,-k} {\bf H}_{\psi,k}^* {\bf H}_{\psi,k}^T {\bf T}_{\psi,-k} {\bf m}_{\psi,k,q}}{||{\bf T}_{\psi,-k} {\bf m}_{\psi,k,q}||^2}\label{SINR_BDMRC_def2}
\end{align}
where we note that \eqref{SINR_BDMRC_def} is based on the received signal in \eqref{de_s_psikq_exp}, while \eqref{SINR_BDMRC_def2} follows from \eqref{MRC_rec_exp} and \eqref{null_r_eq}.

Toward providing CSI estimates to the BS and the users, we will consider the common TDD uplink-downlink pilot transmission, as this applies to MU-MIMO systems. Thus for $T$ being the coherence block period (in symbols), and for $\Theta$ being the number of symbols per receiving antenna and per block used for pilot transmission, then the effective rate for ${\rm U}_{\psi,k}$ is 
\begin{align}
    R_{\psi,k} = \xi_{G,Q} \sum\nolimits_{q =1}^{J_{\psi,k}}  \ln \Big( 1+\text{SINR}_{\psi,k,q}^{\text{BD-MRC}} \Big) \text{ nats/s/Hz}
\end{align}
where $\xi_{G,Q} \triangleq 1- \Theta \big( \sum_{\psi \in \Psi} \sum_{k \in [Q]} M_{\psi,k} \big) /T$ accounts for CSI costs \cite{caire2010multiuser}. 
Without loss of generality, we sort $\{s_{\psi,k,q}: q \in [J_{\psi,k}]\}$  in descending order according to the corresponding  $\{\text{SINR}_{\psi,k,q}^{\text{BD-MRC}}: q \in [J_{\psi,k}]\}$ at ${\rm U}_{\psi,k}$.

\subsection{Some Preliminary Results for BD-MRC}
As we see, implementing BD-MRC requires us to eigen-decompose ${\bf T}_{\psi,-k} {\bf H}_{\psi,k}^* {\bf H}_{\psi,k}^T {\bf T}_{\psi,-k}$ whose large size (comprising $L \times L$ elements) renders it computationally consuming, especially in the massive MIMO regime.   This motivates us to design a low-complexity approach to considerably ease the BD-MRC implementation in Lemma~\ref{BD_MRC_design_lem}.
 
This design is motivated by the observation (which can be shown by using basic properties of projection matrices and basic algebraic manipulations) that the matrices ${\bf H}_{\psi,k}^T {\bf T}_{\psi,-k} {\bf H}_{\psi,k}^*$ and ${\bf T}_{\psi,-k} {\bf H}_{\psi,k}^* {\bf H}_{\psi,k}^T {\bf T}_{\psi,-k}$ share the same non-zero eigenvalues. 
Before presenting Lemma~\ref{BD_MRC_design_lem}, we define 
the vectors $\{ {\bf t}_{\psi,k,q} \in \mathbb{C}^{M_{\psi,k}}:  q \in [J_{\psi,k}]\}$  as the orthogonal eigenvectors of ${\bf H}_{\psi,k}^T {\bf T}_{\psi,-k} {\bf H}_{\psi,k}^*$  $\in \mathbb{C}^{M_{\psi,k} \times M_{\psi,k}}$, where the eigenvector ${\bf t}_{\psi,k,q}$ is associated with the $q$-th largest (non-zero) eigenvalue $\lambda_{\psi,k,q}$.  

\begin{lemma}\label{BD_MRC_design_lem}
The precoding vector ${\bf v}_{\psi,k,q}$  to $s_{\psi,k,q}$ for ${\rm U}_{\psi,k}$ under the BD-MRC scheme is given the form
\begin{align}\label{v_psikq_design_exp}
    {\bf v}_{\psi,k,q} =
    \frac{{\bf T}_{\psi,-k}  {\bf H}_{\psi,k}^* {\bf t}_{\psi,k,q}}{||{\bf T}_{\psi,-k}  {\bf H}_{\psi,k}^* {\bf t}_{\psi,k,q}||},
\end{align}
and this yields a corresponding SINR for symbol $s_{\psi,k,q}$ of the form
\begin{align}\label{SINR_BD_MRC_lem}
   \text{SINR}_{\psi,k,q}^{\text{BD-MRC}} =  \frac{P_{\psi,k,q}}{N_0} \lambda_{\psi,k,q},
\end{align}
and an effective rate for ${\rm U}_{\psi,k}$ of the form
\begin{align}\label{Rate_Multiplex_Single}
    R_{\psi,k} = \xi_{G,Q} \sum\nolimits_{q=1}^{J_{\psi,k}} \ln \left( 1+ \frac{P_{\psi,k,q}}{N_0} \lambda_{\psi,k,q} \right).
\end{align}
\end{lemma}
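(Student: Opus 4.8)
The plan is to reduce the eigen-problem governing the original BD-MRC design in \eqref{V_mat_exp} and \eqref{SINR_BDMRC_def2}, which lives on the large $L \times L$ matrix ${\bf T}_{\psi,-k} {\bf H}_{\psi,k}^* {\bf H}_{\psi,k}^T {\bf T}_{\psi,-k}$, to the small $M_{\psi,k} \times M_{\psi,k}$ matrix ${\bf H}_{\psi,k}^T {\bf T}_{\psi,-k} {\bf H}_{\psi,k}^*$ whose unit eigenvectors $\{{\bf t}_{\psi,k,q}\}$ are given. The key device is an eigenvector ``lifting'': I would show that if ${\bf H}_{\psi,k}^T {\bf T}_{\psi,-k} {\bf H}_{\psi,k}^* {\bf t}_{\psi,k,q} = \lambda_{\psi,k,q} {\bf t}_{\psi,k,q}$, then the vector ${\bf T}_{\psi,-k} {\bf H}_{\psi,k}^* {\bf t}_{\psi,k,q}$ is an eigenvector of the large matrix with the same eigenvalue $\lambda_{\psi,k,q}$. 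This identifies the admissible choice ${\bf m}_{\psi,k,q} = {\bf H}_{\psi,k}^* {\bf t}_{\psi,k,q}$ in \eqref{V_mat_exp}, so the resulting precoding column collapses exactly to the claimed form \eqref{v_psikq_design_exp}.

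First I would verify the lifting directly. Applying ${\bf T}_{\psi,-k} {\bf H}_{\psi,k}^* {\bf H}_{\psi,k}^T {\bf T}_{\psi,-k}$ to ${\bf T}_{\psi,-k} {\bf H}_{\psi,k}^* {\bf t}_{\psi,k,q}$ and invoking idempotency ${\bf T}_{\psi,-k}^2 = {\bf T}_{\psi,-k}$ gives ${\bf T}_{\psi,-k} {\bf H}_{\psi,k}^* \big( {\bf H}_{\psi,k}^T {\bf T}_{\psi,-k} {\bf H}_{\psi,k}^* {\bf t}_{\psi,k,q} \big) = \lambda_{\psi,k,q} {\bf T}_{\psi,-k} {\bf H}_{\psi,k}^* {\bf t}_{\psi,k,q}$, which simultaneously confirms the shared-eigenvalue observation preceding the lemma and that the normalized lifted vector is precisely \eqref{v_psikq_design_exp}.

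Next I would compute the effective channel gain seen by the MRC receiver. Since the numerator ${\bf H}_{\psi,k}^T {\bf T}_{\psi,-k} {\bf H}_{\psi,k}^* {\bf t}_{\psi,k,q}$ equals $\lambda_{\psi,k,q} {\bf t}_{\psi,k,q}$, the useful power becomes $||{\bf H}_{\psi,k}^T {\bf v}_{\psi,k,q}||^2 = \lambda_{\psi,k,q}^2 / ||{\bf T}_{\psi,-k} {\bf H}_{\psi,k}^* {\bf t}_{\psi,k,q}||^2$, where I use that ${\bf t}_{\psi,k,q}$ has unit norm. The denominator simplifies via ${\bf T}_{\psi,-k}^H {\bf T}_{\psi,-k} = {\bf T}_{\psi,-k}^2 = {\bf T}_{\psi,-k}$ and the eigen-relation to ${\bf t}_{\psi,k,q}^H {\bf H}_{\psi,k}^T {\bf T}_{\psi,-k} {\bf H}_{\psi,k}^* {\bf t}_{\psi,k,q} = \lambda_{\psi,k,q}$, so the gain collapses to $\lambda_{\psi,k,q}$. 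Combined with the interference-elimination identity \eqref{null_r_eq} --- which I would re-verify here using the mutual orthogonality of $\{{\bf t}_{\psi,k,q}\}$ --- substitution into \eqref{SINR_BDMRC_def} yields \eqref{SINR_BD_MRC_lem}, and the effective rate \eqref{Rate_Multiplex_Single} then follows at once by inserting the SINR into the effective-rate definition.

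The main obstacle I anticipate is essentially careful bookkeeping around the projection matrix: every simplification hinges on the combined idempotent-and-Hermitian identity ${\bf T}_{\psi,-k}^H {\bf T}_{\psi,-k} = {\bf T}_{\psi,-k}^2 = {\bf T}_{\psi,-k}$, and one must track the non-conjugate transposes throughout, since the channel enters as ${\bf H}_{\psi,k}^T$ and ${\bf H}_{\psi,k}^*$ rather than the usual ${\bf H}$ and ${\bf H}^H$. A secondary point deserving one line is the genericity assumption that the nonzero eigenvalues are simple under independent Rayleigh fading, so that $\{{\bf t}_{\psi,k,q}\}$ can be taken mutually orthogonal almost surely; in any degenerate case one simply orthogonalizes within each eigenspace, which leaves \eqref{null_r_eq} and hence the final expressions unchanged.
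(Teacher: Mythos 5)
Your proposal is correct and follows essentially the same route as the paper's proof: you lift each eigenvector ${\bf t}_{\psi,k,q}$ of the small matrix ${\bf H}_{\psi,k}^T {\bf T}_{\psi,-k} {\bf H}_{\psi,k}^*$ to an eigenvector ${\bf T}_{\psi,-k}{\bf H}_{\psi,k}^*{\bf t}_{\psi,k,q}$ of the large matrix via idempotency, establish orthogonality of the lifted vectors from that of $\{{\bf t}_{\psi,k,q}\}$, and then collapse the SINR to $\tfrac{P_{\psi,k,q}}{N_0}\lambda_{\psi,k,q}$ exactly as in \eqref{SINR_BD_MRC_eq_proof}. Your added remark on handling repeated eigenvalues by orthogonalizing within eigenspaces is a harmless refinement the paper leaves implicit.
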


\begin{proof}
The proof is relegated to Appendix~\ref{Proof_BD_MRC_design_lem}.
\end{proof}

 \begin{remark}\label{power_allo_remark}
We can see that each user ${\rm U}_{\psi,k}$ can remove interference to decode up to $J_{\psi,k} \leq \text{Rank} \big( {\bf H}_{\psi,k}^T {\bf T}_{\psi,-k} {\bf H}_{\psi,k}^* \big) \leq M_{\psi,k}$ of its symbols. We can also see that under the constraint of a total power $P_{\psi,k} = \sum_{q=1}^{J_{\psi,k}} P_{\psi,k,q}$ allocated to ${\rm U}_{\psi,k}$, water-filling (cf. \cite[Ch. 10]{Goldsmith}) can maximize the effective rate of that user.
 \end{remark}

\begin{remark}\label{BD_MRC_Order}
We note that the design in Lemma~\ref{BD_MRC_design_lem} involves an eigendecomposition of a much smaller $M_{\psi,k}\!\times\! M_{\psi,k}$ matrix with a computational complexity of $\mathcal{O}(M_{\psi,k}^3)$. Assuming each user has $M$ receive antennas and $J_{\psi,k} = M$, the complexity of \eqref{v_psikq_design_exp} is $\mathcal{O}(M^3 + L^2 + ML)$, while that of ${\bf V}_{\psi,k}$ is $\mathcal{O}(M^3 + ML^2 + M^2L)$. In contrast, eigendecomposing ${\bf T}_{\psi,-k}{\bf H}_{\psi,k}^*{\bf H}_{\psi,k}^T{\bf T}_{\psi,-k}$ requires $\mathcal{O}(L^3)$, and the cost of \eqref{V_mat_exp} is $\mathcal{O}(L^3 + ML^2)$. Since $L \gg M$ in practice, the overall scaling is asymptotically $\mathcal{O}(L^2)$ for Lemma~\ref{BD_MRC_design_lem} and $\mathcal{O}(L^3)$ for \eqref{V_mat_exp}, yielding an order-$L$ reduction in computational complexity.
\end{remark}


Let us now address the balance between multiplexing and beamforming gains, by first noting that such optimal balancing requires either real-time adjustment of the number of transmitted symbols ($J_{\psi,k}$), or, when we always use the maximum allowable number of transmitted symbols for ${\rm U}_{\psi,k}$, requires water-filling to dynamically allocate different powers to our symbols.
 We will here consider the maximum allowable value of $J_{\psi,k}$ transmitted symbols to ${\rm U}_{\psi,k}$, i.e., we will consider
 \begin{align}
     J_{\psi,k} = \text{Rank} \big( {\bf H}_{\psi,k}^T {\bf T}_{\psi,-k} {\bf H}_{\psi,k}^* \big) \ \le M_{\psi,k}.
 \end{align}


The following corollary characterizes the optimal power allocation for \( {\rm U}_{\psi,k} \) by leveraging the water-filling algorithm, which distributes power across the transmitted symbols based on instantaneous channel conditions, ensuring adherence to the power constraint \( P_{\psi,k} \) and maximizing the effective rate in~\eqref{Rate_Multiplex_Single}.

\begin{coro}\label{water_filling_coro}
The optimal power allocated to symbol $s_{\psi,k,q}$ for maximizing the effective rate in \eqref{Rate_Multiplex_Single}, takes the form
\begin{align}\label{water_filling_single}
    P_{\psi,k,q} = \Big( \frac{1}{\alpha_{\psi,k}} - \frac{N_0}{\lambda_{\psi,k,q}} \Big)^+,
\end{align}
where, under a power constraint $P_{\psi,k}$, the Lagrange multiplier $\alpha_{\psi,k}$ is the solution to 
\begin{align}
    \sum\nolimits_{q=1}^{J_{\psi,k}}  \Big( \frac{1}{\alpha_{\psi,k}} - \frac{N_0}{\lambda_{\psi,k,q}} \Big)^+ = P_{\psi,k}.
\end{align}
Then, the optimal effective rate for ${\rm U}_{\psi,k}$ takes the form
\begin{align}\label{R_single_water_filling}
    R_{\psi,k}^\star (P_{\psi,k}) = \xi_{G,Q}\sum\limits_{q=1}^{J_{\psi,k}} \ln \left( 1 + \Big( \frac{\lambda_{\psi,k,q}}{N_0 \alpha_{\psi,k}} -1 \Big)^+ \right).
\end{align}
\end{coro}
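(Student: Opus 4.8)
The plan is to recognize Corollary~\ref{water_filling_coro} as the classical water-filling result applied to the per-user rate in~\eqref{Rate_Multiplex_Single}, and to derive it via the Karush--Kuhn--Tucker (KKT) conditions. Since $\xi_{G,Q}$ is a positive constant, maximizing $R_{\psi,k}$ over the power vector $\{P_{\psi,k,q}\}_{q=1}^{J_{\psi,k}}$ is equivalent to maximizing $\sum_{q=1}^{J_{\psi,k}} \ln(1 + P_{\psi,k,q}\lambda_{\psi,k,q}/N_0)$ subject to the sum-power constraint $\sum_{q} P_{\psi,k,q} = P_{\psi,k}$ together with the non-negativity constraints $P_{\psi,k,q} \ge 0$. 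First I would observe that each summand is concave in $P_{\psi,k,q}$ (because $\lambda_{\psi,k,q} > 0$ and $\ln$ is concave and increasing), so the objective is concave while the feasible set is a simplex; the problem is therefore convex, and the KKT conditions are both necessary and sufficient for global optimality.

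Next I would form the Lagrangian $\mathcal{L} = \sum_{q} \ln(1 + P_{\psi,k,q}\lambda_{\psi,k,q}/N_0) - \alpha_{\psi,k}(\sum_q P_{\psi,k,q} - P_{\psi,k}) + \sum_q \mu_q P_{\psi,k,q}$, where $\alpha_{\psi,k}$ is the multiplier on the power constraint and $\mu_q \ge 0$ enforce non-negativity. Stationarity gives $\frac{\lambda_{\psi,k,q}/N_0}{1 + P_{\psi,k,q}\lambda_{\psi,k,q}/N_0} = \alpha_{\psi,k} - \mu_q$ for each $q$. The key step is then to invoke complementary slackness $\mu_q P_{\psi,k,q} = 0$: for a symbol carrying positive power ($P_{\psi,k,q}>0$, hence $\mu_q = 0$) the stationarity equation inverts to $P_{\psi,k,q} = 1/\alpha_{\psi,k} - N_0/\lambda_{\psi,k,q}$, which is consistent with positivity exactly when $\lambda_{\psi,k,q} > N_0\alpha_{\psi,k}$; for a symbol with $P_{\psi,k,q}=0$ the slack $\mu_q \ge 0$ forces $\lambda_{\psi,k,q}/N_0 \le \alpha_{\psi,k}$. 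These two regimes merge into the single closed form $P_{\psi,k,q} = (1/\alpha_{\psi,k} - N_0/\lambda_{\psi,k,q})^+$ of~\eqref{water_filling_single}, and substituting this into the power constraint pins down $\alpha_{\psi,k}$ as the unique root of $\sum_q (1/\alpha_{\psi,k} - N_0/\lambda_{\psi,k,q})^+ = P_{\psi,k}$, uniqueness following since the left side is continuous and strictly decreasing in $\alpha_{\psi,k}$ over the active range.

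Finally I would substitute the optimal powers back into~\eqref{Rate_Multiplex_Single} to obtain~\eqref{R_single_water_filling}. For an active symbol, $1 + P_{\psi,k,q}\lambda_{\psi,k,q}/N_0 = \lambda_{\psi,k,q}/(N_0\alpha_{\psi,k})$, so its log-term equals $\ln(\lambda_{\psi,k,q}/(N_0\alpha_{\psi,k})) = \ln(1 + (\lambda_{\psi,k,q}/(N_0\alpha_{\psi,k}) - 1))$ with the inner quantity positive; for an inactive symbol ($P_{\psi,k,q}=0$) the log-term vanishes and $(\lambda_{\psi,k,q}/(N_0\alpha_{\psi,k}) - 1)^+ = 0$. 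Both cases are thus captured uniformly by $\ln(1 + (\lambda_{\psi,k,q}/(N_0\alpha_{\psi,k}) - 1)^+)$, which yields the stated optimal rate once the $\xi_{G,Q}$ prefactor is restored. I expect the main obstacle to be expository rather than technical: carefully threading the complementary-slackness case analysis so that the activation threshold $\lambda_{\psi,k,q} > N_0\alpha_{\psi,k}$ is consistent between the power formula and the rate formula, and justifying that the common $(\cdot)^+$ truncation correctly encodes this threshold in both places simultaneously.
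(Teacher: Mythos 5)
Your proof is correct and matches the paper's approach: the paper's own proof simply states that the result ``is direct from the water-filling algorithm'' (citing the standard reference), and your KKT derivation---concavity, stationarity, complementary slackness, and back-substitution---is precisely the standard argument behind that algorithm. Nothing is missing; you have merely written out in full what the paper invokes as known.
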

 
\begin{proof}
The proof is direct from the water-filling algorithm.
\end{proof}

\subsection{MMF Framework and Main Performance Metrics}
Let us now further consider MMF, where the minimum effective rate among the simultaneously served users is maximized via power allocation under a specific precoding scheme. The MMF problem for serving the users $\{{\rm U}_{\psi,k}: \psi \in \Psi, k \in [Q]\}$ can be now formulated as
\begin{align}\label{MMF_model}
\mathcal{S}_1 \begin{cases}
     &\max\nolimits_{\mathcal{P}_\Psi} \min\nolimits_{\psi \in \Psi} \min_{k \in [Q]} \ \ R_{\psi,k}  \\
    &\text{s.t. } P_t =  \sum\nolimits_{\psi \in \Psi} \sum\nolimits_{k \in [Q]} P_{\psi,k} \le P_{\rm tot}
\end{cases}
\end{align}
where $P_{\rm tot}$ denotes the maximum allowable transmit power at the BS, and $\mathcal{P}_\Psi \triangleq \{ P_{\psi,k}: \ \psi \in \Psi, k \in [Q]   \}$ denotes the set of used powers.

\begin{remark}
The water-filling algorithm in Corollary~\ref{water_filling_coro} is applied only across the streams of a given user ${\rm U}_{\psi,k}$ to maximize that user's effective rate under its allocated power $P_{\psi,k}$, i.e., the total power used to transmit the signal streams to ${\rm U}_{\psi,k}$. User fairness is ensured separately by the MMF formulation in \eqref{MMF_model}, which maximizes the minimum effective rate among the concurrently served $GQ$ users in VCC.
\end{remark}


Let us now formally define some important metrics of interest, which will be analyzed in Sections~\ref{multi_ana_sec} and \ref{single_ana_sec}.

\begin{definition}\label{sum_rate_def}
\emph{(Effective sum-rate).} For a $(G,Q)$-VCC scheme, its effective (instantaneous) sum-rate is denoted by $R(G,Q)$ and is defined as the total effective rate (after accounting for CSI costs) summed over the $GQ$ simultaneously served users. Moreover, $ R^\star(G,Q)$ denotes the effective sum-rate optimized under the MMF criterion (cf.~\eqref{MMF_model}).  
\end{definition}
\begin{definition}\label{effective_def}
\emph{(Effective gain over MU-MIMO).} For a given set of SNR, $L$ and $M_{\psi,k} \ (\forall \psi \in \Psi, k \in [Q])$  resources, and a fixed underlying precoder class, the effective gain, after accounting for CSI costs, of the $(G,Q)$-VCC scheme over the cacheless scenario (corresponding to $G = 1$, and an operating multiplexing gain $Q'$), will be denoted as $\Gc \triangleq \frac{\mathbb{E}_{h,r}\{ { R}^\star(G,Q) \}}{ \mathbb{E}_{h,r}\{ {R}^\star(1,Q')\} }$, where $\mathbb{E}_{h,r}\{ R^\star(G,Q) \}$ describes the rate  $R^\star(G,Q)$ averaged over channel fading and pathloss. We also call $\Gc^\star \triangleq \frac{\max_Q  \mathbb{E}_{h,r}\{ { R}^\star(G,Q)\} }{\max_{Q'} \mathbb{E}_{h,r}\{ {R}^\star(1,Q')\} }$ as the effective gain of optimized rates, where $Q'$ and $Q$ are also optimized, independently for the cacheless and the cache-aided scenario.
\end{definition}

\section{Performance Analysis on BD-MRC Based Vector Coded Caching}\label{multi_ana_sec}
This section begins by analyzing and solving the MMF problem under the BD-MRC based VCC framework, which is reformulated into a one-dimensional linear search problem. In addition, we derive analytical upper and lower bounds for the optimal MMF solution. We then examine two special scenarios: (i) the case where each served user is sent the same number of symbols in each transmission round, and (ii) the massive MIMO scenario with Rayleigh fading channels.

\subsection{Main Results}
Let us first recall (cf.~\eqref{null_r_eq}) that the eigenvalue set $\{ \lambda_{\psi,k,q}: {\psi \in \Psi, k \in [Q]}, q \in [J_{\psi,k}] \}$ is a function of the channel gains but not of the power allocation policy.
Thus, by using the  effective rate expression in \eqref{Rate_Multiplex_Single}, the MMF optimization problem in \eqref{MMF_model} under the BD-MRC scheme for downlink power allocation, can be transformed into
\begin{align}\label{BD_MRC_Multiplex_Opt}
\mathcal{S}_2  \begin{cases}
    &\max\limits_{\mathcal{P}_\Psi} \min\limits_{\psi \in \Psi} \min\limits_{k \in [Q]} \xi_{G,Q} \sum\limits_{q=1}^{J_{\psi,k}} \ln \left( 1+ \frac{P_{\psi,k,q}}{N_0} \lambda_{\psi,k,q} \right) \\
    & \text{s.t. } P_t = \sum\limits_{\psi \in \Psi} \sum\limits_{k \in [Q]} \sum\limits_{q \in [J_{\psi,k}]} P_{\psi,k,q} \le P_{\rm tot}.
\end{cases}
\end{align}
Obviously, $P_t$  should reach its upper-bound  $P_{\rm tot}$  when the optimum is achieved.
The following theorem solves the optimization problem in~\eqref{BD_MRC_Multiplex_Opt} to derive the optimal effective (instantaneous) sum-rate $R^\star_{\text{BD-MRC}}$ and the effective gain of optimized rates (cf. Definition~\ref{effective_def}). Before presenting the theorem, let us note that $f_{\psi,k}^{-1}(\cdot)$ will denote the inverse function of $R^\star_{\psi,k}(P_{\psi,k})$ in~\eqref{R_single_water_filling}, which is a monotonically increasing function w.r.t. $P_{\psi,k}$. We proceed with the theorem.

\begin{theorem}\label{R_BD_MRC_opt_Thm}
The effective sum-rate $R^\star_{\text{BD-MRC}}$ under optimal power allocation for the MMF problem  is the solution to 
\begin{align}\label{R_BD_MRC_num_eq}
    \sum\nolimits_{\psi \in \Psi} \sum\nolimits_{k \in [Q]} f_{\psi,k}^{-1}\Big(\frac{R_{\text{BD-MRC}}^\star}{GQ}\Big) = P_{\rm tot},
\end{align}
and the optimal effective rate for each simultaneously served user is identical, given by ${R_{\text{BD-MRC}}^\star}/(GQ)$.
The corresponding  effective gain of optimized rates under the BD-MRC scheme is given by
\begin{align}
    \mathcal{G}_{\text{BD-MRC}}^\star = \frac{\max_{Q \in [Q_{\max}]} \mathbb{E}_{h,r}\{R^\star_{\text{BD-MRC}}(G,Q)\}}{\max_{Q' \in [Q_{\max}']} \mathbb{E}_{h,r}\{R^\star_{\text{BD-MRC}}(1,Q')\}}.
\end{align}
Furthermore, this optimal sum-rate is bounded as 
\[\widetilde{R}^\star_{\text{BD-MRC}} \le R^\star_{\text{BD-MRC}} \le \widehat{R}^\star_{\text{BD-MRC}}\] 
where $\widetilde{R}^\star_{\text{BD-MRC}}$ and $\widehat{R}^\star_{\text{BD-MRC}}$ are respectively the solutions to 
\begin{align}
    &\sum_{\psi \in \Psi} \sum_{k \in [Q]} \frac{J_{\psi,k} N_0}{\lambda_{\psi,k}^{\min}} \left(\exp\left( \frac{\widetilde{R}^\star_{\text{BD-MRC}}}{\xi_{G,Q} J_{\psi,k} G Q} \right) -1 \right) = P_{\rm tot}, \\ 
    & \sum_{\psi \in \Psi} \sum_{k \in [Q]} \frac{J_{\psi,k} N_0}{\lambda_{\psi,k}^{\max} } \left(\exp\left( \frac{\widehat{R}^\star_{\text{BD-MRC}}}{\xi_{G,Q} J_{\psi,k} G Q} \right) -1 \right) = P_{\rm tot}, 
\end{align}
where $\lambda_{\psi,k}^{\min} \triangleq \min\limits_{q \in [J_{\psi,k}]}\{\lambda_{\psi,k,q}\}$ and $\lambda_{\psi,k}^{\max} \triangleq \max\limits_{q \in [J_{\psi,k}]}\{\lambda_{\psi,k,q}\}$.
\end{theorem}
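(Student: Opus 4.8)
The plan is to exploit the standard structure of max--min fairness, namely that at the optimum every served user attains an identical effective rate, and then to convert the vector power--allocation problem $\mathcal{S}_2$ into a single scalar equation through the per-user rate--power functions $f_{\psi,k}(\cdot)=R^\star_{\psi,k}(\cdot)$ established in Corollary~\ref{water_filling_coro}.

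First I would argue that at any optimum of $\mathcal{S}_2$ all $GQ$ per-user effective rates coincide. Since each $R^\star_{\psi,k}(P_{\psi,k})$ is continuous and strictly increasing in the per-user power $P_{\psi,k}$ (with water-filling applied optimally within each user by Corollary~\ref{water_filling_coro}), suppose some user's rate strictly exceeds the minimum $r^\star$. Reducing that user's power by a sufficiently small $\delta>0$ keeps its rate above $r^\star$ while freeing power $\delta$, which, redistributed among the minimum-rate users, strictly raises every minimum-rate user's rate and leaves all others above $r^\star$; this strictly increases the objective, contradicting optimality. Hence all rates equal a common value, and, denoting the optimal sum-rate by $R^\star_{\text{BD-MRC}}$, this common value is $R^\star_{\text{BD-MRC}}/(GQ)$.

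Next, because every power increment strictly raises a rate, the optimum uses all the power, $P_t=P_{\rm tot}$ (as already noted below \eqref{BD_MRC_Multiplex_Opt}). Each user must therefore be assigned exactly the power $P_{\psi,k}=f_{\psi,k}^{-1}\!\big(R^\star_{\text{BD-MRC}}/(GQ)\big)$ needed to reach the common rate; summing these and equating to $P_{\rm tot}$ yields \eqref{R_BD_MRC_num_eq}. Uniqueness of the root follows since the left-hand side is a sum of strictly increasing maps $f_{\psi,k}^{-1}$, hence strictly increasing in $R^\star_{\text{BD-MRC}}$ and sweeping from $0$ to $\infty$. The effective-gain formula is then immediate from Definition~\ref{effective_def} specialized to the BD-MRC precoder class.

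Finally, for the bounds I would replace the genuine water-filling rate by tractable surrogates built from the extreme eigenvalues $\lambda_{\psi,k}^{\min}$ and $\lambda_{\psi,k}^{\max}$. The key fact is that the water-filling rate $R^\star_{\psi,k}$ is non-decreasing in each eigenvalue $\lambda_{\psi,k,q}$, and that when all eigenvalues are equal water-filling collapses to uniform allocation. Replacing every $\lambda_{\psi,k,q}$ by $\lambda_{\psi,k}^{\min}$ (resp.\ $\lambda_{\psi,k}^{\max}$) thus lower- (resp.\ upper-) bounds $f_{\psi,k}(P)$ by $\xi_{G,Q}J_{\psi,k}\ln\!\big(1+\tfrac{P\lambda_{\psi,k}^{\min}}{J_{\psi,k}N_0}\big)$ (resp.\ with $\lambda_{\psi,k}^{\max}$), which inverts in closed form. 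Since $f\ge g$ pointwise for increasing maps gives $f^{-1}\le g^{-1}$, these rate bounds become the displayed power expressions, and applying the same monotone-translation to equation \eqref{R_BD_MRC_num_eq} delivers $\widetilde R^\star_{\text{BD-MRC}}\le R^\star_{\text{BD-MRC}}\le \widehat R^\star_{\text{BD-MRC}}$. The step needing most care is keeping every inequality direction consistent while passing from rate bounds to power bounds to sum-rate bounds; the eigenvalue-monotonicity of the water-filling rate is the one substantive lemma underlying all of them.
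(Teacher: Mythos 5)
Your proposal is correct and follows essentially the same route as the paper's proof: equal per-user rates at the MMF optimum via a power-exchange contradiction (relying on the strict monotonicity of the water-filling rate $R^\star_{\psi,k}(P_{\psi,k})$ from Corollary~\ref{water_filling_coro}), full use of $P_{\rm tot}$, summation of the inverse rate--power maps to get \eqref{R_BD_MRC_num_eq}, and bounds obtained by replacing all eigenvalues with $\lambda_{\psi,k}^{\min}$ or $\lambda_{\psi,k}^{\max}$ so that water-filling collapses to uniform allocation. Your handling of the exchange argument (redistributing power over \emph{all} minimum-rate users) and the explicit inversion of the pointwise rate bounds are, if anything, slightly more careful than the paper's own wording, but they are the same proof in substance.
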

\begin{proof}
As the power allocation among the symbols intended by ${\rm U}_{\psi,k}$ does not affect the power allocation to other served users, the effective rate for ${\rm U}_{\psi,k}$ must reach its optimal bound under the power constraint $P_{\psi,k}^\star$ (optimal power allocated to ${\rm U}_{\psi,k}$ for MMF), which has been solved in Corollary \ref{water_filling_coro}.
As $R_{\psi,k}^\star (P_{\psi,k})$ in \eqref{R_single_water_filling} is a monotonically increasing function w.r.t. $P_{\psi,k}$, we can conclude that the simultaneously served users must have the same effective rate (equaling $R_{\text{BD-MRC}}^\star/G/Q$) under the optimal power allocation in \eqref{BD_MRC_Multiplex_Opt}. To see this, simply consider a contradictory case where, when the optimum for  \eqref{BD_MRC_Multiplex_Opt} is achieved,
if there exists one user whose effective rate is higher than the smallest rate, this user can ``borrow" some power to the user with the smallest rate until their rates are the same, without affecting the rates for other users, which enhances the smallest effective rates, and which is contradictory to the optimal power allocation assumption.   Considering $\sum_{\psi\in \Psi} \sum_{k \in [Q]} P_{\psi,k}^\star = P_{\rm tot}$ and the inverse function of $R^\star_{\psi,k}(P_{\psi,k})$, we can obtain \eqref{R_BD_MRC_num_eq}. 

To derive the lower-bound for $R_{\psi,k}$, considering \eqref{Rate_Multiplex_Single}, we first have that 
\begin{align}
  R_{\psi,k} 
  &\ge   \xi_{G,Q} \sum\nolimits_{q=1}^{J_{\psi,k}} \ln \left( 1+ \frac{P_{\psi,k,q}}{N_0} \lambda_{\psi,k}^{\min} \right)
\end{align}
which is then substituted into~\eqref{BD_MRC_Multiplex_Opt} as the objective function. Using similar analysis as that leading to \eqref{R_BD_MRC_num_eq}, we can easily obtain the identity for $\widetilde{R}^\star_{\text{BD-MRC}}$. 
The upper-bound $\widehat{R}^\star_{\text{BD-MRC}}$ follows the same procedure as $\widetilde{R}^\star_{\text{BD-MRC}}$, with a difference being that upper bounding $R_{\psi,k}$ now uses $ \lambda_{\psi,k}^{\max}$.
\end{proof}

As shown in Fig.~\ref{Sum_Rate_fig}, the numerical results under different system parameter settings exhibit an excellent match with the simulation results, which confirms the accuracy of Theorem~\ref{R_BD_MRC_opt_Thm}.

\begin{remark}
We quickly ask the reader to note that the derived sum-rate bounds $\widetilde{R}^\star_{\text{BD-MRC}}$ and $\widehat{R}^\star_{\text{BD-MRC}}$ considerably facilitate the use of the simple one-dimensional binary search to numerically establish $R^\star_{\text{BD-MRC}}$ in \eqref{R_BD_MRC_num_eq}.    
\end{remark}

\subsection{Special Case I: Symmetric Case }
We also have the following, which considers the commonly-assumed symmetric case where each scheduled user is sent an equal number of symbols.

\begin{coro}\label{upper_lower_sim_coro_BD_MRC}
In the symmetric case where $J_{\psi,k}= J$ for any $\psi \in \Psi$ and $k \in [Q]$, the lower and upper bounds to the optimal sum-rate $R^\star_{\text{BD-MRC}}$ take the forms
\begin{align}
    &\widetilde{R}^\star_{\text{BD-MRC}} \!\triangleq\!  \xi_{G,Q} G Q J \ln\!\Bigg( \!  1 \!+\! \frac{P_{\rm tot} /(N_0 J) }{ \sum\limits_{\psi \in \Psi} \sum\limits_{k \in [Q]}   (\lambda_{\psi,k}^{\min})^{-1} } \! \Bigg), \\
    &\widehat{R}^\star_{\text{BD-MRC}} \!\triangleq\!  \xi_{G,Q} G Q J \ln\!\Bigg(\! 1 \!+\! \frac{P_{\rm tot} /(N_0 J)}{ \sum\limits_{\psi \in \Psi} \! \sum\limits_{k \in [Q]} (\lambda_{\psi,k}^{\max})^{-1} } \! \Bigg).
\end{align}
\end{coro}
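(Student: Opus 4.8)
The plan is to obtain both closed forms directly from the two implicit bound equations already established in Theorem~\ref{R_BD_MRC_opt_Thm}, exploiting the simplification that arises when $J_{\psi,k}=J$ is constant across all served users. The key observation is that in the general lower-bound identity, the factor $\exp\big( \widetilde{R}^\star_{\text{BD-MRC}} / (\xi_{G,Q} J_{\psi,k} G Q) \big)$ depends on the summation indices $(\psi,k)$ \emph{only} through $J_{\psi,k}$. Hence, once $J_{\psi,k}$ is replaced by the common value $J$, this exponential becomes independent of $(\psi,k)$ and can be pulled outside the double sum.

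Concretely, I would first substitute $J_{\psi,k}=J$ into the lower-bound equation of Theorem~\ref{R_BD_MRC_opt_Thm} and factor the now-constant term out of the sum, giving
\begin{align}
    J N_0 \bigg( \exp\Big( \frac{\widetilde{R}^\star_{\text{BD-MRC}}}{\xi_{G,Q} J G Q} \Big) - 1 \bigg) \sum_{\psi \in \Psi} \sum_{k \in [Q]} \frac{1}{\lambda_{\psi,k}^{\min}} = P_{\rm tot}. \notag
\end{align}
I would then solve this algebraically by isolating the exponential-minus-one term, dividing by $J N_0 \sum_{\psi,k} (\lambda_{\psi,k}^{\min})^{-1}$, taking the natural logarithm of both sides, and finally multiplying through by $\xi_{G,Q} J G Q$. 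This yields exactly the claimed closed form for $\widetilde{R}^\star_{\text{BD-MRC}}$, with the argument of the logarithm equal to $1 + \big( P_{\rm tot}/(N_0 J) \big) / \sum_{\psi,k}(\lambda_{\psi,k}^{\min})^{-1}$.

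The upper bound $\widehat{R}^\star_{\text{BD-MRC}}$ follows by the identical chain of manipulations applied to the upper-bound equation of Theorem~\ref{R_BD_MRC_opt_Thm}, with $\lambda_{\psi,k}^{\min}$ replaced by $\lambda_{\psi,k}^{\max}$ throughout. Since $\lambda_{\psi,k}^{\min} \le \lambda_{\psi,k}^{\max}$, the reciprocal sum in the upper-bound denominator is no larger, which is consistent with $\widehat{R}^\star_{\text{BD-MRC}} \ge \widetilde{R}^\star_{\text{BD-MRC}}$ as required by Theorem~\ref{R_BD_MRC_opt_Thm}.

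I do not anticipate a genuine obstacle here: the corollary is a pure algebraic specialization, and the entire content reduces to the single structural remark that a constant $J_{\psi,k}\equiv J$ decouples the exponential from the summation, thereby converting the implicit root-finding equations into explicit logarithmic expressions. The only points worth verifying are that the exponential factor is indeed common to every summand (immediate from $J_{\psi,k}\equiv J$) and that the reciprocal-eigenvalue sum is a strictly positive, finite quantity, so that the logarithm is well defined and its argument exceeds unity, guaranteeing positivity of both bounds.
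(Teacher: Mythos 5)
Your proposal is correct and follows exactly the paper's own route: the paper's proof simply states that the result is ``direct after setting $J_{\psi,k}=J$'' in Theorem~\ref{R_BD_MRC_opt_Thm}, and your explicit algebra --- factoring the now-constant exponential out of the double sum and solving for the rate --- is precisely that direct specialization, carried out in full. The consistency check $\widehat{R}^\star_{\text{BD-MRC}} \ge \widetilde{R}^\star_{\text{BD-MRC}}$ via the reciprocal-eigenvalue comparison is a correct (if optional) addition.
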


\begin{proof}
The derivation of $\widetilde{R}^*_{\text{BD-MRC}}$ and  $ \widehat{R}^*_{\text{BD-MRC}}$ is direct after setting $J_{\psi,k} = J, \ \forall \psi \in \Psi, k \in [Q]$ in Theorem~\ref{R_BD_MRC_opt_Thm}. 
\end{proof}

\begin{remark}
We note that Lemma~\ref{BD_MRC_design_lem}, Corollary~\ref{water_filling_coro}, Theorem~\ref{R_BD_MRC_opt_Thm} and Corollary~\ref{upper_lower_sim_coro_BD_MRC}, are all valid for any propagation channel model, including Rayleigh fading, Rician-$K$ fading, etc. The same results also hold for scenarios that involve a non-full-rank channel matrix product ${\bf H}_{\psi,-k}^T {\bf H}_{\psi,-k}^*$ (e.g.,  Keyhole channels) for $\psi \in \Psi$ and $k \in [Q]$, in which case we apply the pseudo-inverse of ${\bf H}_{\psi,-k}^T {\bf H}_{\psi,-k}^*$ in the projection matrix ${\bf T}_{\psi,-k}$.
\end{remark}

\subsection{Special Case II: Massive MIMO Over Rayleigh Channels}
In the following, we consider the massive MIMO regime~\cite{Rusek} entailing a very large  $L$, generally corresponding to the case of $L \gg M_\psi = \sum_{k \in [Q]} M_{\psi,k}$ for $\forall \psi \in \Psi$. This setting captures certain technological trends, as well as simplifies exposition, by simplifying analysis of the eigenvalues of ${\bf H}_{\psi,k}^T {\bf T}_{\psi,-k} {\bf H}_{\psi,k}^* \in \mathbb{C}^{M_{\psi,k} \times M_{\psi,k}}$. We here consider independent Rayleigh fading channels where $J_{\psi,k}=M_{\psi,k}$ and where the elements of ${\bf H}_{\psi,k} $ follow the i.i.d. complex Gaussian distribution with zero-mean and variance $\beta_{\psi,k}$, where $\beta_{\psi,k}$ accounts for the large-scale fading and/or pathloss. Lemma \ref{BD_Lem1} distills the result of Theorem~\ref{R_BD_MRC_opt_Thm} to the massive MIMO case, and the reported results are naturally independent of instantaneous channel fading.

\begin{lemma}\label{BD_Lem1}
In the massive MIMO regime, the effective instantaneous sum-rate $R_{\text{BD-MRC}}^\star$  under the BD-MRC scheme and optimal power allocation for MMF over independent Rayleigh fading channels, is the solution to 
\begin{align}\label{R_BD_MRC_LargeL_eq}
    \sum\limits_{\psi \in \Psi} \sum\limits_{k \in [Q]}   \frac{N_0 M_{\psi,k} \left( \exp\left( \frac{R_{\text{BD-MRC}}^\star}  {\xi_{G,Q} M_{\psi,k} G Q} \right) -1 \right)   }{\beta_{\psi,k} \big( L - M_\psi + M_{\psi,k}  \big)} = P_{\rm tot}
\end{align}
with the corresponding power allocation policy that yields the asymptotically optimal $R_{\text{BD-MRC}}^\star$, taking the form
\begin{align}\label{Power_allo_BD_LargeL_eq}
    P_{\psi,k,q} =  \frac{N_0 M_{\psi,k} \left( \exp\left( \frac{R_{\text{BD-MRC}}^\star} { \xi_{G,Q} M_{\psi,k}  G Q} \right) -1 \right)   }{\beta_{\psi,k} \big( L - M_\psi + M_{\psi,k}  \big)} . 
\end{align}
For the case when $M_{\psi,k} = M$ for $\forall \psi \in \Psi, k \in [Q]$, then \begin{align}\label{R_BD_MRC_LargeL_M_eq}
    &R_{\text{BD-MRC}}^\star (G,Q) \notag\\
    &\simeq  \xi_{G,Q} G Q  M \ln\Bigg( 1+ \frac{P_{\rm tot}(L-(Q-1)M)}{N_0 M \sum_{\psi \in \Psi} \sum_{k \in [Q]} \beta_{\psi,k}^{-1}} \Bigg).
\end{align}
\end{lemma}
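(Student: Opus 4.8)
The plan is to propagate Theorem~\ref{R_BD_MRC_opt_Thm} into the massive-MIMO limit, where essentially all of the work reduces to pinning down the eigenvalues $\{\lambda_{\psi,k,q}\}$ of ${\bf H}_{\psi,k}^T {\bf T}_{\psi,-k} {\bf H}_{\psi,k}^*$ as $L\to\infty$. First I would rewrite this matrix as $({\bf H}_{\psi,k}^*)^H {\bf T}_{\psi,-k} {\bf H}_{\psi,k}^*$ (using $({\bf H}_{\psi,k}^*)^H = {\bf H}_{\psi,k}^T$), exhibiting it as a positive-semidefinite Hermitian Gram matrix assembled from the columns of ${\bf H}_{\psi,k}^*$, which are i.i.d.\ $\mathcal{CN}({\bf 0}_L,\beta_{\psi,k}{\bf I}_L)$. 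The two structural facts I would lean on are that ${\bf T}_{\psi,-k}$ is an orthogonal projection of rank $r \triangleq L - M_\psi + M_{\psi,k}$, so that ${\rm Tr}\{{\bf T}_{\psi,-k}\} = {\rm Tr}\{{\bf T}_{\psi,-k}^2\} = r$, and that ${\bf T}_{\psi,-k}$ is statistically independent of ${\bf H}_{\psi,k}$, since it is a function of the other users' channels ${\bf H}_{\psi,-k}$ alone.

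Conditioning on ${\bf T}_{\psi,-k}$, I would then apply the standard concentration of Gaussian quadratic forms (the trace lemma). Writing ${\bf a}_1,\dots,{\bf a}_{M_{\psi,k}}$ for the columns of ${\bf H}_{\psi,k}^*$, the diagonal entries ${\bf a}_i^H {\bf T}_{\psi,-k} {\bf a}_i$ have conditional mean $\beta_{\psi,k}{\rm Tr}\{{\bf T}_{\psi,-k}\} = \beta_{\psi,k} r$ and variance $\beta_{\psi,k}^2 {\rm Tr}\{{\bf T}_{\psi,-k}^2\} = \beta_{\psi,k}^2 r$, while the off-diagonal entries ${\bf a}_i^H {\bf T}_{\psi,-k} {\bf a}_j$ ($i\neq j$) have zero mean and variance $\beta_{\psi,k}^2 r$. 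Because $M_{\psi,k}$ is fixed while $r\to\infty$ in the massive-MIMO regime, the fluctuations are of relative order $O(1/\sqrt{r})$, so $\tfrac{1}{r}{\bf H}_{\psi,k}^T {\bf T}_{\psi,-k} {\bf H}_{\psi,k}^* \to \beta_{\psi,k}{\bf I}_{M_{\psi,k}}$ and hence every eigenvalue obeys $\lambda_{\psi,k,q} = \beta_{\psi,k}(L - M_\psi + M_{\psi,k})\,(1+o(1))$. This is the step I expect to be the main obstacle and the only place the hypothesis $L\gg M_\psi$ enters; the care needed is that the $O(1/\sqrt{r})$ relative error here is \emph{not} vanishing in absolute terms, but I would show it only contributes an additive $o(1)$ inside the logarithmic rate, which is exactly the sense in which the final $\simeq$ of \eqref{R_BD_MRC_LargeL_M_eq} holds.

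With all of a user's eigenvalues asymptotically equal, the water-filling of Corollary~\ref{water_filling_coro} collapses to equal power across that user's $J_{\psi,k}=M_{\psi,k}$ streams, so \eqref{Rate_Multiplex_Single} becomes $R_{\psi,k} = \xi_{G,Q} M_{\psi,k}\ln\!\big(1 + P_{\psi,k}\lambda_{\psi,k}/(N_0 M_{\psi,k})\big)$ with $\lambda_{\psi,k} = \beta_{\psi,k}(L-M_\psi+M_{\psi,k})$. By Theorem~\ref{R_BD_MRC_opt_Thm}, the MMF optimum equalizes all $GQ$ served rates at $R_{\text{BD-MRC}}^\star/(GQ)$; I would set $R_{\psi,k} = R_{\text{BD-MRC}}^\star/(GQ)$ and invert this strictly monotone relation for $P_{\psi,k}$, which directly produces the power expression \eqref{Power_allo_BD_LargeL_eq}. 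Substituting these powers into the active constraint $\sum_{\psi\in\Psi}\sum_{k\in[Q]}P_{\psi,k}=P_{\rm tot}$ of \eqref{BD_MRC_Multiplex_Opt} then yields the defining equation \eqref{R_BD_MRC_LargeL_eq} for $R_{\text{BD-MRC}}^\star$.

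Finally, in the symmetric antenna case $M_{\psi,k}=M$, I would observe that $L-M_\psi+M_{\psi,k} = L-QM+M = L-(Q-1)M$ is identical for every served user, so the common factor $N_0 M\big(\exp(\cdot)-1\big)/\big(L-(Q-1)M\big)$ factors out of the sum in \eqref{R_BD_MRC_LargeL_eq}, leaving only $\sum_{\psi\in\Psi}\sum_{k\in[Q]}\beta_{\psi,k}^{-1}$. The remaining scalar equation is then solved in closed form for $R_{\text{BD-MRC}}^\star$, giving \eqref{R_BD_MRC_LargeL_M_eq}.
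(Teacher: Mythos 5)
Your proposal is correct and follows essentially the same route as the paper's own proof: concentration of Gaussian quadratic forms (the trace lemma) applied to ${\bf H}_{\psi,k}^T {\bf T}_{\psi,-k} {\bf H}_{\psi,k}^*$ with ${\bf T}_{\psi,-k}$ a rank-$(L-M_\psi+M_{\psi,k})$ projection independent of ${\bf H}_{\psi,k}$, yielding $\lambda_{\psi,k,q} \simeq \beta_{\psi,k}(L-M_\psi+M_{\psi,k})$, followed by equal-power water-filling, the MMF rate-equalization of Theorem~\ref{R_BD_MRC_opt_Thm}, and the total-power constraint to obtain \eqref{R_BD_MRC_LargeL_eq}--\eqref{R_BD_MRC_LargeL_M_eq}. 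Your explicit treatment of the $O(1/\sqrt{r})$ relative eigenvalue fluctuation --- not vanishing in absolute terms but contributing only $o(1)$ inside the logarithm --- is, if anything, slightly more careful than the paper's own use of $\simeq$ at that step.
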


\begin{proof}
The proof is relegated to Appendix~\ref{Proof_Lem1}.
\end{proof}

The numerical results shown in Fig.~\ref{Gain_L128_1}, under various system parameter settings, closely match the simulation results, confirming the accuracy of Lemma~\ref{BD_Lem1}.

\section{Performance Analysis on simple ZF Based Vector Coded Caching}\label{single_ana_sec}

In this section, we analyze the effective sum-rate achieved by the cache-aided downlink schemes of Section~\ref{JSAC_Intro} for the case of the ZF linear precoder over independent Rayleigh fading channels where $J_{\psi,k} = M_{\psi,k}$ for $\psi \in \Psi$ and $k \in [Q]$.  In contrast to the coherent BD-MRC design where both the precoding/combining matrix optimization and power allocation should be adjusted according to the instantaneous channel state (recall Theorem~\ref{R_BD_MRC_opt_Thm}),  we here (cf.~Theorem~\ref{Rate_multiZF_Thm}) simply perform channel matrix inversion without any precoder optimization, and we simply calibrate the power allocation as a function of the (large-scale) pathloss which of course changes much slower than fading does. The numerical results in Section~\ref{numerical_sec} show that the performance gap between the simpler ZF precoder and the fully optimized BD-MRC scheme is negligible for the practical setting of $L \gg M_\psi$ (cf. Fig.~\ref{Sum_Rate_fig}).

Following the conventional ZF precoding for single-antenna receivers (cf.~\cite{Caire_ZF}), we completely separate the transmitted $M_\psi = \sum_{k \in [Q]} M_{\psi,k}$ symbol streams such that there is no inter-stream interference. Therefore, the $M_{\psi,k}$ symbols simultaneously sent to ${\rm U}_{\psi,k}$ are fully separated (using complete channel diagonalization at the BS), and user ${\rm U}_{\psi,k}$ independently decodes the intended $M_{\psi,k}$ symbols without inter-stream interference. This corresponds to a decoding matrix ${\bf R}_{\psi,k} = {\bf I}_{M_{\psi,k}}$ at user ${\rm U}_{\psi,k}$.      
We use ${\bf H}_{\psi} \triangleq \big[ {\bf H}_{\psi,1}, \cdots, {\bf H}_{\psi,Q}  \big] \in \mathbb{C}^{L \times  M_{\psi}}$ to represent the channel matrix between the BS and the $Q$ active users in user-group $\psi$. Here, the precoding matrix for the multi-antenna receivers in user-group $\psi$ in \eqref{signal_transmit_multi} under the ZF variant scheme is designed as
\begin{align}\label{ZF_Design_Multi}
    {\bf V}_\psi =
     {{\bf H}}^*_{\psi} \left( {{\bf H}_{\psi}^T} {{\bf H}_{\psi}^*} \right)^{-1}  \circ {\bf D}_{\psi}  \     \in \mathbb{C}^{L \times  M_{\psi}}
\end{align}
where $\circ$ denotes the Hadamard product (element-wise product). In \eqref{ZF_Design_Multi},
${\bf D}_{\psi} \in \mathbb{C}^{L \times  M_{\psi}}$ is the normalization matrix which guarantees a unit norm for each column of ${\bf V}_\psi$, and whose $\ell$-th ($\ell \in [M_\psi]$) column  takes the form
$
     {\bf D}_{\psi,\ell}  \triangleq  {\bf 1}_L \big(\big[ \big( {\bf H}_{\psi}^T {\bf H}_{\psi}^* \big)^{-1} \big]_{\ell,\ell} \big)^{-1/2},
$
where ${\bf 1}_L \in \mathbb{C}^L$ is the vector with all elements equaling 1.

\begin{remark}
Computing the ZF precoder ${\bf V}_\psi$ in \eqref{ZF_Design_Multi} has complexity $\mathcal{O}(M_\psi^3 + L M_\psi^2)$. Under the common massive-MIMO regime with $L \gg M_\psi$, the complexity order is asymptotically $\mathcal{O}(L)$. In contrast, BD-MRC for the same group $\psi$ scales asymptotically as $\mathcal{O}(L^2)$, as discussed in Remark~\ref{BD_MRC_Order}. Thus, the ZF design asymptotically achieves an order-$L$ reduction in computational complexity compared to performing BD-MRC via Lemma~\ref{BD_MRC_design_lem}.
\end{remark}

We first present the upper and lower bounds for the effective rate at any given user, averaged over channel fading. 
\begin{prop}\label{Prop_ZF_singe_rate_multi}
The effective rate $\bar R_{\psi,k}^\text{ZF}$ averaged over channel fading at user ${\rm U}_{\psi,k}$ for $\psi \in \Psi, k \in [Q]$ under ZF-based precoding, is bounded as $\widetilde R_{\psi,k}^\text{ZF}  \le \bar R_{\psi,k}^\text{ZF} \le \widehat R_{\psi,k}^\text{ZF}$, where 
\begin{align}
    &\widetilde R_{\psi,k}^\text{ZF} \triangleq \xi_{G,Q} \sum_{q=1}^{M_{\psi,k}} \ln \left( 1+  \frac{ P_{\psi,k,q} (L-M_\psi)   \beta_{\psi,k}} {N_0 } \right), \label{lower_ZF_multi}\\
    &\widehat R_{\psi,k}^\text{ZF} \triangleq \xi_{G,Q} \! \sum_{q=1}^{M_{\psi,k}} \ln \!\left( 1+  \frac{ P_{\psi,k,q} (L-M_\psi+1)   \beta_{\psi,k}} {N_0  } \right). \label{upper_ZF_multi}
\end{align}
\end{prop}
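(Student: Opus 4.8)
The plan is to reduce everything to the statistics of a single per-stream post-processing SNR. First I would observe that, by construction of the ZF precoder in~\eqref{ZF_Design_Multi}, the composite effective channel satisfies ${\bf H}_\psi^T {\bf H}_\psi^* ({\bf H}_\psi^T {\bf H}_\psi^*)^{-1} = {\bf I}_{M_\psi}$, so that after the column normalization ${\bf D}_\psi$ the effective channel for the global stream index $\ell$ associated with symbol $s_{\psi,k,q}$ is ${\bf H}_\psi^T {\bf v}_{\psi,\ell} = {\bf e}_\ell \big([({\bf H}_\psi^T {\bf H}_\psi^*)^{-1}]_{\ell,\ell}\big)^{-1/2}$. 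With ${\bf R}_{\psi,k}={\bf I}_{M_{\psi,k}}$ this leaves no inter-stream interference, so the SINR for $s_{\psi,k,q}$ is exactly $\mathrm{SINR}_{\psi,k,q}^{\text{ZF}} = \frac{P_{\psi,k,q}}{N_0}\, g_\ell$ with the effective gain $g_\ell \triangleq 1/[({\bf H}_\psi^T {\bf H}_\psi^*)^{-1}]_{\ell,\ell}$. Consequently $\bar R_{\psi,k}^{\text{ZF}} = \xi_{G,Q}\sum_{q=1}^{M_{\psi,k}} \mathbb{E}\big\{\ln(1+\tfrac{P_{\psi,k,q}}{N_0} g_\ell)\big\}$, and the whole task collapses to bounding a single expectation of the form $\mathbb{E}\{\ln(1+c\,g_\ell)\}$.

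The second step is to identify the law of $g_\ell$. Using the standard identity for the diagonal of an inverse Gram matrix, I would write $g_\ell = \|({\bf I}_L - {\bf \Pi}_{-\ell})\,{\bf h}_\ell\|^2$, where ${\bf h}_\ell$ is the column of ${\bf H}_\psi^*$ carrying stream $\ell$ (i.i.d.\ $\mathcal{CN}(0,\beta_{\psi,k})$ entries) and ${\bf \Pi}_{-\ell}$ is the orthogonal projector onto the span of the other $M_\psi-1$ columns. The key structural point is that ${\bf \Pi}_{-\ell}$ is independent of ${\bf h}_\ell$ and, almost surely, projects onto an $(L-M_\psi+1)$-dimensional subspace; because the distribution of ${\bf h}_\ell$ is rotationally invariant, the law of $\|({\bf I}_L-{\bf \Pi}_{-\ell}){\bf h}_\ell\|^2$ depends only on that subspace \emph{dimension}, not on its orientation. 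This is precisely what neutralizes the heterogeneous pathloss: the differing variances $\beta_{\psi,k'}$ of the interfering columns only randomize the subspace, so only the target stream's own $\beta_{\psi,k}$ survives, giving $g_\ell \sim \beta_{\psi,k}\cdot \tfrac12\chi^2_{2(L-M_\psi+1)}$, i.e.\ a scaled Gamma variable with $\mathbb{E}\{g_\ell\} = \beta_{\psi,k}(L-M_\psi+1)$.

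With the law in hand, the two bounds follow from two complementary applications of Jensen's inequality. For the upper bound~\eqref{upper_ZF_multi} I would use concavity of $x\mapsto\ln(1+cx)$ to get $\mathbb{E}\{\ln(1+c\,g_\ell)\}\le \ln(1+c\,\mathbb{E}\{g_\ell\}) = \ln\!\big(1+\tfrac{P_{\psi,k,q}(L-M_\psi+1)\beta_{\psi,k}}{N_0}\big)$. For the lower bound~\eqref{lower_ZF_multi} I would instead use that $y\mapsto \ln(1+c\,e^{y})$ is \emph{convex} (its second derivative $\tfrac{c e^{y}}{(1+c e^{y})^2}$ is positive), so that $\mathbb{E}\{\ln(1+c\,g_\ell)\}\ge \ln\!\big(1+c\,e^{\mathbb{E}\{\ln g_\ell\}}\big)$. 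Since $g_\ell/\beta_{\psi,k}$ is Gamma with integer shape $L-M_\psi+1$, one has $\mathbb{E}\{\ln(g_\ell/\beta_{\psi,k})\} = H_{L-M_\psi}-\gamma_{\rm E}$ (harmonic number minus the Euler--Mascheroni constant), whence $e^{\mathbb{E}\{\ln g_\ell\}} = \beta_{\psi,k}\,e^{H_{L-M_\psi}-\gamma_{\rm E}}$, and the elementary inequality $H_n > \ln n + \gamma_{\rm E}$ yields $e^{\mathbb{E}\{\ln g_\ell\}} \ge \beta_{\psi,k}(L-M_\psi)$, producing exactly~\eqref{lower_ZF_multi}. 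Summing over $q\in[M_{\psi,k}]$ and restoring $\xi_{G,Q}$ completes both bounds.

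I expect the main obstacle to be the rigorous justification of the second step --- establishing that $g_\ell$ retains the clean scaled-$\chi^2_{2(L-M_\psi+1)}$ law \emph{despite} the column-wise heterogeneous variances. The rotational-invariance/independence argument must be stated carefully (conditioning on the interfering columns, then invoking isotropy of ${\bf h}_\ell$), and the lower bound additionally hinges on the harmonic-number (equivalently digamma) inequality $H_n > \ln n + \gamma_{\rm E}$, which is exactly the slack that converts the sharp mean factor $L-M_\psi+1$ into the conservative factor $L-M_\psi$. The per-stream SINR derivation and the two Jensen steps are otherwise routine.
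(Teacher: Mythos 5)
Your proposal is correct, and its skeleton --- reducing to the per-stream gain $g_\ell = 1/\big[\big({\bf H}_\psi^T{\bf H}_\psi^*\big)^{-1}\big]_{\ell,\ell}$ and applying Jensen's inequality twice --- matches the paper's proof in Appendix~\ref{Proof_Prop_ZF_single_rate}; indeed your upper bound \eqref{upper_ZF_multi} is obtained identically (concavity of $\ln(1+cx)$ together with $\mathbb{E}\{g_\ell\}=\beta_{\psi,k}(L-M_\psi+1)$). Where you genuinely depart is in the lower bound and in how the statistics are sourced. The paper applies Jensen to the convex map $x\mapsto\ln(1+c/x)$ evaluated at the diagonal entry $x=\big[\big({\bf H}_\psi^T{\bf H}_\psi^*\big)^{-1}\big]_{k(q),k(q)}$ itself, and invokes the cited inverse-Wishart moment $\mathbb{E}\{x\}=1/\big(\beta_{\psi,k}(L-M_\psi)\big)$, which makes \eqref{lower_ZF_multi} a two-line consequence once that moment is granted. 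You instead pass through the geometric mean: convexity of $y\mapsto\ln(1+c\,e^y)$ gives $\mathbb{E}\{\ln(1+c\,g_\ell)\}\ge\ln\big(1+c\,e^{\mathbb{E}\{\ln g_\ell\}}\big)$, the Gamma law of $g_\ell$ gives $\mathbb{E}\{\ln(g_\ell/\beta_{\psi,k})\}=H_{L-M_\psi}-\gamma_{\rm E}$, and the inequality $H_n>\ln n+\gamma_{\rm E}$ recovers exactly the factor $L-M_\psi$. Your route requires more machinery (digamma values, the harmonic-number inequality) but buys two things: (i) it is self-contained, since your projection/isotropy argument --- writing $g_\ell=\|({\bf I}_L-{\bf \Pi}_{-\ell}){\bf h}_\ell\|^2$ and observing that the heterogeneous per-user variances only randomize the orientation of the interfering subspace, never its dimension --- derives the scaled $\tfrac12\chi^2_{2(L-M_\psi+1)}$ law that the paper simply cites; and (ii) your intermediate bound $\ln\big(1+c\,\beta_{\psi,k}e^{H_{L-M_\psi}-\gamma_{\rm E}}\big)$ is strictly tighter than \eqref{lower_ZF_multi}, since by the harmonic--geometric--arithmetic mean ordering the factor $e^{H_{L-M_\psi}-\gamma_{\rm E}}$ lies strictly between $L-M_\psi$ and $L-M_\psi+1$, which shows the proposition's lower bound is not sharp. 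The paper's route, in exchange, is shorter and avoids that apparatus entirely.
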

\begin{proof}
The proof is relegated to Appendix~\ref{Proof_Prop_ZF_single_rate}.
\end{proof}


Since directly solving the MMF optimization problem in \eqref{MMF_model} is computationally prohibitive, especially in large-scale settings with many served users and large antenna arrays, we take an alternative approach by optimizing strict upper and lower bounds on the effective sum-rate. Specifically, instead of solving the original MMF problem directly, we formulate two separate optimization problems by treating the upper and lower bounds derived in Proposition~\ref{Prop_ZF_singe_rate_multi} as the objective functions within the MMF framework in Theorem~\ref{Rate_multiZF_Thm}. This allows us to obtain analytically optimized bounds on the sum-rate after MMF optimization, thereby providing rigorous performance guarantees. Furthermore, when all users are equipped with the same number of receive antennas, the optimized bounds lead to closed-form expressions, significantly simplifying performance analysis.
We now proceed with the theorem.

\begin{theorem}\label{Rate_multiZF_Thm}
The optimal MMF-constrained effective sum-rate $\bar R_{\text{ZF}}^\star$ is bounded as $\widetilde{R}_{\text{ZF}}^\star\leq  \bar R_{\text{ZF}}^\star \leq    \widehat{R}_{\text{ZF}}^\star$, where $\widetilde{R}_{\text{ZF}}^\star$ and $\widehat{R}_{\text{ZF}}^\star$ are respectively the solutions to\footnote{{\color{black}{Unlike in Theorem~\ref{R_BD_MRC_opt_Thm} for BD-MRC, where the MMF optimization is performed before averaging over channel fading, we change the order of operations in \( \bar{R}_{\text{ZF}}^\star \). Specifically, we first take the expectation over channel fading and then perform the MMF optimization. This reformulation allows us to design the power allocation strategy based on large-scale fading and path loss. While this approach simplifies the optimization process, it introduces an approximation compared to directly solving the MMF problem before averaging over channel fading, as done in the BD-MRC case.}}} 
\begin{align}
    &\sum_{\psi \in \Psi} \sum_{k \in [Q]} \frac{N_0 M_{\psi,k} \Big( \exp\Big( \frac{\widetilde{R}_{\text{ZF}}^\star}{\xi_{G,Q}GQ M_{\psi,k}}\Big) -1\Big) }{\beta_{\psi,k} (L-M_\psi)}  = P_{\rm tot}, \label{lower_ZF_num}\\
    &\sum_{\psi \in \Psi} \sum_{k \in [Q]} \frac{N_0 M_{\psi,k} \Big( \exp\Big( \frac{\widehat{R}_{\text{ZF}}^\star}{\xi_{G,Q}GQ M_{\psi,k}}\Big) -1\Big) }{\beta_{\psi,k} (L-M_\psi+1)}  = P_{\rm tot}.\label{upper_ZF_num}
\end{align}
In the symmetric case of $M_{\psi,k}= M$ for $ \forall \psi \in \Psi, k \in [Q]$, the closed-form expressions for $\widetilde{R}_{\text{ZF}}^\star$ and $\widehat{R}_{\text{ZF}}^\star$ are respectively
\begin{align}
    &\widetilde{R}_{\text{ZF}}^\star = \xi_{G,Q} GQM \ln\left( 1+ \frac{P_{\rm tot} (L-QM)}{M N_0 \sum_{\psi \in \Psi} \sum_{k \in [Q]} \beta_{\psi,k}^{-1}}\right), \label{lower_ZF_closed} \\
    &\widehat{R}_{\text{ZF}}^\star = \xi_{G,Q} GQM \ln\left( 1+ \frac{P_{\rm tot} (L-QM+1)}{M N_0 \sum_{\psi \in \Psi} \sum_{k \in [Q]} \beta_{\psi,k}^{-1}}\right). 
\end{align}
\end{theorem}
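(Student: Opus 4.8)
The plan is to bound the true MMF-optimized fading-averaged sum-rate $\bar R_{\text{ZF}}^\star$ by solving two surrogate MMF problems in which the exact per-user averaged rate $\bar R_{\psi,k}^\text{ZF}$ is replaced by its lower and upper bounds $\widetilde R_{\psi,k}^\text{ZF}$ and $\widehat R_{\psi,k}^\text{ZF}$ from Proposition~\ref{Prop_ZF_singe_rate_multi}. The structural fact that makes this legitimate is that the max-min operation is monotone under pointwise domination of the objective: since $\widetilde R_{\psi,k}^\text{ZF}\le \bar R_{\psi,k}^\text{ZF}\le \widehat R_{\psi,k}^\text{ZF}$ holds for every feasible power assignment, the same ordering is inherited by the corresponding MMF values.

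First I would establish this sandwich rigorously. For any fixed feasible power set $\mathcal{P}_\Psi$, letting $(\psi',k')$ index the user attaining $\min_{\psi,k}\bar R_{\psi,k}^\text{ZF}$, the chain $\min_{\psi,k}\widetilde R_{\psi,k}^\text{ZF}\le \widetilde R_{\psi',k'}^\text{ZF}\le \bar R_{\psi',k'}^\text{ZF}=\min_{\psi,k}\bar R_{\psi,k}^\text{ZF}$ gives $\min\widetilde R\le\min\bar R$ pointwise, and symmetrically $\min\bar R\le\min\widehat R$. Maximizing over $\mathcal{P}_\Psi$ preserves both inequalities, so the three MMF-optimal minimum rates inherit the ordering. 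Since any MMF optimum equalizes all $GQ$ user rates --- exactly the power-shifting contradiction argument used in the proof of Theorem~\ref{R_BD_MRC_opt_Thm} --- each optimal sum-rate equals $GQ$ times its common per-user rate, and the ordering on the common rates lifts directly to $\widetilde{R}_{\text{ZF}}^\star\le \bar R_{\text{ZF}}^\star\le \widehat{R}_{\text{ZF}}^\star$.

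Next I would solve each surrogate problem in closed implicit form. In the lower bound \eqref{lower_ZF_multi} the per-stream coefficient $(L-M_\psi)\beta_{\psi,k}/N_0$ is independent of the stream index $q$, so by concavity of $\ln(\cdot)$ the budget $P_{\psi,k}$ allotted to ${\rm U}_{\psi,k}$ is split uniformly across its $M_{\psi,k}$ streams, yielding $\widetilde R_{\psi,k}^\text{ZF}=\xi_{G,Q}M_{\psi,k}\ln\!\big(1+\tfrac{P_{\psi,k}(L-M_\psi)\beta_{\psi,k}}{N_0 M_{\psi,k}}\big)$. This is strictly increasing in $P_{\psi,k}$; setting it equal to the common rate $\widetilde{R}_{\text{ZF}}^\star/(GQ)$ and inverting expresses $P_{\psi,k}$ as a function of $\widetilde{R}_{\text{ZF}}^\star$, and then imposing $\sum_{\psi\in\Psi}\sum_{k\in[Q]}P_{\psi,k}=P_{\rm tot}$ produces exactly \eqref{lower_ZF_num}. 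The upper-bound equation \eqref{upper_ZF_num} follows identically after replacing $L-M_\psi$ by $L-M_\psi+1$.

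Finally, for the symmetric case $M_{\psi,k}=M$ I would use $M_\psi=\sum_{k\in[Q]}M_{\psi,k}=QM$, which renders the factor $L-QM$ (respectively $L-QM+1$) common to every summand in \eqref{lower_ZF_num}, so the term $e^{\widetilde{R}_{\text{ZF}}^\star/(\xi_{G,Q}GQM)}-1$ factors out and the implicit equation collapses to $\tfrac{N_0 M}{L-QM}\big(e^{\widetilde{R}_{\text{ZF}}^\star/(\xi_{G,Q}GQM)}-1\big)\sum_{\psi,k}\beta_{\psi,k}^{-1}=P_{\rm tot}$, which inverts to the closed form \eqref{lower_ZF_closed}. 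I expect the main obstacle to be conceptual rather than computational: because of the order-of-operations swap flagged in the footnote (averaging over fading \emph{before} the MMF optimization), the bounds of Proposition~\ref{Prop_ZF_singe_rate_multi} and the equal-rate optimality argument must both be applied at the level of the fading-averaged rates, and the sandwich only certifies $\bar R_{\text{ZF}}^\star$ up to the gap $\widehat{R}_{\text{ZF}}^\star-\widetilde{R}_{\text{ZF}}^\star$, whose tightness for $L\gg M_\psi$ is relegated to the numerical validation in Section~\ref{numerical_sec}.
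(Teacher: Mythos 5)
Your proposal is correct and follows essentially the same route as the paper's (brief) proof: substitute the Proposition~\ref{Prop_ZF_singe_rate_multi} bounds as surrogate MMF objectives, invoke rate equalization at the optimum, invert each user's rate to get its power, sum to match $P_{\rm tot}$, and collapse to closed form when $M_{\psi,k}=M$. In fact your write-up is more complete than the paper's sketch, since you explicitly justify the sandwich $\widetilde{R}_{\text{ZF}}^\star\le \bar R_{\text{ZF}}^\star\le \widehat{R}_{\text{ZF}}^\star$ via pointwise domination and monotonicity of the max-min operation, together with the stream-index independence of the bound coefficients that makes equal per-stream power splitting optimal --- steps the paper leaves implicit.
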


\begin{proof}
The proof follows a similar procedure to the BD-MRC case. We first show that, at the optimum of the corresponding MMF problem, all users attain the same lower bound on their effective rate. Then, by analyzing the lower bound, we derive the transmit power allocated to each user and sum over all served users to match the total power \(P_{\rm tot}\). This leads to~\eqref{lower_ZF_num}. In the symmetric case where all users have the same number of receive antennas \(M\), a closed-form expression for the lower bound in~\eqref{lower_ZF_closed} can be further obtained. The upper bound can be derived using similar steps.
\end{proof}

As shown in Figs. \ref{Sum_Rate_fig},  \ref{Opt_Gain_1_fig}, and \ref{VectorMulti_Micro_fig}, the tightness of the derived bounds in Theorem~\ref{Rate_multiZF_Thm} in estimating the actual performance is validated under various system configurations.

It is important to note that while ZF precoding is particularly beneficial in the large-antenna regime, the results presented in Theorem~\ref{Rate_multiZF_Thm} are not limited to large \( L \). The derived upper and lower bounds remain valid for a wide range of antenna configurations, making them applicable to both moderate and large-scale systems.  These results offer an efficient yet mathematically rigorous way to evaluate MMF-optimized sum-rate behavior while reducing computational complexity.

By comparing \(\widetilde{R}_{\text{ZF}}^\star\) and \(\widehat{R}_{\text{ZF}}^\star\) with \(R_{\text{BD-MRC}}^\star\) (cf.~\eqref{R_BD_MRC_LargeL_M_eq}) in the symmetric setting, we obtain the following limit results:
\begin{align}
&\lim_{L \to \infty}  \frac{| R_{\text{BD-MRC}}^\star -  \widetilde{R}_{\text{ZF}}^\star|}{\xi_{G,Q} GQM}   
= \lim_{L \to \infty} \frac{M}{L - QM} = 0, \\
&\lim_{L \to \infty} \frac{|R_{\text{BD-MRC}}^\star - \widehat{R}_{\text{ZF}}^\star |}{\xi_{G,Q} GQM} = \lim_{L \to \infty} \frac{M-1}{L - QM + 1} = 0.
\end{align}
These results lead to the following remark.

\begin{remark}\label{Small_Gap_ZFBD_remark}
  In the massive MIMO regime (i.e., \(L \to \infty\)), the delivery performance gap between simple ZF and BD-MRC is negligible when the number of receive antennas per user \(M\) is small. This conclusion is further supported by the numerical results in Section~\ref{numerical_sec} (e.g., Figs.~\ref{Sum_Rate_fig} and \ref{Opt_Gain_1_fig}).
\end{remark}

\section{Numerical Results}\label{numerical_sec}
This section presents various numerical results that validate our analysis as well as provide clear comparisons\footnote{The bounds in Theorem~\ref{Rate_multiZF_Thm} have been demonstrated to be very tight in our conference version \cite{Zhao_WSA_SCC}. For clarity of presentation, we only plot the tight lower and upper bounds for ZF precoding as benchmarks to BD-MRC, while omitting the simulation results since they almost perfectly overlap with the bounds and are thus indistinguishable.}. We here focus on the case where each user has the same number of receiving antennas $M$.  We consider users with relatively low mobility, assuming a coherence block of \( T = 15{,}000 \) symbols, which corresponds to a coherence time of \( 0.075 \) seconds and a coherence bandwidth of \( 200 \) kHz.\footnote{For practical reference, at a carrier frequency of $f_c = 2~\text{GHz}$ in an urban Micro-cell scenario, the RMS delay spread (DS) is approximately $114~\text{nanoseconds}$ \cite[Table 7.5-6]{three_GPP}. Using the empirical relation $B_c = 1/(50\,\text{DS})$ for a correlation coefficient of $0.9$ \cite[Eq. (1.23)]{Cho2010MIMOOFDM}, the coherence bandwidth $B_c$ is about $180~\text{kHz}$. Considering a pedestrian moving at speed $v = 3~\text{km/h}$, the maximum Doppler shift is $f_D = v f_c / c \approx 5.5~\text{Hz}$ \cite[Eq. (1.34)]{Cho2010MIMOOFDM}, where $c$ denotes the light speed. This yields a coherence time  of  $T_c = {0.423}/{f_D} \approx 0.077~\text{seconds}$ \cite[Eq. (1.31)]{Cho2010MIMOOFDM}. 
} We further consider CSI pilot length of $\Theta = 10$, which, under some ideal conditions, could provide near-perfect CSI at both the BS and the users~\cite{caire2010multiuser}\footnote{$\Theta$ could be further decreased at higher SNR, thus further reducing the CSI overhead, resulting in even higher performance gains brought about by VCC.}.  We consider independent Rayleigh fading channels and generate 1000 realizations of users' locations, based on the assumption of uniformly-distributed users across the cell.\footnote{To the best of our knowledge, there are currently no existing multi-antenna coded caching schemes that significantly outperform conventional cacheless MU-MIMO in terms of spectral efficiency. Although this work mainly focuses on deriving analytical expressions that parametrize the delivery performance of VCC, we still provide numerical comparisons between VCC and existing multi-antenna coded caching schemes in Section~\ref{Com_BCC_Subsec}.} 
Other simulation parameters are listed in Table~\ref{tab:simulation}.


\begin{table}[!t]
  \caption{Simulation Parameters (cf. \cite{three_GPP,Emil_PathlossModel})}
  \label{tab:simulation}
  \centering
  \footnotesize
  \setlength{\tabcolsep}{4pt}      
  \renewcommand{\arraystretch}{1.15}
  \begin{tabularx}{0.98\linewidth}{@{}lL@{}} 
    \toprule
    \textbf{Parameters} & \textbf{Value} \\
    \midrule
    AWGN spectral density & $-174 \text{ dBm/Hz}$ \\
    Spectrum bandwidth for each user & 20 MHz \\
    Carrier frequency & 2 GHz\\
    Macro-cell size & Inner radius of $35$ m and outer radius of $500$ m\\
    Micro-cell size & Inner radius of $10$ m and outer radius of $100$ m \\
    Pathloss exponent $\eta_0$ & $\eta_0=3.76$ in Macro-cell and  $\eta_0=3$ in Micro-cell \\
    Attenuation regularization $l_0$ & $l_0 = 10^{-3.53}$ in  Macro-cell  and $l_0=10^{-3.7}$ in  Micro-cell \\
    Pathloss model $\beta_{\psi,k}$ &   $\beta_{\psi,k} = l_0 r_{\psi,k}^{-\eta_0}$ where $r_{\psi,k}$ is the distance from the BS to user ${\rm U}_{\psi,k}$\\
    Maximum transmit power $P_{\rm tot}$ & Typical values are around 33 dBm in Micro-cell, and 40 dBm in Macro-cell\\
    \bottomrule
  \end{tabularx}
\end{table}

    \begin{figure}[!t]
             \centering
             \includegraphics[width= 3.5 in]{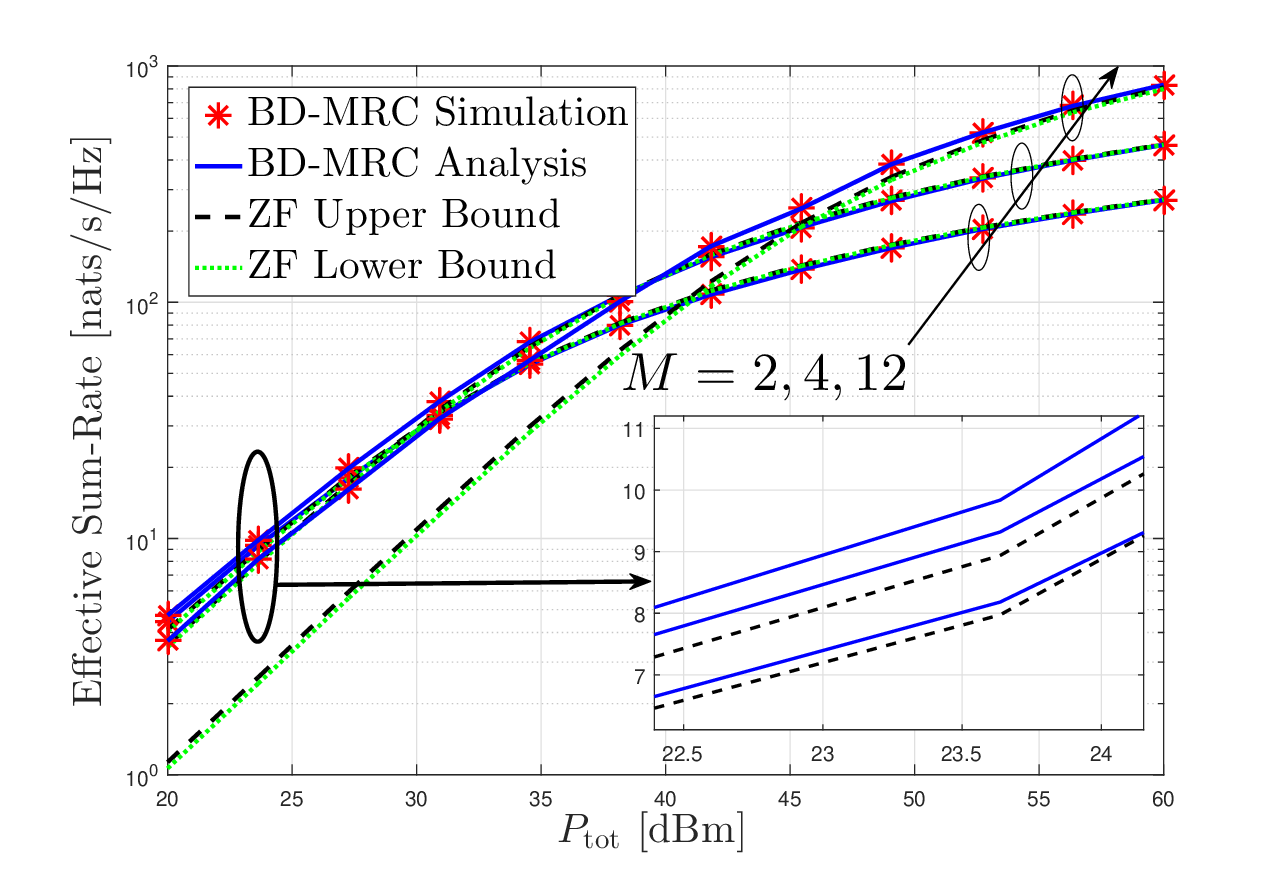}\vspace{-0.2cm}
             \captionsetup{font={footnotesize}}
		\caption{Effective sum-rate versus $P_{\rm tot}$ in  a Macro-cell under MMF with $L=64$, $G=5$, $Q=4$ and $J=M$.}\label{Sum_Rate_fig}
        \end{figure}

\subsection{Numerical Comparisons With Fixed $Q$ and $Q'$}

Fig. \ref{Sum_Rate_fig}  plots the effective sum-rate versus $P_{\rm tot}$ under the MMF power allocation. 
The analytical results represented by blue solid lines, derived from Theorem~\ref{R_BD_MRC_opt_Thm}, are obtained via a one-dimensional binary search for  $R^\star_{\text{BD-MRC}}$.
For the simulation results represented by red asterisk symbols in the same figure, we use the built-in function ``fminmax" in MATLAB to numerically solve the MMF optimization in \eqref{BD_MRC_Multiplex_Opt}. Interestingly, we note that the effective sum-rate increases with $M$ when SNR is higher, while it decreases with $M$ when SNR is relatively low, and where naturally spatial diversity is more impactful than multiplexing gain. For relatively small values of \( M \) (e.g., \( M \leq 4 \)), simple ZF precoding exhibits reasonably good performance, being only slightly inferior to BD-MRC across the entire considered SNR range. This observation aligns with the analysis in Remark~\ref{Small_Gap_ZFBD_remark}. However, for larger values of \( M \) (e.g., $M=12$), its performance in the low-SNR regime deteriorates significantly, leading to a substantial performance gap compared to BD-MRC in this region. 


     \begin{figure}[!t]
             \centering
             \includegraphics[width= 3.5 in]{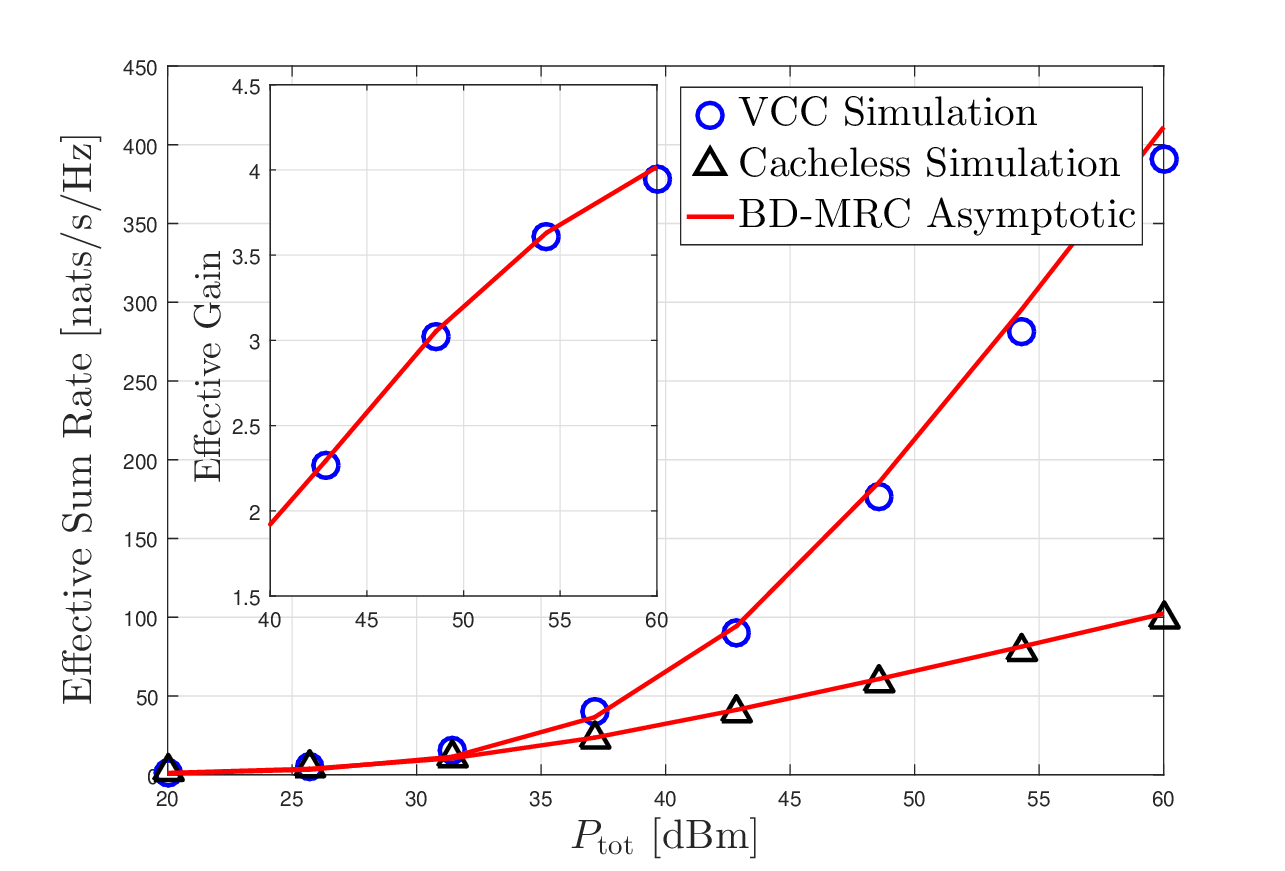}\vspace{-0.2cm}
             \captionsetup{font={footnotesize}}
		\caption{Effective sum-rate and effective gain  versus $P_{\rm tot}$ for $M=4$, $L=24$, $G=6$ and $Q = Q' =4$ in a Macro-cell under BD-MRC and MMF. Typical transmit power $P_{\rm tot}$ between 40--43 dBm.}\label{Gain_L128_1}
        \end{figure}	

     \begin{figure}[!t]
             \centering
             \includegraphics[width= 3.5 in]{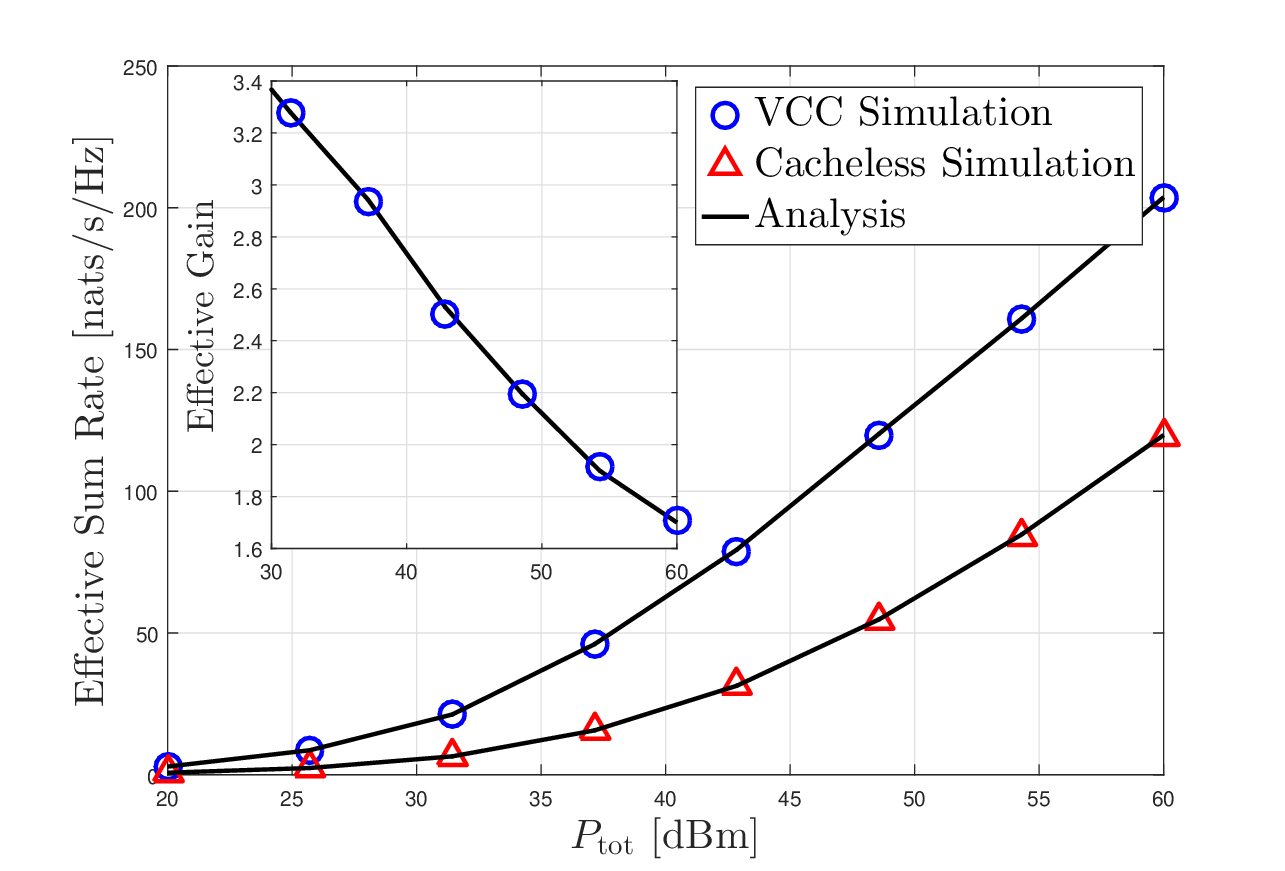}\vspace{-0.2cm}
             \captionsetup{font={footnotesize}}
		\caption{Effective sum-rate and effective gain versus $P_{\rm tot}$ for $L=32$, $M=4$, $Q=2$, $Q'=8$ and (a very conservative) $G=4$ in a Macro-cell under BD-MRC and MMF.}\label{Gain_L128_2}
        \end{figure}


Fig. \ref{Gain_L128_1} shows both the effective sum-rate and the corresponding gain versus $P_{\rm tot}$ for $G=6$, $L=24$, $M=4$ and $Q=Q'=4$ under BD-MRC and MMF in the Macro-cell. 
This setup is motivated by the assumption that the precoder size remains unchanged in both the cacheless and VCC cases. As seen from the VCC transmit signal in \eqref{signal_transmit_multi}, the system can sequentially precode \(G\) data vectors --- 
each of dimension \(QM \times 1\) --- and aggregate them for simultaneous transmission. In contrast, a conventional MU-MIMO system would transmit these vectors one after another. This structure allows VCC to support \(G\) times more users without requiring a larger precoder, making it compatible with existing transmitter architectures. In Fig. \ref{Gain_L128_1}, the simulated results are generated following the same procedure as in Fig.~\ref{Sum_Rate_fig}, and the red curve corresponds to the asymptotic expression derived in Lemma~\ref{BD_Lem1}. Even for moderate values of \(L\), the asymptotic result closely matches the simulation, demonstrating its high accuracy. In the effective gain plot, it is observed that for practical transmit power levels \(P_{\rm tot}\), the spectral efficiency is nearly \emph{doubled} compared to the conventional (cacheless) MU-MIMO system. Moreover, the effective gain continues to increase as \(P_{\rm tot}\) grows.


The delivery performance with the same DoF (i.e., $Q'=GQ$) in VCC and in the cacheless scenario under BD-MRC and MMF is plotted in Fig. \ref{Gain_L128_2}. 
In contrast to the increasing effective gain w.r.t. $P_{\rm tot}$ that we experience in Fig. \ref{Gain_L128_1}, the effective gain in Fig. \ref{Gain_L128_2} decreases with $P_{\rm tot}$, while still though allowing for good performance gains, especially in low SNR. This benefits from the fact that VCC introduces an extra multiplexing dimension for inter-user interference cancellation. This considerably alleviates the multiplexing load that should have been handled by multiple antennas, thus yielding greater beamforming gains for the users, which is particularly meaningful when requiring high beamforming gains in low SNR. 
Even at practical transmit power levels (\(P_{\rm tot} \in [40, 43]\) dBm), the effective gain achieved by VCC remains above 230\%, indicating a 1.3-fold improvement in spectral efficiency over the cacheless MU-MIMO system.

    \begin{figure}[!t]
             \centering
             \includegraphics[width= 3.5 in]{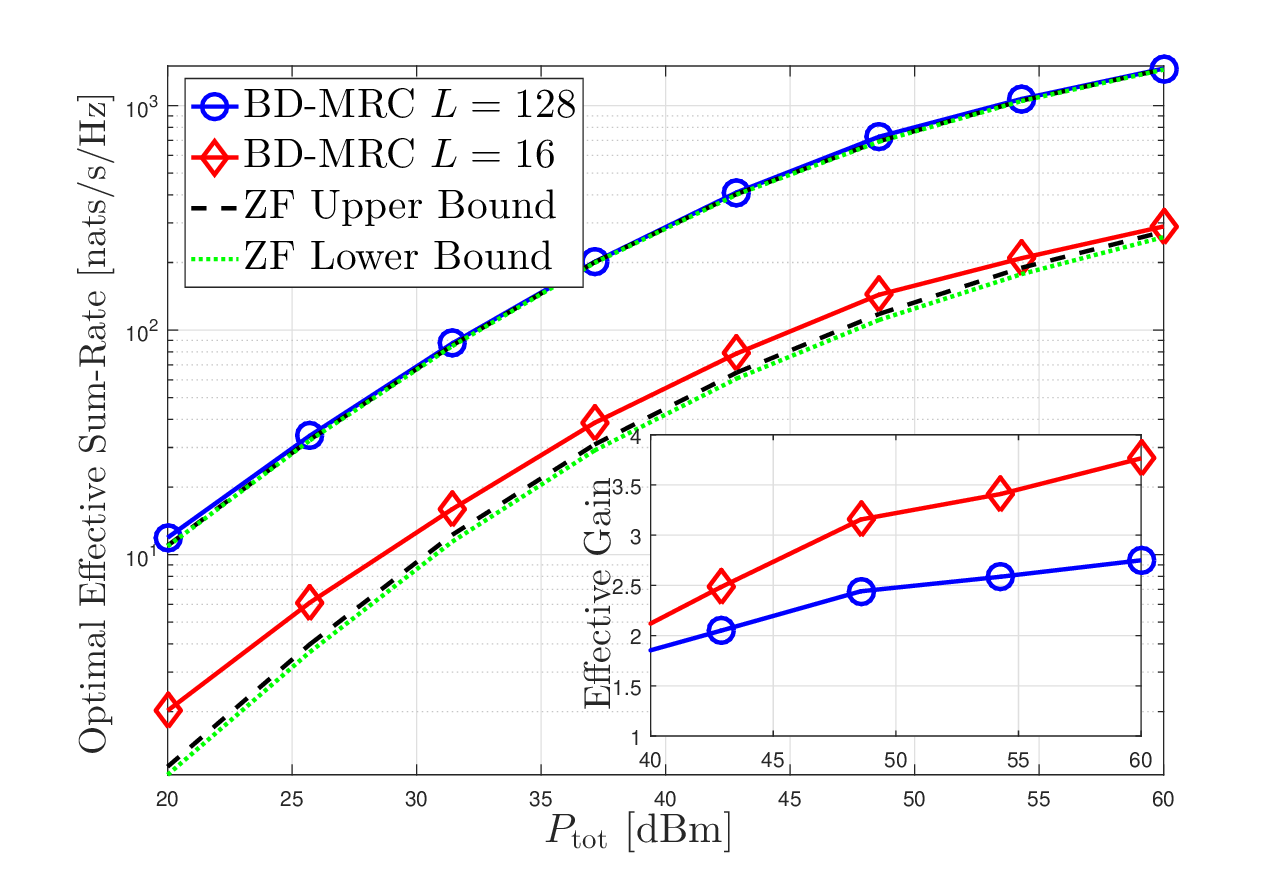}\vspace{-0.2cm}
             \captionsetup{font={footnotesize}}
		\caption{Optimal effective sum-rate and effective gain versus $P_{\rm tot}$ for $G=6$, $M=4$ in a Macro-cell under MMF, where $Q$ and $Q'$ are independently optimized.}\label{Opt_Gain_1_fig}
        \end{figure}

\subsection{Numerical Comparisons With Optimized $Q$ and $Q'$}			
Now we shift to the case of the optimal balance between multiplexing and beamforming gains, which will be shown to be particularly meaningful in the medium to high SNR regime.
Fig.~\ref{Opt_Gain_1_fig} plots the effective gain (corresponding to optimized rates --- where $Q$ and $Q'$ are independently optimized) versus $P_{\rm tot}$, for the Macro-cell case. 
While VCC provides a substantial gain over the corresponding traditional (cacheless) MU-MIMO scenario, this gain is lower than in the symmetric fading case (i.e., identical pathloss) studied in~\cite{Zhao_VectorCC}. This reduction is primarily due to the severe near-far effect, which significantly constrains the effective sum-rate --- particularly in large cells, even with optimal power allocation.\footnote{The situation is further exacerbated by the uniform distribution of users within the Macro-cell, with a majority positioned near the cell edge. For instance, $64.32\%$ of users are more than 300 meters from the BS, where, given a typical Macro-cell BS transmit power of $P_t=40$ dBm, their average SNR falls below 12.55 dB. Additional cases are detailed in Table 4.1 of \cite{EURECOM+7083}.} 
In contrast,
things are very different in the Micro-cell setting (see Fig.~\ref{VectorMulti_Micro_fig}) where the effective gain of optimized rates is again very notable. For example, in a Micro-cell setting, the recorded gain is $410\%$ for the reasonable BS transmit power of $P_{\rm tot}=33$ dBm. It is also worth noting that simple ZF precoding --- with power allocation based on pathloss among the served users --- yields a tight lower bound for the BD-MRC scheme when \(L\) is large and \(M\) is relatively small, as observed in Figs.~\ref{Opt_Gain_1_fig}--\ref{VectorMulti_Micro_fig}. This observation is consistent with the analytical result in Remark~\ref{Small_Gap_ZFBD_remark}.

     \begin{figure}[!t]
             \centering
             \includegraphics[width= 3.5 in]{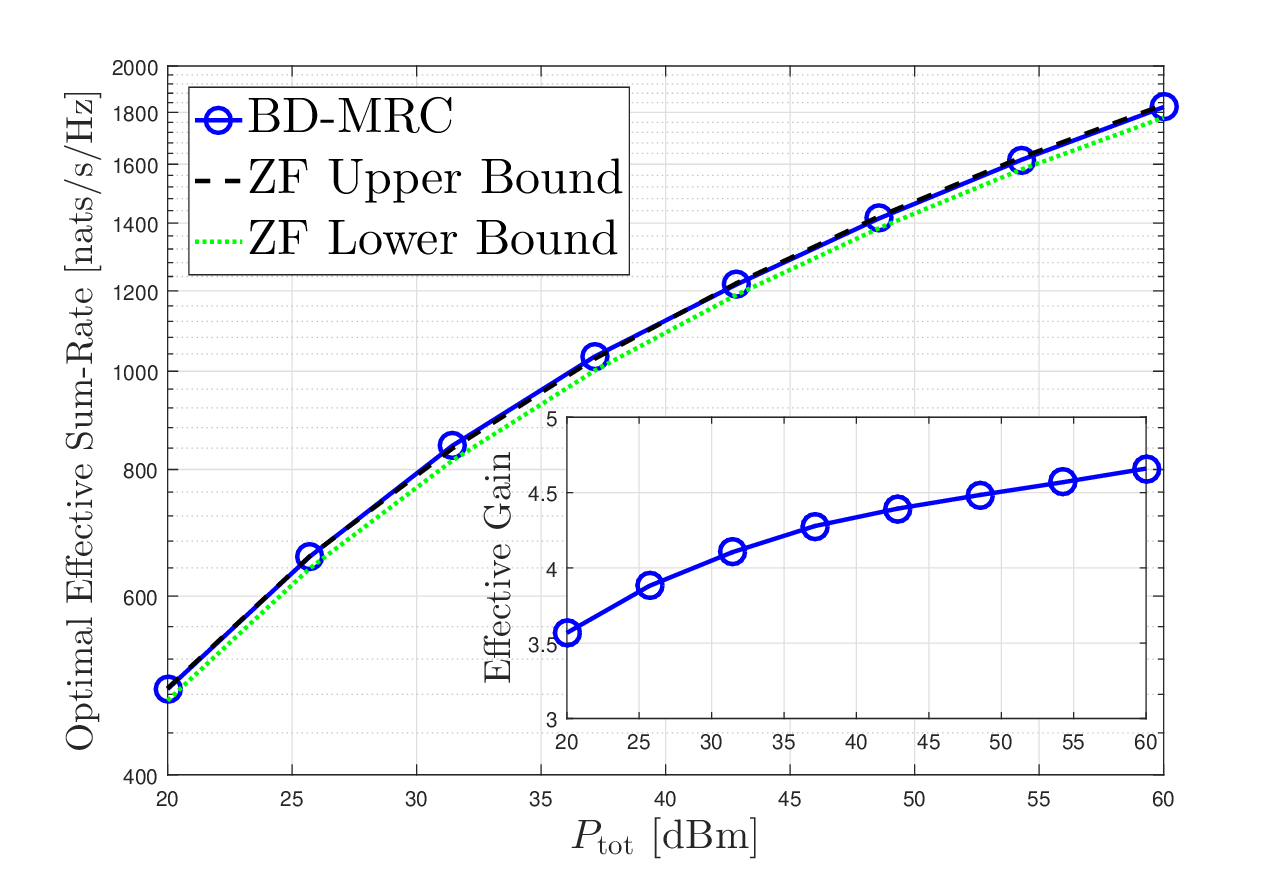}\vspace{-0.2cm}
             \captionsetup{font={footnotesize}}
		\caption{Optimal effective sum-rate and effective gain versus $P_{\rm tot}$ for $L=32$ $M=2$, and $G=6$ in a Micro-cell under MMF, where $Q$ and $Q'$ are independently optimized.}\label{VectorMulti_Micro_fig}
        \end{figure}

\subsection{Comparisons to Bit-Level Multi-Antenna Coded Caching}\label{Com_BCC_Subsec}
Since coded caching is essentially a form of network coding, most of VCC schemes (e.g., \cite{Lampiris_JSAC,Bakhshzad,Emanuele,Elizabath,Ting,Zhao_VectorCC}) admit a PHY representation consistent with~\eqref{signal_transmit_multi}, characterized by a superposition of precoded $L \times 1$ signal streams. In contrast, the conventional multi-antenna framework corresponds to \emph{bit-level coded caching} (BCC) (e.g.,~\cite{Ngo,Shariatpanahi,Tolli_TWC}), where subfiles intended for a given user set are combined through bitwise XOR operations. This motivates a comparison between the PHY delivery performance of VCC and BCC. However, as noted in the Introduction, most existing multi-antenna coded caching schemes---typically bit-level---ignore the practical subpacketization constraint, which limits their achievable DoF to approximately $\Lambda\gamma + Q$. Under finite file sizes,~\cite{9163148} revisits the canonical multi-server coded caching framework~\cite{Shariatpanahi,Tolli_TWC} and proposes a reduced-complexity variant that preserves its main properties. We therefore adopt this \emph{bit-level multi-server variant (MSV)} from~\cite{9163148} as the comparison baseline for VCC, and for simplicity, single-antenna users are considered, as also assumed in~\cite{9163148}. Furthermore, symmetric Rayleigh fading is assumed for all served users.

In the MSV scheme~\cite{9163148}, the transmitter generates \(L\) signal streams, among which \(L-1\) are dedicated to \(L-1\) unicast users and one stream carries an XOR-coded (common) message intended for \emph{another} \(G = \Lambda\gamma + 1\) users. Once the XOR message is decoded, each of these \(G\) users can reconstruct its desired subfile using the cached content. Under the equal power allocation policy, the transmit signal is given by
\begin{align}\label{transmit_MSV}
    {\bf x}_{\rm MSV} = \sqrt{\frac{P_{\rm tot}}{L}} {\bf f}_0 c_0 + \sum_{k=1}^{L-1} \sqrt{\frac{P_{\rm tot}}{L}} {\bf f}_k c_k,
\end{align}
where \(c_0\) represents the common XOR-coded message, \(c_k\) denotes the symbol for the \(k\)-th unicast user, and \({\bf f}_0, \ldots, {\bf f}_{L-1}\) are unit-norm precoding vectors of size \(L\times1\). The vector \({\bf f}_0\) is designed to cancel the interference from the unicast signals at a \emph{specific} multicast user (e.g., the first user), enabling that user to decode the XOR message. The remaining \(G-1\) multicast users exploit their cached content to remove unicast interference before decoding the same XOR message. Consequently, the scheme simultaneously serves \(L-1 + G = L + \Lambda\gamma\) users, which corresponds to its achievable DoF. Additional design details of the MSV framework can be found in~\cite{9163148}.  

Let \({\bf h}_{{\rm uc},k} \in \mathbb{C}^{L\times1}\) (\(k \in [L-1]\)) and \({\bf h}_{{\rm mc},k'} \in \mathbb{C}^{L\times1}\) (\(k' \in [G]\)) denote the channels of unicast and multicast users, respectively. The precoding vectors in~\eqref{transmit_MSV} are designed as  
\begin{align}
   & {\bf f}_0 \in \mathbb{O}\{ {\bf h}_{{\rm uc},1}, {\bf h}_{{\rm uc},2}, \ldots, {\bf h}_{{\rm uc},L-1} \}, \notag\\
   & {\bf f}_k \in \mathbb{O}\{ {\bf h}_{{\rm mc},1}, {\bf h}_{{\rm uc},1}, \ldots, {\bf h}_{{\rm uc},k-1}, {\bf h}_{{\rm uc},k+1}, \ldots, {\bf h}_{{\rm uc},L-1} \}, \notag
\end{align}
which can be implemented using BD precoding.

The transmission rate of the XOR-coded message is  
\begin{align}\label{multicast_rate}
    R_{\rm mc} = G \min_{k' \in [G]} \ln \!\left( 1 + \frac{P_{\rm tot}}{N_0 L} \big| {\bf h}_{{\rm mc},k'} {\bf f}_0 \big|^2 \right),
\end{align}
where the minimum operator ensures successful decoding of the multicast message among all \(G\) target users.  
The transmission rate of the \(k\)-th unicast stream is  
\begin{align}
    R_{{\rm uc},k} = \ln \!\left( 1 + \frac{P_{\rm tot}}{N_0 L} \big| {\bf h}_{{\rm uc},k} {\bf f}_k \big|^2 \right).
\end{align}

Since the spatial multiplexing resources are fully utilized, no additional dimension is available for beamforming toward individual users, leading to  
\begin{align}
    & {\bf h}_{{\rm mc},k'} {\bf f}_0 \sim \mathcal{CN}(0,1),~\forall k' \in [G], \\
    & {\bf h}_{{\rm uc},k} {\bf f}_k \sim \mathcal{CN}(0,1),~\forall k \in [L-1].
\end{align}
The effective total rate of the MSV scheme is of the form 
\begin{align}
    R_{\rm MSV}  = \xi_{L+\Lambda\gamma} \!\left( R_{\rm mc} + \sum_{k=1}^{L-1} R_{{\rm uc},k} \right),
\end{align}
where \(\xi_{L+\Lambda\gamma} \triangleq 1 - \Theta (L+\Lambda\gamma)/T\) represents the CSI overhead factor.
Considering BD precoding in the cacheless baseline, the effective gain of MSV is defined as  
\begin{align}
    \mathcal{G}_{\rm MSV} = 
    \frac{\mathbb{E}\{ R_{\rm MSV} \}}
    {\max_{Q' \in [L]} \mathbb{E}\{ R_{\rm BD}(1,Q') \}},
\end{align}
where the expectation is taken over channel realizations, and \(R_{\rm BD}(1,Q')\) denotes the effective rate of BD precoding analyzed in Section~\ref{multi_ana_sec} for single-antenna receivers under symmetric channel conditions (\(\beta_k = 1,~\forall k \in [K]\)).  
We note that the original MSV scheme in \cite{9163148} does not adjust the number of unicast streams in~\eqref{transmit_MSV} to optimize the multiplexing–beamforming tradeoff, as in VCC. To fully exploit the transmission capability of the MSV scheme under finite SNR, we further adjust the number of unicast streams \(Q_{\rm uc} \le L-1\) in~\eqref{transmit_MSV} to achieve an optimized multiplexing--beamforming tradeoff, referred to as the \emph{modified MSV scheme}, where the optimal BD design follows the same procedures as we considered in VCC (cf. Section~\ref{BD_MRC_sub}). The corresponding effective gain is defined as  
\begin{align}
\mathcal{G}_{\rm MSV}' \triangleq  
\frac{ \max_{Q_{\rm uc} \in [L-1]} \mathbb{E}\{ R_{\rm MSV} (Q_{\rm uc}) \} }
     { \max_{Q' \in [L]} \mathbb{E}\{ R_{\rm BD}(1,Q') \} }.
\end{align}

As \(P_{\rm tot} \to \infty\), the optimal \(Q'\) in the cacheless BD baseline equals \(L\), leading to the high-SNR limit for MSV as
\begin{align}
    \mathcal{G}_{\rm MSV} \longrightarrow \frac{L + \Lambda\gamma}{L},
\end{align}
which approaches unity as \(L \to \infty\). Even for moderate values, such as \(L=32\) and \(G=6\), the high-SNR limit gain remains marginal, which equals \(\frac{32+5}{32}=1.1563\) as shown in Fig.~\ref{MSV_VCC_fig}.

Fig.~\ref{MSV_VCC_fig} compares the effective gains of MSV and BD-based VCC against the cacheless BD baseline. Both MSV and VCC exhibit increasing gains with SNR, yet VCC shows a notably sharper rise, implying a growing advantage over the cacheless baseline. In contrast, the effective gain of original MSV in \cite{9163148} remains below 1 at finite SNRs, indicating that the cacheless baseline consistently outperforms original MSV. The main reasons are threefold:  
$i)$ the DoF limitation of MSV, equal to \(L+\Lambda\gamma\), providing only marginal gain for large \(L\) at finite SNRs;  
$ii)$ the inherent worst-user bottleneck in multicasting, where the XOR rate is constrained by the weakest user (cf. \eqref{multicast_rate}); and  
$iii)$ the additional CSI acquisition required for approximately \(\Lambda\gamma\) more users compared with the cacheless case, which further reduces the effective throughput. As illustrated in Fig.~\ref{MSV_VCC_fig}, while the modified MSV scheme achieves the optimal multiplexing–beamforming tradeoff at finite SNRs, its effective gain, though clearly improved, remains close to one, implying limited advantage over the cacheless baseline. This result corroborates the claim made in the Introduction that the caching gains achieved by bit-level multi-antenna coded caching are overshadowed by the conventional multiplexing gains.

     \begin{figure}[!t]
             \vspace{-0.2cm}
             \centering
             \includegraphics[width= 3.5 in]{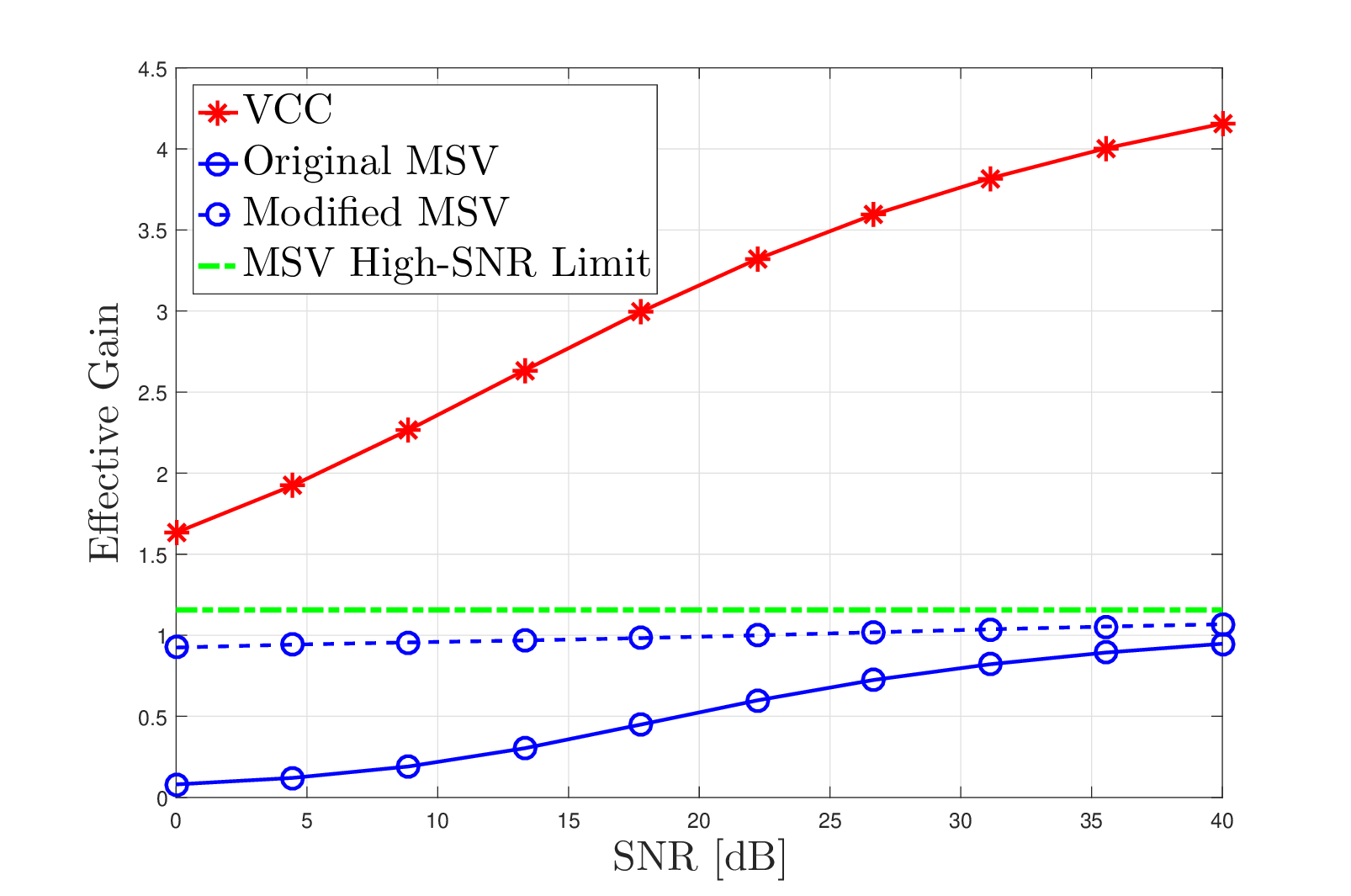}\vspace{-0.2cm}
             \captionsetup{font={footnotesize}}
		\caption{Effective gain versus ${\rm SNR} \triangleq P_{\rm tot}/N_0$ for $L=32$ and $G=6$ over symmetric Rayleigh fading channels}\label{MSV_VCC_fig}
        \end{figure}

\subsection{VCC With Imperfect CSIT and CSIR}
In this subsection, we present two parts. The first part considers the case with imperfect CSIT while assuming perfect CSI at the receiver (CSIR). The second part extends the study to include both imperfect CSIT and CSIR.

\subsubsection{Imperfect CSIT}
We now turn to the case of imperfect CSIT --- a common scenario in practice due to pilot contamination, estimation errors, and feedback delays. While accurate CSI typically enhances spectral efficiency, Fig.~\ref{CIST_fig1} illustrates that VCC can, in some cases, achieve even higher gains under imperfect CSIT than in the perfect CSIT setting. This is because inter-group interference can be mitigated using cached content, independent of CSIT quality --- a sharp contrast to conventional cacheless MU-MIMO, which relies heavily on accurate CSIT for interference management. Throughout, we assume perfect CSIR,\footnote{CSIR is typically more accurate than CSIT, as it is estimated directly from pilots at each receiver. In contrast, CSIT acquisition --- via feedback in FDD or uplink pilots in TDD --- is more error-prone due to delay, quantization, interference, and user coordination. This disparity further amplifies the relative gains of VCC.} reflecting its more robust acquisition in practical systems. 
To simplify the analysis, we focus on single-antenna receivers under ZF precoding, aiming to assess whether the reported gains persist despite CSIT imperfections.

With single-antenna receivers, the channel between the BS and user \( {\rm U}_{\psi,k} \) (for any \( \psi \in \Psi \) and \( k \in [Q] \)) is now represented as a vector, denoted by \( {\bf h}_{\psi,k} \in \mathbb{C}^{L} \).
Let $\hat {\bf h}_{\psi,k}$ denote the CSI estimate at the BS. Considering the linear MMSE estimation \cite{Hien_TCOM}, we can express $ {\bf h}_{\psi,k}$ as
$
 {\bf h}_{\psi,k} = \hat {\bf h}_{\psi,k} + \tilde {\bf h}_{\psi,k},
$
where $\tilde {\bf h}_{\psi,k}$ represents the estimation errors, whose elements follow the i.i.d. Gaussian distribution with zero-mean and variance $\tilde \beta_{\psi,k}$. Naturally, $\tilde {\bf h}_{\psi,k}$ is independent of the MMSE estimate $\hat {\bf h}_{\psi,k}$.

The BS treats $\hat {\bf h}_{\psi,k}$ as the real channel vector and then follows the same paradigm as in \eqref{ZF_Design_Multi} to perform ZF precoding. Let $\hat {\bf V}_\psi$ denote the ZF matrix based on the MMSE estimates $\{  \hat {\bf h}_{\psi,k}: {k \in [Q]} \}$. Following the channel training method in \cite{caire2010multiuser}, after the BS finishes the CSI estimation, the BS issues an additional downlink training phase, where it sends several orthogonal training symbols along each of the precoding vectors to user ${\rm U}_{\psi,k}$ for any $\psi \in \Psi$ and $k \in [Q]$ for coherent detection. By doing so, ${\rm U}_{\psi,k}$ can obtain the coupling (composite) coefficients $\{ \sqrt{P_{\phi,k'}} {\bf h}_{\psi,k}^T \hat {\bf v}_{\phi,k'}: \phi \in \Psi, k' \in [Q] \}$, where ${\bf v}_{\phi,k'}$ denotes the $k'$-th column of $\hat {\bf V}_\phi$. As the downlink training is assumed to be perfect, ${\rm U}_{\psi,k}$ detects these coupling coefficients perfectly (perfect composite CSIR), which enables it to completely cancel the inter-group interference by applying its cached content, as described in Section~\ref{sys_sec}. After removing the inter-group interference, the remaining signal at ${\rm U}_{\psi,k}$ is
\begin{align}\label{CSIT_signal}
    y_{\psi,k}'
     =&\sqrt{P_{\psi,k}} {\bf h}_{\psi,k}^T \hat {\bf v}_{\psi,k}  s_{\psi,k} + z_{\psi,k} \notag\\
     &+ \sum\nolimits_{k' \in [Q] \setminus k  } \sqrt{P_{\psi,k'}} {\bf h}_{\psi,k}^T \hat {\bf v}_{\psi,k'} s_{\psi,k'} ,
\end{align}
yielding a SINR when decoding $s_{\psi,k}$ of the form
\begin{align}\label{SINR_ZF_iCSIT}
    \text{SINR}_{\psi,k}^{\rm ZF} = \frac{ P_{\psi,k} | {\bf h}_{\psi,k}^T \hat {\bf v}_{\psi,k} |^2 }{N_0 + \sum\nolimits_{k' \in [Q] \setminus k  } P_{\psi,k'} | {\bf h}_{\psi,k}^T \hat {\bf v}_{\psi,k'} |^2},
\end{align}
and an effective sum rate, under imperfect CSIT, of the form
\begin{align}\label{rate_ZF_csit}
    R_{\rm sum}^{\rm ZF} (G,Q) \!=\! \left(\! 1 \!-\! \frac{G Q\Theta}{T} \!\right) \!\sum_{\psi \in \Psi} \sum_{k\in [Q]} \! \ln\! \left(1 \!+\! \text{SINR}_{\psi,k}^{\rm ZF} \right),
\end{align}
where $\text{SINR}_{\psi,k}^{\rm ZF}$ is given by \eqref{SINR_ZF_iCSIT}. The effective sum rate of the corresponding cacheless case is $R_{\rm sum}^{\rm ZF} (1,Q') $, where $Q'$ is the operational multiplexing gain under conventional ZF precoding. As shown in \cite{EURECOM+7083}, power allocation for MMF optimization marginally improves the effective gain, despite the notable increase in the effective sum-rate. For simplicity, we consider equal power allocation among the users and then consider the effective gain as defined below
\begin{align}\label{ZF_gain_imperfect}
    \mathcal{G}_{\rm ZF}' \triangleq \frac{\max_{Q \in [L]} \mathbb{E}\{R_{\rm sum}^{\rm ZF} (G,Q)\} }{\max_{Q' \in [L]} \mathbb{E}\{R_{\rm sum}^{\rm ZF} (1,Q')\}},
\end{align}
where the expectation is averaged over channel states.

  \begin{figure}[!t]
             \centering
              \vspace{-0.3cm}
             \includegraphics[width= 3.7 in]{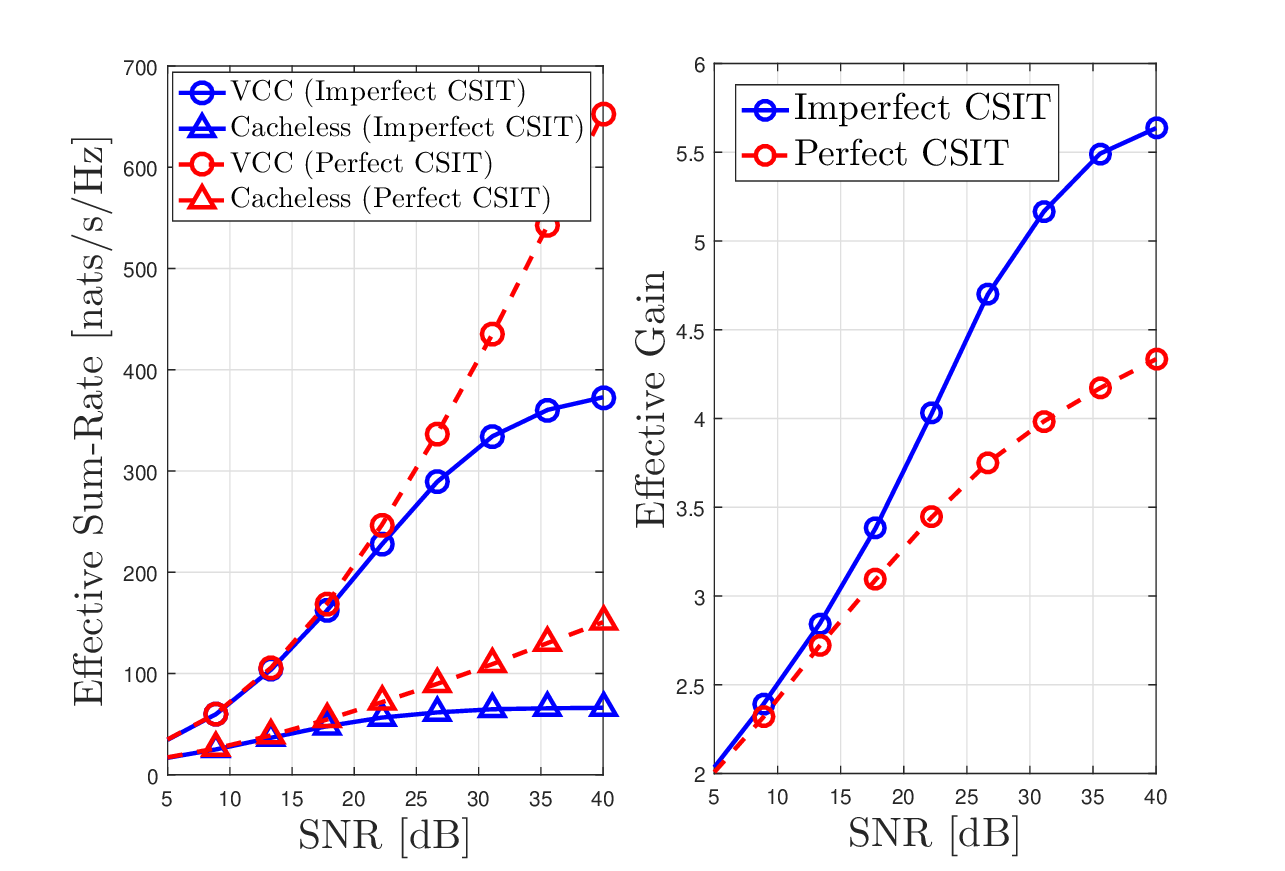}\vspace{-0.2cm}
             \captionsetup{font={footnotesize}} 
		\caption{Effective sum-rate and effective gain versus SNR for $L=16$, $M=1$ and $G=6$ under ZF precoding, where the users are statistically symmetric. Specifically, $\beta_{\psi,k}=1$, and $\tilde \beta_{\psi,k}= \tilde \beta =0.01$ for any served user in the case of imperfect CSIT.  $\text{SNR}$ denotes the average transmit SNR $\triangleq P_{\rm tot}/N_0$.}\label{CIST_fig1}
        \end{figure}



Fig. \ref{CIST_fig1} is obtained by selecting the optimal multiplexing gain for each SNR value, while the effective gain is computed according to \eqref{ZF_gain_imperfect}. As expected, under imperfect CSIT, VCC exhibits lower effective sum-rate than the perfect CSIT case in the medium-to-high SNR regime, since the system is interference-limited due to CSIT estimation errors. Surprisingly, however, the effective gain achieved by VCC under imperfect CSIT surpasses that of the perfect CSIT case in the medium-to-high SNR region. This phenomenon is primarily attributed to the fact that, in VCC, inter-group interference is completely canceled without relying on CSIT, significantly mitigating the impact of CSIT errors on system performance. In contrast, for the cacheless MU-MIMO system, the elimination of inter-user interference heavily depends on the quality of the CSIT, making its performance more sensitive to CSIT imperfections.



\subsubsection{Imperfect CSIT and CSIR}
In the following, we examine the impact of imperfect CSIT and CSIR on VCC delivery. Let \(A_{\psi,k}^{\phi,\vartheta} \triangleq {\bf h}_{\psi,k}^T \hat{\bf v}_{\phi,\vartheta}\) denote the coupling coefficient for all \(\psi,\phi \in \Psi\) and \(k,\vartheta \in [Q]\). Under MMSE estimation at the receiver,
\(A_{\psi,k}^{\phi,\vartheta} = \hat A_{\psi,k}^{\phi,\vartheta} + \tilde A_{\psi,k}^{\phi,\vartheta}\),
where \(\hat A_{\psi,k}^{\phi,\vartheta}\) is the estimate and \(\tilde A_{\psi,k}^{\phi,\vartheta} \sim \mathcal{CN}(0,\tilde\beta')\) is the error. For simplicity, we consider a symmetric case with identical error variance \(\tilde\beta'\) for all coefficients.

With imperfect CSIR, cache-aided interference cancellation is \emph{not perfect} and leaves \emph{residual interference}. The signal after imperfect cache-aided interference cancellation is
\begin{align}
    y_{\psi,k}'' \;=\; y_{\psi,k}' \;+\; \sum_{\phi \in \Psi \setminus \psi} \sum_{\vartheta \in [Q]}
    \sqrt{P_{\phi,\vartheta}}\, \tilde A_{\psi,k}^{\phi,\vartheta}\, s_{\phi,\vartheta},
\end{align}
where \(y_{\psi,k}'\) is the signal for the imperfect-CSIT-only case (see \eqref{CSIT_signal}). Assuming equal power allocation and ZF precoding, the resulting SINR is
\begin{align}\label{SINR_csir}
    \mathrm{SINR}_{\psi,k}^{\rm ZF}
    \;=\;
    \frac{ \frac{P_{\rm tot}}{GQ}\, \big| \hat A_{\psi,k}^{\phi,\vartheta} \big|^2 + \frac{P_{\rm tot}}{GQ}\, \tilde\beta'}
         { N_0 + I },
\end{align}
where the interference term \(I\) accounts for residuals due to \emph{both} imperfect CSIT and CSIR:
\begin{align} 
I &\triangleq \frac{P_{\rm tot}}{GQ} \sum_{k' \in [Q] \setminus k } \! \! \mathbb{E}\big\{ | \tilde A_{\psi,k}^{\psi,k'} |^2 \big\} + \frac{P_{\rm tot}}{GQ} \sum_{\phi \in \Psi \setminus \psi} \sum_{\vartheta \in [Q]} \mathbb{E}\big\{ | \tilde A_{\psi,k}^{\phi,\vartheta} |^2 \big\} \notag\\ & = \frac{P_{\rm tot}}{GQ} \tilde \beta' (GQ-1). 
\end{align}
Substituting \eqref{SINR_csir} into \eqref{rate_ZF_csit} and \eqref{ZF_gain_imperfect} yields the effective gain under simultaneous CSIT and CSIR imperfections.

Fig.~\ref{CISR_fig} illustrates the effective gain of VCC under both imperfect CSIT and CSIR conditions with varying CSIR estimation accuracy. 
At low SNR, the effective gains under different CSIR qualities are close to that of the perfect-CSI case, consistent with the observation in Fig.~\ref{CIST_fig1}. 
As the SNR increases, however, residual inter-user interference caused by imperfect CSIT and CSIR becomes more significant, leading to a gradual reduction in the effective gain. 
In practical systems, the CSIR estimation accuracy is usually much higher than that of CSIT, i.e., $\tilde{\beta}' \ll \tilde{\beta}$, where $\tilde \beta$ denotes the variance of CSIT errors.
Therefore, the case of $\tilde{\beta}' = \tilde{\beta}$ (green curve) represents a worst-case scenario. 
Even in this worst case, VCC still achieves \emph{more than triple} the spectral efficiency of the cacheless baseline when SNR exceeds 20~dB. 
The numerical comparisons in Fig.~\ref{CISR_fig} confirm that VCC maintains a substantial multiplicative spectral-efficiency gain over cacheless MU-MIMO systems, even in the presence of simultaneous CSIT and CSIR imperfections.

  \begin{figure}[!t]
             \centering
              \vspace{-0.3cm}
             \includegraphics[width= 3.5 in]{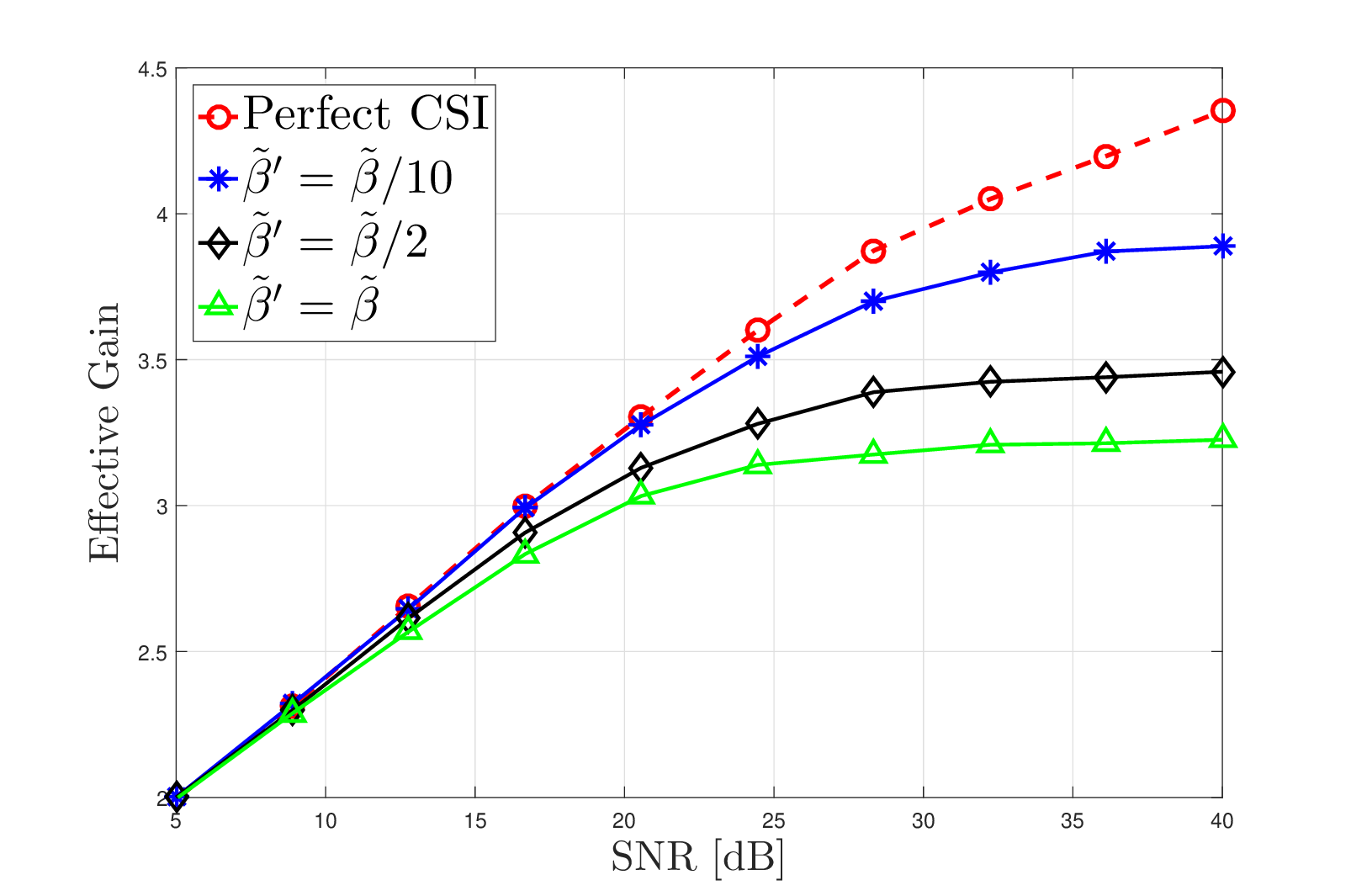}\vspace{-0.2cm}
             \captionsetup{font={footnotesize}} 
		\caption{Effective gain versus SNR under the same parameters as Fig.~\ref{CIST_fig1}, with additional imperfect CSIR consideration. }\label{CISR_fig}
        \end{figure}

\section{Conclusions and Future Work}\label{conclude_sec}
We investigated VCC under realistic conditions, including variable pathloss, multi-antenna receivers, MMF, and practical precoders and combiners. We derived analytical expressions for the effective sum-rate and gain under BD-MRC and ZF schemes, along with several closed-form results under simplifying assumptions. Numerical evaluations confirmed the tightness of these expressions and demonstrated the significant gains of VCC in realistic scenarios. These gains are particularly pronounced in Micro-cell environments, where we observed spectral efficiency improvements exceeding 300\% over conventional (cacheless) MU-MIMO systems. Moreover, Fig.~\ref{MSV_VCC_fig} compared the effective gains of VCC with the multi-server scheme and its variants, demonstrating its performance advantage over conventional multi-antenna coded caching approaches. Importantly, all reported gains in Figs.~\ref{Opt_Gain_1_fig}--\ref{CISR_fig} represent improvements over optimized traditional MU-MIMO baselines, including any additional CSI acquisition costs.

In addition to the obvious multiplexing dimension for interference cancellation enabled by additional memory, VCC enhances beamforming gains which are especially valuable in low-SNR regimes. The technique also remains effective even under constraints such as a limited number of active signal streams (e.g., $GQM \le 32$ in Fig~\ref{Gain_L128_2}). Beyond its theoretical advantages, VCC offers strong practical value. With VoD representing over 70\% of network traffic and cacheable content comprising nearly 90\% of data consumption, such caching-based strategies present a scalable solution to enhance spectral efficiency and reduce congestion in high-demand, content-heavy networks.

It is worth noting that the simulation results presented are somewhat conservative. In particular, the performance gains of the proposed cache-aided MU-MIMO scheme become significantly more pronounced when imperfect CSIT is taken into account. This is a key strength of our approach: its inherent robustness and improved performance under CSIT imperfections, by virtue of the fact that VCC shifts much of the interference management to the receivers, inherently reducing per-user CSIT needs. A preliminary indication of these enhanced gains was illustrated in Fig.~\ref{CIST_fig1}. Future work may focus on extending the current framework to explicitly characterize the impact of imperfect CSIT and CSIR. In particular, it would be valuable to analyze the performance under different CSI quality levels and investigate the trade-offs among CSI accuracy, feedback overhead, and system throughput. When VCC is applied in large-scale systems, several scalability challenges arise as the numbers of users and antennas increase. These include higher CSI overhead, power-efficiency concerns, and increased computational complexity; see \cite{EURECOM+7083} for further discussion. In addition, user asynchronization---where users reveal their requests at different time instants---can create asymmetric signal structures during the delivery phase, leading to different dimensions of ${\bf s}_\psi$ for various $\psi \in \Psi$ in \eqref{signal_transmit_multi}. A more comprehensive study of these issues, which forms a substantial research topic on its own, is deferred to future work.


\appendices
\renewcommand{\thesectiondis}[2]{\Roman{section}:}

\section{Proof of Lemma \ref{BD_MRC_design_lem}}\label{Proof_BD_MRC_design_lem}    
We will find ${\bf m}_{\psi,k,q}$ by decomposing a much smaller sized matrix $ {\bf H}_{\psi,k}^T {\bf T}_{\psi,-k}^2  {\bf H}_{\psi,k}^*$ $= {\bf H}_{\psi,k}^T {\bf T}_{\psi,-k}  {\bf H}_{\psi,k}^*$. 
With $\lambda_{\psi,k,q}$ and ${\bf t}_{\psi,k,q}$ in place, we have
\begin{align}\label{HTH_lambdat_eq}
    {\bf H}_{\psi,k}^T {\bf T}_{\psi,-k}  {\bf H}_{\psi,k}^* {\bf t}_{\psi,k,q} = \lambda_{\psi,k,q} {\bf t}_{\psi,k,q}.  
\end{align}
   Multiplying ${\bf T}_{\psi,-k} {\bf H}_{\psi,k}^*$ on both sides of \eqref{HTH_lambdat_eq} and using that ${\bf T}_{\psi,-k} = {\bf T}_{\psi,-k}^2$ yields 
    \begin{align}
    &{\bf T}_{\psi,-k}  {\bf H}_{\psi,k}^* {\bf H}_{\psi,k}^T {\bf T}_{\psi,-k} \big( {\bf T}_{\psi,-k}  {\bf H}_{\psi,k}^* {\bf t}_{\psi,k,q} \big)  \notag\\
    &\hspace{3cm}= \lambda_{\psi,k,q} \big( {\bf T}_{\psi,-k}  {\bf H}_{\psi,k}^* {\bf t}_{\psi,k,q}  \big), \label{eigen_vec_Omega} 
    \end{align}
which shows that both ${\bf T}_{\psi,-k}  {\bf H}_{\psi,k}^* {\bf t}_{\psi,k,q} $ and $ {\bf m}_{\psi,k,q}$ are the eigenvectors associated to the $q$-th largest eigenvalue $\lambda_{\psi,k,q}$ of $ {\bf T}_{\psi,-k} {\bf H}_{\psi,k}^* {\bf H}_{\psi,k}^T {\bf T}_{\psi,-k} $. 
Furthermore, as ${\bf t}_{\psi,k,q}$ and ${\bf t}_{\psi,k,p}$ are orthogonal to each other for $p \neq q$, we have that
\begin{align}
   & \Big( {\bf t}_{\psi,k,p}^H  {\bf H}_{\psi,k}^T  {\bf T}_{\psi,-k} \Big) \Big(  {\bf T}_{\psi,-k}  {\bf H}_{\psi,k}^* {\bf t}_{\psi,k,q} \Big) \notag\\
   &= {\bf t}_{\psi,k,p}^H  {\bf H}_{\psi,k}^T  {\bf T}_{\psi,-k}  {\bf H}_{\psi,k}^* {\bf t}_{\psi,k,q} 
   = \lambda_{\psi,k,q} {\bf t}_{\psi,k,p}^H {\bf t}_{\psi,k,q} =0,\notag
\end{align}
which indicates that the eigenvectors ${\bf T}_{\psi,-k}  {\bf H}_{\psi,k}^* {\bf t}_{\psi,k,q}$ and ${\bf T}_{\psi,-k}  {\bf H}_{\psi,k}^* {\bf t}_{\psi,k,p}$ associated with different non-zero eigenvalues of $ {\bf T}_{\psi,-k} {\bf H}_{\psi,k}^* {\bf H}_{\psi,k}^T {\bf T}_{\psi,-k} $ are also orthogonal to each other. 
Therefore, the  precoding vector ${\bf v}_{\psi,k,q}$ for ${\rm U}_{\psi,k}$ in \eqref{V_mat_exp} can be rewritten as in \eqref{v_psikq_design_exp}.

Given the precoding vector ${\bf v}_{\psi,k,q}$ in \eqref{v_psikq_design_exp} and given an MRC receiver, the $\text{SINR}_{\psi,k,q}^{\text{BD-MRC}}$ in \eqref{SINR_BDMRC_def2} is derived as in \eqref{SINR_BD_MRC_eq_proof}, shown at the bottom of this page, where step $(a)$ follows from \eqref{eigen_vec_Omega}.
\begin{figure*}[b]
\vspace{-0.3cm}
\hrulefill
\begin{align}\label{SINR_BD_MRC_eq_proof}
\resizebox{0.94\hsize}{!}{$
    \text{SINR}_{\psi,k,q}^{\text{BD-MRC}} 
   \! =\! \frac{ ( {\bf t}_{\psi,k,q}^H {\bf H}_{\psi,k}^T  {\bf T}_{\psi,-k} ) {\bf T}_{\psi,-k} {\bf H}_{\psi,k}^* {\bf H}_{\psi,k}^T {\bf T}_{\psi,-k} ( {\bf T}_{\psi,-k} {\bf H}_{\psi,k}^* {\bf t}_{\psi,k,q} )}{||{\bf T}_{\psi,-k}  {\bf H}_{\psi,k}^* {\bf t}_{\psi,k,q}||^2   \frac{N_0}{P_{\psi,k,q}} }
   \overset{(a)}{=} \frac{ \lambda_{\psi,k} ( {\bf t}_{\psi,k,q}^H {\bf H}_{\psi,k}^T  {\bf T}_{\psi,-k} )  ( {\bf T}_{\psi,-k} {\bf H}_{\psi,k}^* {\bf t}_{\psi,k,q} )}{||{\bf T}_{\psi,-k}  {\bf H}_{\psi,k}^* {\bf t}_{\psi,k,q}||^2   \frac{N_0}{P_{\psi,k,q}} }
    $}
\end{align}
\end{figure*}
and which directly leads to \eqref{SINR_BD_MRC_lem}.

\section{Proof of Lemma \ref{BD_Lem1}}\label{Proof_Lem1}
For $L \to \infty$ and finite $M_{\psi,k}$, we can use the Trace Lemma (cf. \cite{Debbah}) to derive that
\begin{align}
    &  \frac{1}{L} \frac{{\bf h}_{\psi,k,\ell}^T }{\sqrt{\beta_{\psi,k}} }  {\bf T}_{\psi,-k} \frac{{\bf h}_{\psi,k,\ell}^*}{\sqrt{\beta_{\psi,k}} }	 \notag\\
    &\hspace{1cm}\stackrel{a.s.}{\longrightarrow} \frac{1}{L}  {\rm Tr}\Big\{ {\bf T}_{\psi,-k} \Big\} \simeq 1 - \frac{M_\psi - M_{\psi,k} }{L} ,  \\
    & \frac{1}{L}  {\bf h}_{\psi,k,\ell}^T  {\bf T}_{\psi,-k} {\bf h}_{\psi,k,\ell'}^*  	\stackrel{a.s.}{\longrightarrow} 0, \text{ for } \ell' \neq \ell,
\end{align}
where ${\bf h}_{\psi,k,\ell}$ denotes the $\ell$-th column of ${\bf H}_{\psi,k}$, and where $\stackrel{a.s.}{\longrightarrow}$ denotes the almost sure convergence.  In the above, we also consider that ${\rm Tr}\big\{ {\bf T}_{\psi,-k} \big\} = {\rm Rank}\big\{ {\bf T}_{\psi,-k} \big\}$ because ${\bf T}_{\psi,-k}$ is a projection matrix.  Therefore, we can now derive that
\begin{align}
    &\frac{1}{L} {\bf H}_{\psi,k}^T {\bf T}_{\psi,-k} {\bf H}_{\psi,k}^*  \simeq \beta_{\psi,k}  \Big( 1 \!-\! \frac{M_\psi - M_{\psi,k} }{L}  \Big) {\bf I}_{M_{\psi,k}}, 
\end{align}
which reveals that all eigenvalues of $\frac{1}{L} {\bf H}_{\psi,k}^T {\bf T}_{\psi,-k} {\bf H}_{\psi,k}^*$ become identical as $L \to \infty$. Therefore, we have the following asymptotic result
\begin{align}\label{lambda_approx_Massive}
    \lambda_{\psi,k,q} \simeq {\beta_{\psi,k}}  \left( L - M_\psi + M_{\psi,k}  \right).
\end{align}
Then, the effective rate in \eqref{Rate_Multiplex_Single} has the asymptotic behavior
\begin{align}
     R_{\psi,k} \simeq \xi_{G,Q} \sum\limits_{q=1}^{J_{\psi,k}}  \ln \bigg(\!  1+ \frac{P_{\psi,k,q}}{N_0}  {\beta_{\psi,k}}  \left( L - M_\psi + M_{\psi,k}  \right) \!\bigg).
\end{align}
According to the properties of the water-filling algorithm, we know that equal-power allocation among $\{s_{\psi,k,q}: q \in [J_{\psi,k}]\}$ is optimal, which in turn tells us that $R_{\psi,k}^\star(P_{\psi,k}) \simeq \mathring{R}_{\psi,k}^\star$ where $ \mathring{R}_{\psi,k}^\star$ is given by
\begin{align}
     \mathring{R}_{\psi,k}^\star \triangleq& \xi_{G,Q} J_{\psi,k}  \ln\! \bigg( \! 1+ \frac{P_{\psi,k}}{N_0 J_{\psi,k}} {\beta_{\psi,k}}  \left( L - M_\psi + M_{\psi,k}  \right) \bigg).
\end{align} 
At this point, we use the asymptotic equivalence $\mathring{R}_{\psi,k}^\star$ to substitute the objective function of the MMF optimization in \eqref{BD_MRC_Multiplex_Opt}, given by
\begin{align}\label{BD_MRC_Multiplex_Opt_LargeL}
\mathcal{S}_3 \begin{cases}
    &\max\nolimits_{\mathcal{P}_\Psi} \min\nolimits_{\psi \in \Psi} \min\nolimits_{k \in [Q]} \ \  \mathring{R}_{\psi,k}^\star \\
    & \text{s. t. } P_t = \sum\nolimits_{\psi \in \Psi} \sum\nolimits_{k \in [Q]}  P_{\psi,k} = P_{\rm tot}.
\end{cases}    
\end{align}
Finally, as all users have the same effective rate under optimal power allocation for \eqref{BD_MRC_Multiplex_Opt_LargeL}, we can use the total power constraint to derive \eqref{R_BD_MRC_LargeL_eq}--\eqref{R_BD_MRC_LargeL_M_eq} respectively.

\section{Proof of Proposition \ref{Prop_ZF_singe_rate_multi}}\label{Proof_Prop_ZF_single_rate}
For any $\psi \in \Psi$, $k \in [Q]$ and $q \in [M_{\psi,k}]$, let ${\bf e}_{\psi,k,q} \in \mathbb{C}^{M_{\psi} \times 1}$ denote a vector with all zero elements except  the $k(q)$-th element equalling 1, where  $k(q) \triangleq q + \sum_{k'=1}^{k-1} M_{\psi,k'}$.  The precoding vector for symbol $s_{\psi,k,q}$ under ZF precoding (cf. \eqref{signal_transmit_multi} and \eqref{ZF_Design_Multi}) can be written as
    \begin{align}
    {\bf v}_{\psi,k,q} =  {{\bf H}}^*_{\psi} \big( {{\bf H}_{\psi}^T} {{\bf H}_{\psi}^*}  \big)^{-1}  {\bf e}_{\psi,k,q} \sqrt{\big(\big[ \big( {\bf H}_{\psi}^T {\bf H}_{\psi}^* \big)^{-1} \big]_{k(q),k(q)} \big)^{-1}}.
    \end{align}
   After removing the inter-group interference and considering that ${\bf H}_{\psi}^T  {{\bf H}}^*_{\psi} \big( {{\bf H}_{\psi}^T} {{\bf H}_{\psi}^*}  \big)^{-1} = {\bf I}_{M_\psi}$, we  simplify the received signal for decoding $s_{\psi,k,q}$  in \eqref{receive_signal_multi}  under ZF precoding (cf.~\eqref{ZF_Design_Multi}) as
    \begin{align}
         y_{\psi,k,q}' \!=\! \sqrt{\big(\big[ \big( {\bf H}_{\psi}^T {\bf H}_{\psi}^* \big)^{-1} \big]_{k(q),k(q)} \big)^{-1}} \sqrt{{P}_{\psi,k,q}} {\bf s}_{\psi,k,q}  \!+ {z}_{\psi,k,q}'.   
    \end{align}
    Then, the effective average rate at ${\rm U}_{\psi,k}$ is of the form
    \begin{align}
        &\bar R_{\psi,k}^{\text{ZF}} \!=\! \xi_{G,Q}  \mathbb{E} \Bigg\{ \!\sum\limits_{q=1}^{M_{\psi,k}}  \ln\! \Bigg( 1 \!+\! \frac{P_{\psi,k,q} / N_0 }{ \big[ \big( {\bf H}_{\psi}^T {\bf H}_{\psi}^* \big)^{-1} \big]_{k(q),k(q)}} \Bigg) \!\Bigg\} \label{R_ZF_single_Excet}\\
        &\hspace{0.1cm}\overset{(a)}{\ge}   \xi_{G,Q} \sum\limits_{q=1}^{M_{\psi,k}} \ln \left( 1 + \frac{P_{\psi,k,q} / N_0}{ \mathbb{E} \Big\{  \big[ \big( {\bf H}_{\psi}^T {\bf H}_{\psi}^* \big)^{-1} \big]_{k(q),k(q)}  \Big\}}  \right),
    \end{align}
    where $(a)$ follows from using Jensen's inequality on the convex function $\ln(1+x^{-1})$. Considering that $\mathbb{E} \left\{  \big[ \big( {\bf H}_{\psi}^T {\bf H}_{\psi}^* \big)^{-1} \big]_{k(q),k(q)}  \right\} = \frac{1}{\beta_{\psi,k}(L-M_\psi)}$ (cf. \cite{Wong_Netw}), we obtain the lower-bound in  \eqref{lower_ZF_multi}.
    
    To obtain the upper-bound of $\bar R_{\psi,k}^{\text{ZF}}$, we re-use Jensen's inequality for the concave function $\ln(1+x)$ in \eqref{R_ZF_single_Excet}, and obtain 
    \begin{align}
        &\bar R_{\psi,k}^{\text{ZF}}  \notag\\
        &\le \xi_{G,Q}\! \sum\limits_{q=1}^{M_{\psi,k}}   \ln\! \Bigg( 1 + \frac{P_{\psi,k,q}}{N_0} \mathbb{E} \Bigg\{ \frac{1}{ \big[ \big( {\bf H}_{\psi}^T {\bf H}_{\psi}^* \big)^{-1} \big]_{k(q),k(q)}} \Bigg\} \Bigg), 
    \end{align}
    which induces the upper-bound in \eqref{upper_ZF_multi} by considering that $\mathbb{E}\big\{ \big( { \big[ \big( {\bf H}_{\psi}^T {\bf H}_{\psi}^* \big)^{-1} \big]_{k(q),k(q)}} \big)^{-1}
    \big\} = \beta_{\psi,k}(L-M_\psi+1)$ (cf. \cite{Wong_Netw}).

	\bibliographystyle{IEEEtran}				
	\bibliography{IEEEabrv,aBiblio}			

\end{document}